\newcommand{\R}{\mathbb{R}}
\newcommand{\N}{\mathbb{N}}
\newcommand{\EW}{\mathbb{E}}
\newcommand{\ind}[1]{\mathds{\large 1}_{#1}}
\newcommand{\measure}{\Delta_s}
\newcommand{\measurei}[1]{\Delta_s^{[#1]}}
\newcommand{\regf}{g_\theta}
\newcommand{\regF}[1]{g_{#1}}
\newcommand{\standB}[1]{\mathbf{e}^{(#1)}}
\newcommand{\standV}[1]{\mathbf{e}_{#1}}
\newcommand{\regfspace}{\mathcal{M}}
\newcommand{\eff}{s}
\newcommand{\Ireg}{I}
\newcommand{\Mreg}{M}
\newcommand{\Creg}{C}
\newcommand{\RI}{\ensuremath{\mathbb{X}}}
\newcommand{\supp}{\text{supp}}
\newcommand{\regprod}[2]{\boldsymbol{(}#1\boldsymbol{\cdot} #2\boldsymbol{)}}
\newcommand{\regprodCl}[2]{Cl\Big({\boldsymbol{(}#1\boldsymbol{\cdot} #2\boldsymbol{)}}\Big)}
\newcommand{\aiimeasure}{\bar{\mu}_{X_{[\text{MC}]}\vert X^I}}
\newcommand{\efun}{g_{avg}^{[I]}}
\newcommand{\realx}[1]{x_{[#1]}}
\newcommand{\UR}{$100(1-\alpha)$\%-uncertainty region}
\newcommand{\JASA}{0}
\newcommand{\Preprint}{1}
\newcommand\footnoteref[1]{\protected@xdef\@thefnmark{\ref{#1}}\@footnotemark}
\newenvironment{changemargin}[2]{%
  \begin{list}{}{%
    \setlength{\topsep}{0pt}%
    \setlength{\leftmargin}{#1}%
    \setlength{\rightmargin}{#2}%
    \setlength{\listparindent}{\parindent}%
    \setlength{\itemindent}{\parindent}%
    \setlength{\parsep}{\parskip}%
  }%
  \item[]}{\end{list}}
\theoremstyle{plain}
\newtheorem{lemma}{Lemma}[section]
\newtheorem{ex}{Example}[section]
\newtheorem*{ex*}{Example}
\theoremstyle{definition}
\newtheorem{defi}{Definition}
\newtheorem*{defi*}{Definition}
\newtheorem{rems}{Remark}[section]
\newtheorem{Adefi}{Definition}[section]
\newtheorem{Aadefi}{Definition}[subsection]
\theoremstyle{remark}
\newtheorem*{rem}{Remark}
\newtheorem*{n}{Note}
\crefname{defi}{definition}{definitions}
\Crefname{defi}{Definition}{Definitions}
\crefname{subdefi}{definition}{definitions}
\Crefname{subdefi}{Definition}{Definitions}
\crefname{Adefi}{definition}{definitions}
\Crefname{Adefi}{Definition}{Definitions}
\crefname{Aadefi}{definition}{definitions}
\Crefname{Aadefi}{Definition}{Definitions}
\crefname{rems}{remark}{remarks}
\Crefname{rems}{Remark}{Remarks}
\crefname{prop}{proposition}{propositions}
\Crefname{prop}{Proposition}{Propositions}
\crefname{Aprop}{proposition}{propositions}
\Crefname{Aprop}{Proposition}{Propositions}
\crefname{ex}{example}{examples}
\Crefname{ex}{Example}{Examples}
\definecolor{lgray}{gray}{0.99}
\itshape\color{dgray}, 
\title{\huge\textsc{\textbf{A formal framework for generalized reporting methods in parametric settings}}}
\author{
 \large{Hannah Kümpel}\thanks{The Institute for Medical Information Processing, Biometry, and Epidemiology, Ludwig-Maximilians-Universität of Munich}\medskip\\
  \texttt{hannah.kuempel@ibe.med.uni-muenchen.de}\\\medskip
  \and
  \large{Sabine Hoffmann}\thanks{Department of Statistics, Ludwig-Maximilians-Universität of Munich}\medskip\\
  \texttt{sabine.hoffmann@stat.uni-muenchen.de}
}
\date{}
\begin{document}
\maketitle

\renewcommand{\abstractname}{\textsc{\large Abstract}}
\begin{abstract}
\noindent Effect size measures and visualization techniques aimed at maximizing the interpretability and comparability of results from statistical models have long been of great importance and are recently again receiving increased attention in the literature. However, since the methods proposed in this context originate from a wide variety of disciplines and are more often than not practically motivated, they lack a common theoretical framework and many quantities are narrowly or heuristically defined. In this work, we put forward a common mathematical setting for effect size measures and visualization techniques aimed at the results of parametric regression and define a formal framework for the consistent derivation of both existing and new variants of such quantities. Throughout the presented theory, we utilize probability measures to derive weighted means over areas of interest. While we take a Bayesian approach to quantifying uncertainty in order to derive consistent results for every defined quantity, all proposed methods apply to the results of both frequentist and Bayesian inference. We apply selected specifications derived from the proposed framework to data from a clinical trial and a multi-analyst study to illustrate its versatility and relevance.
\end{abstract}

\textit{\bfseries Keywords:} interpretability $\cdot$ marginal effects $\cdot$ predicted change in probability $\cdot$ partial dependence plots $\cdot$ individual expectation

\section{Introduction}
\label{sec:intro}
The interpretation of results of statistical models is notoriously difficult. While the question of interpretability has recently received considerable attention in the machine learning literature, 
it is important to recognize that interpretability is also an important issue when it comes to classical statistical quantities. Despite numerous cautions and clarifications \citep{wasserstein2016asa,Wasserstein2019}, some  misinterpretations of statistical significance are so widespread 
that they fill entire textbooks  \citep{Cassidy2019}, making it quite difficult for researchers to recognize that these interpretations are actually incorrect. Similarly, classical effect estimates such as
odds ratios and hazard ratios 
are staggeringly often misinterpreted (\cite{Sackett1996}, \cite{Greenland1987}).
And even if applied researchers are familiar with the correct definitions, they often prefer to think of these quantities as risk ratios or even as absolute risk differences, implicitly choosing a wrong answer to the right question over a right answer to the wrong question. Meanwhile, for model classes that do provide interpretable effect size estimates, researchers still often face a problem of comparability of these estimates when they try to compare 
the results of different model classes or even the results of the same model class on different data sets.

Recent interest in these issues has led to a plethora of visualization techniques and effect size measures that have been proposed in different disciplines to improve the 
interpretability and comparability of model results \citep{Gelman_AveragePredictiveComparison, Williams2012, Kaufman1996, ALEml, Goldstein2015,  BischlHeumann}. 
However, so far these approaches are lacking a common theoretical framework  and many present a number of unclarities and inconsistencies, especially regarding categorical independent variables. 
\noindent In this work, we define a formal framework that not only unifies this collection of methods in parametric settings but 
also allows for the consistent derivation of definitions for special cases, thereby avoiding discrepancies and facilitating the deduction of further methodological variants.
    After outlining the contributions and 
    setting of the proposed framework, the following will define all therein contained quantities and place a variety of existing methodologies from various disciplines within them in sections \ref{PropQuantSec} and \ref{SpecSec}, respectively. Subsequently, \cref{Datasection} illustrates selected aspects of the proposed methodology on data from a clinical trial \cite{RCT} and a multi-analyst study \cite{SILBERZAHN}. \if1\JASA{All proofs may be found in \cref{Proofs}, which is part of the supplementary material.}\fi \if1\Preprint{All proofs are given in \cref{Proofs} and the code used for the analyses of \cref{Datasection} may be found in \cref{CodeApp}.}\fi
\paragraph{Contributions} The main contribution of this work is the providing of a formal framework for reporting methods aimed at maximizing the interpretability and comparability of model results, in particular, one which allows researchers to customize these methods for the specific research question at hand. In doing so, we identify three axiomatic assumptions regarding the dependence structure of models' independent variables that are useful for clarifying the approach for different specifications, including existing methods.  Notably, we provide a consistent method of deriving point estimates and well-interpretable uncertainty regions for all proposed quantities. When defining a generalized version of marginal effects, we include the additional option to separately quantify main- and interaction- effects. Finally, in the interest of increasing the communicability of results from models with parametric distribution assumptions, we propose a method for visualizing the different expected distributions of the target variable resulting from changes in the values of independent variables.

\subsection{Setting}\label{setting}

Within this work, we restrict ourselves to cases of parametric regression. 
Specifically, we will consider sequences $\{(y_i,\mathbf{x}_i)\}_{i=1,...,n}$ of $n\in \N$ observations, where the vectors $\mathbf{x}_i\in \R^p$, $p\in\N$, contain the observed values of all regressors, dummy coded in the case of categorical inputs, corresponding to an observation $y_i\in \R$ of the target variable. To be precise, we consider each element of the vectors $\mathbf{x}_i$ to be a realization of a variable $X_j$, $1\leq j\leq p$; 
with $X_j$ representing either a metric regressor, i.e. taking values in $\R$, or
 one category of a categorical regressor, i.e. taking values in $\{0,1\}$, with the \emph{reference category} being equated with all other categories taking the value zero. Generally, we denote the \mbox{realization of any such variable $X_j$ by $x_j$.} Thereby, a categorical regressor with $d\in \N$ categories, the first of which being represented by $X_j$, is defined by the vector of binary variables  $
	(X_j,...,X_{j+d-2})\text{, with }(x_j,...,x_{j+d-2})^\top\in\{\mathbf{e}_i,1\leq i\leq d-1\}\cup \mathbf{0},
$
	where $\{\mathbf{e}_i,1\leq i\leq d-1\}$ denotes the standard basis of $\R^{d-1}$ and $\mathbf{0}$ the zero vector, specifically the $(d-1)$-dimensional zero vector in this case. Since we will use such sets repeatedly, we define for any $d\in\N$ the set $\standB{d}:=\{\mathbf{e}_i,1\leq i\leq d\}\cup \mathbf{0}\subseteq\R^{d}$.

\noindent For the purposes of this work, we furthermore separate the regressors into three groups:

\begin{enumerate}[label=(\arabic*)]
	\item The regressor whose effect on the target variable is of interest, referred to as \emph{the regressor of interest}, $X^\Ireg$, which can either be metric
	or a categorical variable with $d_I+1$ categories (or $d_I$ non-reference categories). 
	For consistency of notation, we set $d_I=1$, if $X^\Ireg$ is a metric regressor, and denote by $I$ the set that contains only $X^I$.
	\item The set of (remaining) metric regressors $\Mreg$. Any variable representing an element of $M$ will be denoted by $X^{M}_l$, $1\leq l\leq m$, with $m$ being the number of (remaining) metric regressors.
	\item The set of (remaining) categorical regressors $\Creg$. We denote by $c$ the number of (remaining) categorical regressors and by $d_{C_k}$, $1\leq k\leq c$, the number of non-reference categories of the $k$th (remaining) categorical regressor $X^C_k$.
\end{enumerate}
\begin{rems}\label{Notation}
	In essence, we denote the vector of models' independent variables by $X=(X_1,...,X_p)$, with each entry taking values in $\R$ or $\{0,1\}$. We further denote any value the variable $X_j$, $1\leq j\leq p$, takes by $x_j$ and, in a slight abuse of terminology, refer to both $X_j$ and $x_j$ as \emph{regressor (category)}. Moreover, instead of reordering the vector $X$, we assign each of its elements to one of the sets $I$, $M$, and $C$, effectively mapping subsets of the elements of $X$ to elements of the set $\Ireg\cup\Mreg\cup\Creg$, with $p=d_I+m+\sum_{k=1}^{c}d_{C_k}$. Lastly, in
	a slight abuse of notation, we write \mbox{$\realx{I}$, $\realx{M_l}$, and $\realx{C_j}$} for the values that the regressors mapped to $X^I$, $X^M_l$, $l\in\{1,...,m\}$, and $X^C_k$, $k\in\{1,...,c\}$, respectively, take. For example, if the regressor of interest is a categorical regressor whose non-reference categories are represented by the first three elements of $X$, we write $\realx{I}=(x_1,x_2,x_3)$. \enlargethispage{11pt}
\end{rems}

As previously mentioned, the current theory is restricted to the case of parametric regression, by which we mean a regression setting in which only the parameter $\theta\in \Theta$, with $\Theta\subseteq\R^k$, $k\in\N_{\geq{0}}$, being a finite-dimensional vector space, is unknown. Given a set $S$ endowed with $\sigma$-algebra $\mathcal{S}$, we denote by $\regfspace(S)$ the space of measurable functions from \mbox{$S$ to $\R$}, 
\mbox{$\regfspace(S)\hspace{-3pt}:=\big\{f:S\longrightarrow\R\vert\text{ $f$ is $\big(\mathcal{S},\mathcal{B}(\R)\big)$-measurable}\big\}$,} and consider a function $g_\theta$ to be indexed by $\theta\in\Theta$ in the sense of the following mapping
\[
\regF{}:\Theta\longrightarrow\regfspace(\R^p),\quad\theta\longmapsto \regf.
\]

\noindent Given this, we then assume that each $y_i$ is a realization of a real-valued random variable $Y$ and \begin{equation}\label{regMod}
	\EW[Y\vert X]=\regf(X)
\end{equation}
\noindent where the function $\regf:\R^p\longrightarrow\R$ is at least once partially differentiable w.r.t. each metric 
element of $\realx{\Ireg}$ $\forall \theta\in\Theta$.
\begin{n}
In the following, we will refer to both $\regf$ and $\regf(x_1,...,x_p)$, i.e. the expected value of $Y$ given specific regressor values,  as \textit{expectation function} or  \textit{expectation}, even though particularly $\regf(x_1,...,x_p)$ is commonly referred to as \emph{(model) prediction} in the literature.
\end{n}

\section{Definition of the proposed quantities\label{PropQuantSec}}
This section will provide definitions of, among others, expectation plots and generalized marginal effects. Throughout, the underlying idea is that one may utilize probability measures to average over different functions of the regressor vector $X$, with the choice of probability measures depending on the axiomatic assumption made regarding the dependence structure of the regressors. While those axiomatic assumptions will be introduced in \cref{MAsec}, some additional conditions need to be placed on the probability measures. Specifically, for some function $h:\R^{d}\longrightarrow\R$, $d\in\N_{>0}$, and set $D\subseteq\R^{\tilde{d}}$, $d\geq\tilde{d}\in\N_{>0}$, we consider probability measures $\mu$ that satisfy the following requirements:
\begin{itemize}[leftmargin=1.3cm]
	\item[(M1)] $\mu$ is a probability measure on $\big(\R^{\tilde{d}},\mathcal{B}(\,\R^{\tilde{d}}\,)\big)$.
	\item[(M2)] $\supp(\mu)\subseteq D$, if required as a result of $\mu$ being normalized\footnote{See \cref{normmuDef} in \cref{AppDefs} for a formal definition of \emph{normalized} in this context.} w.r.t. $D$.
	\item[(M3)] $\int_{\R^{\tilde{d}}} \vert h(x)\vert d\mu(x)<\infty,$ if $\tilde{d}=d$, or $\forall x_a\in\R^{d-\tilde{d}}:\int_{\R^{\tilde{d}}} \vert h(x_a,x_b)\vert d\mu(x_b)<\infty$, if $\tilde{d}<d$.\footnote{See \cref{SplitRem} in \cref{AppDefs} for details on the notation $h(x_a,x_b)$.}
\end{itemize}
\begin{n}
While we use the term `averaging' throughout, by choosing $\mu$ as the Dirac measure $\delta_x$ for $x\in\R^{\tilde{d}}$ one of course \emph{fixes} regressor values at the point $x$.
\end{n}
In this work, we propose a shared method of deriving point estimates and credible sets for all introduced quantities, which will be detailed in \cref{UncertaintySection}. To this end, every quantity will be defined as a function of the parameter vector $\theta$ in sections \ref{MAsec} and \ref{gMEsec}.

\subsection{Expectation plots and individualized expectation\label{MAsec}}
We begin with an accessible and already prevalent visualization technique, versions of which are referred to under different names, such as \emph{prediction plots}, \emph{marginal plots}, see e.g. \cite{Cook1997};\cite{ALEml}, and \emph{partial dependence plots}, see e.g. \cite{Friedmann};\cite{Zhao2021}.
Given the setting of \cref{setting}, these terms refer to functions which, for any given value $\theta\in\Theta$, visualize the expectation function on the Cartesian plane, either as a line in the case of $X^I$ being metric or as $d_I+1$ points for a categorical $X^I$, both representing the expected value of $Y$ at a value or category, respectively, of the regressor of interest, and averaged, if applicable, over the regressors \mbox{mapped to $M$ and $C$.}

\noindent To incorporate any such function into our formal framework, we define the function
\begin{equation}\label{avgex}
\efun:\Theta\times\R^{d_I}\longrightarrow\R,\quad (\theta,\realx{I})\longmapsto\int_{\bar{\RI}}\regf(x)d\bar{\mu}(\realx{\text{MC}})\,,
\end{equation}
\noindent where
\ $\realx{\text{MC}}\in\R^{p-d_I}$ denotes the vector of realizations of those regressors that are not the regressor of interest, and $\bar{\mu}$ is a measure satisfying requirements (M1)-(M3) w.r.t. $\regf$ and some set $\bar{\RI}\subseteq\R^{p-d_I}$ $\forall\theta\in\Theta$, and chosen according to one of the following axiomatic assumptions, unless the regressor of interest is the only regressor in which case $\efun\hat{=}\regf$.
\paragraph{\textit{Axiomatic assumption} (A.\textrm{I})} \emph{All regressors are independent variables}.\\
	In this case, $\bar{\mu}$ is a product measure, specifically\begin{align*}
\int_{\bar{\RI}}\regf(x)&d\bar{\mu}(\realx{\text{MC}})=\\&\int_{\RI_{\scaleto{M_1}{5pt}}}\hspace*{-8pt}...\int_{\RI_{\scaleto{M_m}{5pt}}}\int_{\RI_{\scaleto{C_1}{5pt}}}\hspace*{-8pt}...\int_{\RI_{\scaleto{C_c}{5pt}}}\hspace*{-5pt}\regf(x)d\mu_{\scaleto{C_c}{5pt}}(\realx{\scaleto{C_c}{5pt}})...d\mu_{\scaleto{C_1}{5pt}}(\realx{\scaleto{C_1}{5pt}})d\mu_{\scaleto{M_m}{5pt}}(\realx{\scaleto{M_m}{5pt}})...d\mu_{\scaleto{M_1}{5pt}}(\realx{\scaleto{M_1}{5pt}})\,,
	\end{align*}
\noindent where, to satisfy requirements (M1)-(M3), for each regressor in $M$, i.e. \mbox{$\forall i\in\{1,...,m\}$,} an appropriate set $\RI_{M_i}\subseteq\R$ and
a probability measure $\mu_{M_i}$ on $\big(\R,\mathcal{B}(\R)\big)$ with $\supp(\mu_{M_i})\subseteq\RI_{M_i}$ (if required as a result of being normalized)
and for each regressor in $C$, i.e. $\forall j\in\{1,...,c\}$,
an appropriate set $\RI_{C_j}\subseteq\standB{d_{C_j}}$ and
a probability measure $\mu_{C_j}$ on $\big(\standB{d_{C_j}},\mathcal{P}\big(\standB{d_{C_j}}\big)\big)$ with $\supp(\mu_{C_j})\subseteq\RI_{C_j}$ are chosen in a way so that $\forall (\theta,\realx{I})\in\Theta\times\R^{d_I}$ the following holds:\footnote{In addition to reasonably limiting the choices of sets and measures, this requirement ensures that, if one or more measures are defined via a density, $\bar{\eff}$ may be computed as an iterated integral by Fubini's theorem, see \cite{Fubini}.}
\begin{center}
$\bigintss_{\R^{p-d_I}}\big\vert \regf(x)\big\vert\, d\big(\mu_{M_1}\times...\times\mu_{M_m}\times\mu_{C_1}\times...\times\mu_{C_c}\big)<\infty\,.
$ 
\end{center}
\paragraph{\textit{Axiomatic assumption} (A.\textrm{II})} \emph{The regressors are not necessarily independent but follow a joint distribution}.\\
In this case, let $\mu_X$ denote the measure associated\footnote{\label{assnote}See \cref{assMrem} in \cref{AppDefs} for a formal explanation of this terminology.} with the given or assumed joint distribution of the regressors $X$.
Specifically, one has two different options for the choice of $\bar{\mu}$, given by the following. \begin{itemize}
	\item[\textbf{(A.II$'$)}] Choose $\bar{\mu}$ as the measure associated\footnoteref{assnote} with the marginal distribution obtained by marginalizing out the regressor of interest. 
	The regressor of interest is then treated as an independent variable.
	\item[\textbf{(A.II$''$)}] Choose $\bar{\mu}$ as the measure associated\footnoteref{assnote} with the conditional distribution of all regressors but the regressor of interest given $X^I$, denoted by $\aiimeasure$.

 Under assumptions (A.II$'$) and (A.II$''$), requirement (M1) is implicitly met. However, one may still want to normalize $\bar{\mu}$ w.r.t. to some set $\bar{\RI}$, either because only the effect on this set is of interest or because requirement (M3) does not hold for $\bar{\RI}=\supp(\bar{\mu})$. In such a case, the sets $\RI_{M_i},\,i\in\{1,...,m\}$ and $\RI_{C_j},\,i\in\{1,...,c\}$ are chosen as in (A.I), but $\bar{\mu}$ is not a product measure and therefore needs to be normalized directly w.r.t. $\bar{\RI}=\RI_{M_1}\times...\times\RI_{M_m}\times\RI_{C_1}\times...\times\RI_{C_c}$. 
\end{itemize}

\noindent Having defined a quantity that gives the average expected value of the target variable as a function of the value of the regressor of interest, the natural next step is additionally averaging over that function w.r.t. some probability measure for the regressor of interest in order to quantify the \emph{average expected value of $Y$ given certain ranges of regressor values}. To that end, the following defines what we refer to as `individualized expectation'.

\begin{defi}\label{IndivExp}
		Given the setting of \cref{setting}, 
		we define the \emph{individualized expectation} as the following function
		\begin{equation}\begin{aligned}
				e:\Theta\longrightarrow\R,\quad\theta\longmapsto\begin{cases}	\bigintsss_{\RI_I}\regf(x)\,d\mu_I(x),&\text{if $p=d_I$,}\\[10pt]
					\bigintsss_{\RI_I}\efun(\theta,x)\,d\mu_I(x),& \text{otherwise,}
				\end{cases}\\[10pt]
			\end{aligned}
		\end{equation}
	
	\noindent where $\mu_I$ is a measure that satisfies requirements (M1)-(M3) w.r.t. $\regf$ and  \mbox{$\RI_I\subseteq\R^{d_I}$} $\forall\theta\in\Theta$, and $\efun$ is given depending on the chosen axiomatic assumption. \begin{n}
	Under assumption (A.II$''$), choosing $\mu_I$ as the measure associated with the marginal distribution of $X^I$ under the joint distribution of all regressors amounts to calculating the integral over the expectation function w.r.t. the (normalized) joint distribution. This follows since by \cite[theorem 4.3.6.]{Park2017} these choices result in the individualized expectation $e(\theta)$ being given by $\int_{\RI_I\times\bar{\RI}}\regf(x)d\mu_X(x)$, with $\mu_X$ denoting the measure associated with the given or assumed joint distribution of all regressors, normalized, if applicable, w.r.t. $\RI_I\times\bar{\RI}$.
	When specifying different choices of sets and measures to integrate over and w.r.t., we write the corresponding individualized expectation under all assumptions as  $e\big(\theta\big\vert \RI_I,\bar{\RI},\mu_I,\bar{\mu}\big)$, or $e\big(\theta\big\vert \RI_I,\mu_I\big)$ if $p=d_I$.
\end{n}
\end{defi}

\subsection{Slope of expectation and generalized marginal effects\label{gMEsec}}
In the previous section, two functions were defined: First, a function of the parameter vector \emph{and} the regressor of interest quantifying the average expected value of the target variable and, second, a function of \emph{only} the parameter vector quantifying the value of the first function averaged over a certain set. Next, we aim to analogously define two functions quantifying not the average expected value of $Y,$ but the \emph{average effect of the regressor of interest on $Y.$} More precisely, we are interested in quantifying the average `effect' as represented by the average slope, specifically the average value of the partial derivative for a metric regressor of interest and the average difference between each non-reference category and the reference category for a categorical regressor of interest. 
We refer to these two new functions as \emph{slope of expectation} and \emph{generalized marginal effect} and the remainder of this section gives their formal definition.

\begin{defi}\label{AverageSlopeDef}
Given the setting from \cref{setting} and
\begin{equation}\label{slope1}
s(\theta,x):=\begin{cases}
	\dfrac{\partial \regf}{\partial \realx{I}} (x),&\text{if $X^I$ is a metric regressor,}\\[10pt]
\regf (x)- \regf (\left . x \right\vert_{\realx{I}=\mathbf{0}}),&\text{otherwise, i.e. if $X^I$ is a categorical regressor,}
\end{cases}
\end{equation}
we define the \emph{slope of expectation} as the following function
	\begin{equation}\label{SuperdefSD}\begin{aligned}
			\bar{\eff}:\Theta\times\R^{d_I}\longrightarrow\R,\quad (\theta,\realx{I})\longmapsto\begin{cases}s(\theta,x),& \text{if $p=d_I$,}\\
				\bigintss_{\bar{\RI}}\eff(\theta,x)d\bar{\mu}(\realx{\text{MC}}),& \text{otherwise,}
			\end{cases}\\[10pt]
		\end{aligned}
\end{equation}
where, correspondingly to \cref{avgex}, $\bar{\mu}$ is a measure satisfying requirements (M1)-(M3) w.r.t. $s(\theta,\boldsymbol{\cdot})$ and $\bar{\RI}\subseteq\R^{p-d_I}$ $\forall\theta\in\Theta$, and chosen according to one of the axiomatic assumptions (A.I), (A.II$'$), or (A.II$''$).\if1\JASA{\enlargethispage{10pt}}\fi

\begin{n}Under assumption (A.II$''$), \cref{SuperdefSD} needs to be modified as follows for cases where the regressor is not the only regressor:
\begin{align*}
    \hspace*{-.25cm}(\theta,\realx{I})&\longmapsto\begin{cases}
				\int_{\bar{\RI}}\eff(\theta,x)\aiimeasure(d\realx{\text{MC}},\realx{I}), &\text{if $X^I$ is a metric regressor,\footnotemark}\\[10pt]
				\begin{array}{l} \int_{\bar{\RI}}\regf(x)\aiimeasure(d\realx{\text{MC}},\realx{I})\,-\\[2pt]\int_{\bar{\RI}}\regf (\left . x \right\vert_{\realx{I}=\mathbf{0}})\aiimeasure(d\realx{\text{MC}},\mathbf{0}),
				\end{array}&\text{otherwise.}
			\end{cases}
\end{align*}
\noindent However, in the interest of readability, we use only the notation of \cref{SuperdefSD} in any definition or statement which applies under all assumptions.\footnotetext{See \cref{Aiinote} in \cref{AppDefs} for a formal explanation of the notation used for this integral term.}
\end{n}

\end{defi}
\begin{rems}\label{Slopefplot}
Equivalently to $\efun$, for any given value $\theta\in\Theta$, the slope of expectation $\bar{\eff}(\theta,\boldsymbol{\cdot}):\R^{d_I}\longrightarrow\R,\, p\mapsto \bar{\eff}(\vartheta,p)$ may be nicely plotted on the Cartesian plane, either as a line in the case of the regressor of interest being metric or as $d_I$ points representing the average difference in expectation to the reference category for each other category.

\end{rems}

Before giving our proposed definition of generalized marginal effects, the following briefly addresses our approach to cases where the regressor whose effect is of interest is involved in an interaction term.
\paragraph{Interaction terms} In existing methodology, marginal effects 
have only been defined to quantify the effect of a regressor of interest \textit{including} all interactions it is involved in. This makes perfect sense from an interpretability perspective, as even in linear regression the $\beta$-coefficients for interactions between metric regressors have no easily accessible interpretation. However, when comparing models, the option to isolatedly quantify the main and the interaction effect of a regressor of interest might well become relevant. Therefore, solely in the interest of comparability, we include in our definition of generalized marginal effects the option of deriving separate main and interaction effects under the simplifying\footnote{See \cref{Interactions} for a detailed motivation and discussion of this assumption.} assumption that interaction terms behave independently of the regressors involved in them.
Specifically, for interactions between metric regressors, we propose to treat the interaction term as a new regressor, with the probability measures chosen appropriately under each assumption, see \cref{I1} for further details. Meanwhile, \if1\JASA{\enlargethispage{10pt}}\fi
for interactions between categorical regressors, we propose to combine the individual regressors and interactions into one new regressor-vector, whose dimension we again denote by $d_I$, and denote each (product of) variable(s) as $X^I_l$, $1\leq l\leq d_I$, so the entire new regressor is written as $X^I_{cat}=(X^I_1, ... , X^I_{d_I})^\top$. Additionally, we require the following two vectors $\forall l\in \{1,...,d_I\}$ to calculate an appropriate distance as slope: $\mathbf{v}_l\in\{0,1\}^{d_{I}}$, which has $1$ as the $l$th entry as well as all entries representing categories that are included in $X^I_l$; and $\mathbf{ref}_l\in\{0,1\}^{d_{I}}$, which has $0$ as every entry but those entries representing categories that are included in $X^I_l$. If $X^I_l$ is not an interaction term, $\mathbf{ref}_l=\mathbf{0}\in\R^{d_I}$. \Cref{CatVecEx} in \cref{Interactions} provides an intuitive illustration of these vectors. Please note that we would not recommend  computing interaction effects separately in settings where the goal is to maximize interpretability.
	
The following now gives our definition of generalized marginal effects.

\begin{defi}\label{SuperDef}
Given the setting from \cref{setting} and a \emph{slope of expectation} $\bar{\eff}$ as defined in \cref{AverageSlopeDef}, the generalized marginal effect $\measure$ is given by the following functions:
	\begin{enumerate}[label=(\roman*)]
		\item For a \emph{metric regressor of interest}, given a set $\RI\subseteq\R$ and measure $\mu$ that satisfies requirements (M1)-(M3) w.r.t. $\bar{\eff}(\theta,\boldsymbol{\cdot})$ and $\RI$ $\forall\theta\in\Theta$, and, if one is interested in separately calculating the effect for interaction terms between metric regressors, chosen according to \cref{I1}, $\measure$ is defined as the function
		\begin{equation}
			\begin{aligned}\label{metricem}
				\measure:\Theta\longrightarrow\R,\quad\theta\longmapsto 	\int_\RI \bar{\eff}(\theta,\realx{I})d\mu(\realx{I})\, .
		\end{aligned}\end{equation}\if1\JASA{\vspace{-1cm}}\fi
\begin{n} Under axiomatic assumption (A.II$''$), the most natural choice of $\mu$ is the measure associated with the  marginal distribution of $X^I$, since this definition then amounts to integration over the derivative w.r.t. the (normalized) joint distribution as was detailed in \cref{IndivExp}. \end{n}
	
	\item  For a \emph{categorical regressor of interest} with $d_I+1$ categories, or $X^I_{cat}$ as defined above for cases where the effect of a categorical regressor (included in) an interaction with (an-) other categorical regressor(s) is of interest, $\measure$ is defined as the function
				\begin{equation}
			\begin{aligned}
				\measure:\Theta\longrightarrow\R^{d_I},\quad\theta\longmapsto \begin{pmatrix}
					\measurei{1}(\theta)\\
					\vdots\\
					\measurei{d_I}(\theta)
				\end{pmatrix}:= \begin{pmatrix}
					\bar{\eff}(\theta,\mathbf{v}_{1})&-&\bar{\eff}(\theta,\mathbf{ref}_1)\\
					&\vdots&\\
					\bar{\eff}(\theta,\mathbf{v}_{d_I})&-&\bar{\eff}(\theta,\mathbf{ref}_{d_I})
				\end{pmatrix}\,.
		\end{aligned}\end{equation}
		Here, we utilize the notation of $\measurei{j}$, $1\leq j\leq d_I$, in order to later be able to define point estimates and uncertainty regions in an elementwise fashion.\begin{n}
		When the categorical regressor of interest is not involved in any interactions or one does not want to separately compute main and interaction effects, $\bar{\eff}(\theta,\mathbf{ref}_{j})=0$ $\forall j\in\{1,...,d_I\}$, and, therefore, $\measure(\theta)=\big(\bar{\eff}(\theta,\standV{1}),...,\bar{\eff}(\theta,\standV{d_I})\big)^\top$.
		\end{n}
	\item The definition of $\measure$ for cases where one is interested in separately quantifying interaction effects between metric and categorical regressors may be found in \cref{DivInt}, specifically \cref{appdefi}, where we have decided to outsource it given the rather complex groundwork required to reach a full, formal definition.

	\end{enumerate}

\end{defi}
\noindent In linear regression, the generalized marginal effects under assumptions (A.I) and (A.II') will always be equal to the $\beta$-coefficients as the following theorem proves.

\begin{restatable}{thm}{BetasProp}
\label{BetasProp}Consider the case of $\regf$ being a linear polynomial in each variable representing a regressor (category), and $\theta$ given by $(\beta,v)^\top\in\Theta$, where $\beta$ denotes the vector containing the linear polynomial's coefficients and $v$ denotes elements of the parameter vector that do not appear in the function term of $\regf$, such as the error variance. Given that the regressor of interest is either not involved in an interaction or $\measure$ is used to separately quantify the main and interaction effects, the following holds
under assumptions (A.I) and (A.II$\,'$) as well as assumption (A.II$\,''$), if the regressor of interest is a metric variable or interaction. Any $\measure(\theta)$ and $\measurei{j}(\theta)$, \mbox{$j\in\{1,...,d_I\}$}, defined in \cref{SuperDef} is equal to the $\beta$-coefficient of the corresponding regressor of interest, or element thereof, respectively. 
\end{restatable}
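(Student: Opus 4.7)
The plan is to reduce everything to two simple facts about the chosen $g_\theta$: (i)~since $g_\theta$ is a linear polynomial in each regressor variable, it admits a representation $g_\theta(x) = \sum_{\alpha} \beta_\alpha \, x^\alpha$ as a multilinear polynomial, where each monomial $x^\alpha$ is a product of distinct regressor variables and the coefficients are the entries of $\beta$; and (ii)~the integral of a constant function against a probability measure equals that constant. The hypothesis that the regressor of interest is not involved in an interaction (or that main and interaction effects are quantified separately) will ensure that the integrand $s(\theta,\cdot)$ from \cref{slope1} is \emph{independent of $\realx{\mathrm{MC}}$}, at which point both the inner average defining $\bar{s}$ and the outer integral defining $\measure$ collapse to the desired $\beta$-coefficient.

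Step~1: Fix $\theta=(\beta,v)^\top$ and reduce to the slope function. By \cref{SuperDef}, it suffices to show that $\bar{s}(\theta,\realx{I})$ already equals the relevant coefficient (as a function of $\realx{I}$ that depends only on $\realx{I}$), because then integrating against the probability measure $\mu$ in \eqref{metricem} yields the same coefficient in the metric/interaction case, and evaluating at $\mathbf{v}_l$ (with $\mathbf{ref}_l=\mathbf{0}$, using the note in \cref{SuperDef}(ii)) yields $\beta^{[I]}_l$ in the categorical case.

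Step~2: Case analysis on the type of the regressor of interest.

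\emph{Metric regressor, or interaction term treated as a new metric regressor under \cref{I1}.} By the no-interaction (or separation) assumption, the only monomial in $g_\theta$ containing $\realx{I}$ is the main-effect term $\beta^{[I]}\realx{I}$ itself. Hence $s(\theta,x)=\partial g_\theta/\partial \realx{I} = \beta^{[I]}$ is constant in $x$. Under any of (A.I), (A.II$'$), (A.II$''$), the inner average in \eqref{SuperdefSD} integrates the constant $\beta^{[I]}$ against a probability measure (respectively a product, a marginal, or a conditional probability measure), giving $\bar{s}(\theta,\realx{I})=\beta^{[I]}$. The outer integral in \eqref{metricem} then yields $\measure(\theta)=\beta^{[I]}$.

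\emph{Categorical regressor (assumptions (A.I) and (A.II$'$) only).} By the no-interaction (or separation) assumption, every monomial of $g_\theta$ involving any variable of $\realx{I}$ has the form $\beta^{[I]}_l\, x^I_l$ for some $l\in\{1,\dots,d_I\}$. Therefore $s(\theta,x)=g_\theta(x)-g_\theta(x|_{\realx{I}=\mathbf{0}})=\sum_{l=1}^{d_I} \beta^{[I]}_l\, x^I_l$, which depends only on $\realx{I}$. Under (A.I) and (A.II$'$) the measure $\bar{\mu}$ does not depend on $\realx{I}$, so the inner integral in \eqref{SuperdefSD} (integrating a function constant in $\realx{\mathrm{MC}}$ against a probability measure) returns the integrand, giving $\bar{s}(\theta,\realx{I})=\sum_l \beta^{[I]}_l x^I_l$. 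Finally, $\measurei{l}(\theta)=\bar{s}(\theta,\mathbf{v}_l)-\bar{s}(\theta,\mathbf{ref}_l)=\beta^{[I]}_l-0=\beta^{[I]}_l$, using that $\mathbf{ref}_l=\mathbf{0}$ in the no-interaction (or separated-effect) setting.

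\paragraph{Main obstacle.} The only delicate point is bookkeeping around the phrase \emph{linear polynomial in each variable}: one must verify that combined with the no-interaction/separation hypothesis, the only summands of $g_\theta$ that survive in $s(\theta,x)$ are precisely those whose coefficient is the target $\beta$-coefficient, and that $s(\theta,x)$ is then constant in $\realx{\mathrm{MC}}$. Once this structural reduction is made, both assertions become applications of the probability-measure normalization $\int d\mu=1$. This also makes transparent why the theorem excludes (A.II$''$) in the categorical case: there, the definition in the note following \cref{AverageSlopeDef} integrates $g_\theta(x)$ against $\aiimeasure(\cdot,\realx{I})$ and $g_\theta(x|_{\realx{I}=\mathbf{0}})$ against the possibly different measure $\aiimeasure(\cdot,\mathbf{0})$, so the two integrals need not cancel to a coefficient even in the linear case.
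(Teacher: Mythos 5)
Your overall strategy -- reduce $s(\theta,\cdot)$ to a function that is constant in $\realx{\text{MC}}$ and then use $\int d\bar{\mu}=1$ -- is exactly the paper's strategy, and your treatment of the metric/metric-interaction case and of a categorical regressor not involved in any interaction matches the paper's Cases 1 and 2.1, including the correct diagnosis of why (A.II$''$) fails for categorical regressors (the two integrals are taken against the different measures $\aiimeasure(\cdot,\realx{I})$ and $\aiimeasure(\cdot,\mathbf{0})$ and need not cancel).

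However, there are two genuine gaps. First, your handling of the separated categorical-interaction case is wrong as written: you assert that $\mathbf{ref}_l=\mathbf{0}$ ``in the separated-effect setting,'' but by the construction in \cref{gMEsec} (see \cref{CatVecEx}) $\mathbf{ref}_l$ is nonzero whenever $X^I_l$ is an interaction term -- it carries $1$s at the positions of the categories included in $X^I_l$. Consequently $\bar{\eff}(\theta,\mathbf{v}_l)=\sum_{j\in L(l)\cup\{l\}}\beta_j$ is not $\beta_l$ and $\bar{\eff}(\theta,\mathbf{ref}_l)=\sum_{j\in L(l)}\beta_j$ is not $0$; the conclusion $\measurei{l}(\theta)=\beta_l$ is still true, but only via the telescoping of these two sums (the paper's Case 2.2, where the index set $L(r)$ is introduced precisely for this bookkeeping), which your argument does not perform. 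Second, you do not address \cref{SuperDef}(iii) at all: the theorem explicitly covers ``any $\measure(\theta)$ and $\measurei{j}(\theta)$ defined in \cref{SuperDef},'' which includes the mixed metric--categorical interaction case of \cref{appdefi} in \cref{DivInt}, with its separate integrands $\tilde{\eff}_{\emptyset}$, $\tilde{\eff}(\cdot,\cdot,\mathbf{v})$, and $\tilde{\eff}_{I_l}$; the paper's Case 3 verifies each of the three blocks of the resulting vector separately, and some version of that verification is needed for a complete proof.
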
\if1\JASA{\enlargethispage{12pt}}\fi

\begin{rems}\label{TransRem}
 Notation-wise, we have so far not specifically considered the case of the specification of sets and measures being more intuitive for transformed versions of some or all regressors. Naturally, in such cases, one may treat the transform of each concerned element of $X$ as a new regressor and subsequently apply all methods as presented.
\end{rems}

\subsection{Point estimates and uncertainty regions\label{UncertaintySection} }

\noindent We adopt a Bayesian view on uncertainty as it appears to be the most natural one in this context. Specifically, we treat the parameter $\theta\in\Theta$ as a random variable and derive a point estimate and credible set directly for the random variable defined by $q\circ\theta$, where $q$ denotes any of the quantities proposed thus far. 

For frequentist inference results, where no posterior distribution for $\theta$ is available, we propose to follow the suggestion from \cite{Gelman_AveragePredictiveComparison} and,  given a point estimate $\hat{\theta}$ and covariance matrix $\Sigma_{\hat{\theta}}$ for $\theta$, assume that $\theta\sim N(\hat{\theta},\Sigma_{\hat{\theta}})$. Throughout, we denote by $\mu_\theta$ the probability measure that is associated either with a posterior distribution, resulting from Bayesian inference, or the $N(\hat{\theta},\Sigma_{\hat{\theta}})$ distribution, with frequentist point estimate $\hat{\theta}$ and covariance matrix $\Sigma_{\hat{\theta}}$.

\begin{rems}\label{UncertaintyWorks} 
The functions $e:\Theta\longrightarrow\R$, $\measure:\Theta\longrightarrow\R$, and $\measurei{j}:\Theta\longrightarrow\R$, $j\in\{1,...,d_I\}$, are clearly all Borel-measurable. Therefore, given a random variable $\theta$, $e\circ\theta$, $\measure\circ\theta$, and $\measurei{j}\circ\theta$ are again random variables and we denote the measures associated with their respective distributions by $\mu_{e(\theta)}$, $\mu_{\measure(\theta)}$, and $\mu_{\measurei{j}(\theta)}$, respectively. Meanwhile, for any fixed realization of $X^I$, the functions $\efun(\boldsymbol{\cdot},x):\Theta\longrightarrow\R$ and $\bar{\eff}(\boldsymbol{\cdot},x):\Theta\longrightarrow\R$ are also Borel-measurable $\forall x\in\R^p$, equivalently allowing for the computation of a point estimate and credible set for any fixed $x\in\R$, if the regressor of interest is metric, or $x\in\standB{d_I}$, if the regressor of interest is categorical. \if1\JASA{\enlargethispage{11pt}}\fi
\end{rems}

\noindent Given an assumed (posterior) distribution for $\theta$, the choice of point estimate and credible set is largely a matter of personal preference. We would suggest respectively utilizing the \emph{mean} and \emph{highest density region}, or HDR, as put forward by \cite{HDRs}, for which a precise definition in the current setting is given by \cref{HDR} in \cref{AppDefs}. However, other choices such as median and equal-tailed interval would also be appropriate and may be used to replace the mean and HDR, respectively. To emphasize this point, we write \emph{$100(1-\alpha)$\%-uncertainty region} instead of $100(1-\alpha)$\%-HDR in the following definitions.\if1\JASA{\enlargethispage{11pt}}\fi

\begin{defi}\label{UncertaintyDef}
Given the fully specified functions $e\big(\theta\big\vert \RI_I,\bar{\RI},\mu_I,\bar{\mu}\big)$ and $\measure(\theta)$, we define
	\begin{enumerate}[label=(\roman*)]
		\item the point estimates $\widehat{e}$ and $\widehat{\measure}$ as 
		$\widehat{e}:=\EW[e(\theta)]=\int_\Theta e(x)d\mu_\theta(x)$
		and\if1\Preprint{\begin{align*}
		\widehat{\measure}:=\EW[\measure(\theta)]=\begin{cases}
			\int_\Theta \measure(x)d\mu_\theta(x) &,\text{ if $\measure$ has image $\R$}\\[10pt]
			\begin{pmatrix}
					\int_\Theta \measurei{1}(x)d\mu_\theta(x)\\
					\vdots\\ 
						\int_\Theta \measurei{d}(x)d\mu_\theta(x)
			\end{pmatrix}&,\text{ if $\measure$ has image $\R^d$, $d\in\N$,}
		\end{cases}
		\end{align*}}\fi
		\if1\JASA{\begin{align*}
		\widehat{\measure}:=\EW[\measure(\theta)]=\begin{cases}
			\int_\Theta \measure(x)d\mu_\theta(x) &,\text{ if $\measure$ has image $\R$}\\[10pt]
			\begin{pmatrix}
					\int_\Theta \measurei{1}(x)d\mu_\theta(x)\\[-10pt]
					\vdots\\[-10pt] 
						\int_\Theta \measurei{d}(x)d\mu_\theta(x)
			\end{pmatrix}&,\text{ if $\measure$ has image $\R^d$, $d\in\N$,}
		\end{cases}
		\end{align*}}\fi respectively.
		\item for some $\alpha\in(0,1)$, the credible set  $C_{e}(\alpha)$ as the \UR\ 
		of $\mu_{e(\theta)}$; and the credible set $C_{\measure}(\alpha)$ as the \UR\
		of $\mu_{\measure(\theta)}$, if $\measure$ has image $\R$, or the collection of \UR s of each $\mu_{\measurei{j}(\theta)}$, $j\in\{1,...,d\}$,  if $\measure$ has image $\R^d$, $d\in\N$. In the last case, we denote by $C^{[j]}_{\measure}(\alpha)$, the uncertainty region around the $j$th entry of the point estimate $\widehat{\measure}$.
	\end{enumerate}
\end{defi}
\noindent Analogously, for any fixed $x\in\R$, if the regressor of interest is metric, or $x\in\standB{d_I}$, if the regressor of interest is categorical, we can define (i) the point estimates for $\efun$ and slope of expectation $\bar{\eff}$ as  $\EW\big[\efun(\theta,x)\big]$ and $\EW\big[\bar{\eff}(\theta,x)\big]$, respectively; as well as (ii) for some $\alpha\in(0,1)$, the credible sets  $C_{\efun\vert x}(\alpha)$ and $C_{\bar{\eff}\vert x}(\alpha)$ as the \UR s of the measures associated with the distributions of $\efun(\theta,x)$ and $\bar{\eff}(\theta,x)$, respectively.

\begin{rems}\label{ShadeRem} Just like $\efun$ and $\bar{\eff}$ may be plotted for any given value $\theta\in\Theta$, the corresponding point estimate and \UR s may be plotted as functions of the regressor of interest.
\end{rems}

\subsection{Visualizing sampling uncertainty\label{Predictions}}

\noindent In this section, we will briefly define a method of visualizing the distribution of the target variable $Y$ given an individualized expectation, given that a parametric distributional assumption has been made. We include this methodology since, firstly, it is a very natural extension of the quantities defined thus far and, secondly, we believe a simple visual comparison of how the assumed distribution of the target variable changes for different individualized expectations could improve researchers' understanding of model results, for which we provide one possible example in \cref{Silberzahn}.
While we further believe that it could be worthwhile to extend the current theory to settings without parametric distributional assumptions, the current work is limited to the following definition.

\begin{defi}\label{ippd}
	Given the setting of \cref{setting} and a parametric distributional assumption for $Y$, given by a conditional density or probability function $p(y\vert \theta,x)$ with mean $\upmu$, we define for an individualized expectation $e\big(\theta\big\vert \RI_I,\bar{\RI},\mu_I,\bar{\mu}\big)$
	\begin{enumerate}[label=(\roman*)]
		\item the density or probability function of $Y$ under $e$ as $
		p_e(y\vert\theta):=p\big(y;\upmu\text{$=$}e(\theta),\upsilon\big)\,,$
		where $p\big(y;\upmu,\upsilon\big)$  
		denotes the, if necessary re-parametrized, version of $p(y\vert \theta,x)$ with its mean $\upmu$ one entry of $\theta$ and, if applicable, the remaining parameter \mbox{entries represented by $\upsilon$.} 

		\item the density or probability function of the \emph{individualized predictive distribution} of $Y$ and $\theta$ under $e$ as the function\begin{equation}
		\Pi(y,\theta):= p_e(y\vert \theta)\pi(\theta)
		\end{equation}
	where $\pi$ denotes the density or probability function via which $\mu_\theta$, the probability measure that is associated either with a posterior distribution, resulting from Bayesian inference, or the $N(\hat{\theta},\Sigma_{\hat{\theta}})$ distribution, with frequentist point estimate $\hat{\theta}$ and covariance matrix $\Sigma_{\hat{\theta}}$, is defined.
	
	\end{enumerate}
\end{defi}
\noindent Equivalently to $\efun$ and $\bar{\eff}$, $\Pi(y,\boldsymbol{\cdot}):\Theta\longrightarrow\R$ is a measurable function for every fixed $y\in\R$ and, therefore, the point estimate $\EW_\theta[\Pi(y,\theta)]:=\int_\Theta \Pi(y,\theta)d\theta$ as well as the corresponding \UR s  may be plotted as a function of $y$, see \cref{fig:IndivPost} for an example. Note that
the quantity $\EW_\theta[\Pi(y,\theta)]$ may simply be seen as a \emph{marginal posterior predictive distribution}, specifically, marginalized w.r.t. the measure associated with the chosen distribution of the regressors $X$.

\section{Special specifications and transforms\label{SpecSec}}
Having established a formal framework for a wide range of methods which may be used to visualize and quantify the results of parametric models, this section will now, first, detail how many established reporting and visualization methods from different disciplines fit into this framework, and, subsequently, make some suggestions of specifications which we believe to 
have beneficial properties regarding both interpretability and comparability. Importantly, however, we do not claim that any one specification is universally superior to others. On the contrary, making it possible to specify any proposed quantity according to the specific research question at hand as well as prior knowledge about the distribution of the regressors is one very important contribution of the developed framework.
Throughout, we denote by $U(a,b)$ the continuous uniform distribution on $[a,b]\in\mathcal{B}(\R)$, by $\delta$ the Dirac measure, and use the term `measure associated with the (joint) empirical distribution' in the sense of the following definition.
\begin{defi}\label{EDdef}
Within the context of this work, given a sequence of regressor-observations  $\{\mathbf{x}_i\}_{i=1,...,n}=:\mathcal{D}_X$, $n\in\N$, as introduced in \cref{setting} and $n_{\mathcal{D}_X}(x):=\sum_{s\in\mathcal{D}_X}\ind{x=s}$ for any $x\in\R^p$, we refer to the probability measure defined via the probability function 
$$p(x)=\frac{n_{\mathcal{D}_X}(x)}{n}\cdot\ind{x\in\mathcal{D}_X}$$
as the \emph{measure associated with the joint empirical distribution} of all regressors. Furthermore, let $\mathcal{D}_X^{[j]}$, $j\in\{1,...,1+m+c\}$, denote the sequence of observations of some regressor in $I\cup M\cup C$. Correspondingly, we refer to the probability measure defined via the probability function 
$p(x)=n^{-1}\cdot n_{\mathcal{D}^{[j]}_X}(x)\cdot\ind{x\in\mathcal{D}_X^{[j]}}$
as the \emph{measure associated with the empirical distribution} of that regressor. Note that the support of these measures is given by the sets $\{\mathbf{x}_i\vert\mathbf{x}_i\in\mathcal{D}_X\}$ and $\{\realx{j}\vert\realx{j}\in\mathcal{D}^{[j]}_X\}$, respectively.
\end{defi}
\subsection{Connection to existing methods}

\subsubsection{\label{existingMEsec}Marginal Effects and Adjusted Predictions}
The concepts of \emph{marginal effects} and \emph{adjusted predictions} originated in economics, but have since become popular methods in other disciplines.
This section will now lay out how exactly the individualized expectation $e$ and generalized marginal effects $\measure$ proposed in the current work may be specified to equal the three commonly established versions of adjusted predictions and marginal effects, the definitions of which may be found in \cite{Williams2012}. 
\paragraph{Average (AAPs/AMEs)}
\emph{Average Adjusted predictions} amount to calculating $e(\hat{\theta})$ under axiomatic assumption (A.II$''$), with the joint empirical distribution of all regressors being used as $\mu_X$. Equivalently, the commonly used \emph{Average Marginal Effects} for a metric regressor of interest amount to calculating $\measure(\hat{\theta})$ under axiomatic assumption (A.II$''$), with $\mu_X$ again the joint empirical distribution of all regressors. However, for a categorical regressor of interest, the commonly used \emph{Average Marginal Effects} amount to calculating the $\measure(\hat{\theta})$ under axiomatic assumption (A.II$'$) instead of (A.II$''$), but with $\mu_X$ still chosen as joint empirical distribution of all regressors.\begin{n}Interestingly, this difference between the computation of AAPs/AMEs for a metric regressor of interest and a categorical regressor of interest has so far not been directly addressed, despite the fact that the marginal effects of a categorical regressor of interest may of course be quite different when computed under assumption (A.II$''$) instead of (A.II$'$), see \cref{fig:ED} for one example of this.
\end{n}

\paragraph{At Representative values (APRs/MERs)}
\emph{Adjusted predictions} and \emph{Marginal Effects at Representative values} amount to calculating $e(\hat{\theta})$ and $\measure(\hat{\theta})$, respectively, under axiomatic assumption (A.I), with the measure for each of the $p$ regressors being chosen as $\delta_{ \{x_i^\text{rep}\}}$, \mbox{$i\in\{1,...,p\}$,} where $x_i^\text{rep}$ denotes the $i$th entry of the chosen "representative value" \mbox{$x^\text{rep}\in\R^p$.}

\paragraph{At the Means (APMs/MEMs)}
\emph{Adjusted predictions} and \emph{Marginal Effects at the Means} simply amount to calculating the APRs and MERs, respectively, with the representative value for each regressor being chosen as its sample mean. Thereby, APMs and MEMs are equivalently calculated as $e(\hat{\theta})$ and $\measure(\hat{\theta})$, respectively, under axiomatic assumption (A.I), with the measure for each regressor being chosen as $\delta_{\{\bar{x}_i\}}$, $i\in\{1,...,p\}$, where $\bar{x}_i$ denotes the $i$th entry of the observed sample mean if $C$ is an empty set. \begin{n} \cite{Williams2012} does not specify what constitutes the `mean' for regressors representing categories of a categorical regressor.\end{n}

\begin{rems}
   In Stata \cite{StataSoftware} and other software modeled after Stata's marginal effects methods, standard errors and, thereby, confidence intervals for adjusted predictions and marginal effects are derived via the delta method. Importantly, this may lead to issues, e.g. when the adjusted prediction point estimate in logistic regression lies close to $0$ or $1$, as the confidence interval may then include impossible values that lie outside of $[0,1]$. In contrast, the  \texttt{marginaleffects} R-package \cite{marginaleffects} calculates point estimates and credible sets very similarly to the methods proposed in the current work for the three commonly established versions of adjusted predictions and marginal effects.
\end{rems}

\subsubsection{Medicine and Sociology\label{MedicalAppl}}
While the three commonly established versions of marginal effects detailed in the previous section are meanwhile also being used in other disciplines, such as medicine and sociology, other effect size measures which likewise fall into our framework have been proposed in
these disciplines. One example of this are \emph{average predictive comparisons}, which give the average normalized difference in expectation resulting from a certain change of the regressor of interest, and whose place in the current framework is given by the following proposition.\begin{restatable}{prop}{GelmanProp}\label{GelmanProp} The \emph{average predictive comparisons} for scalar inputs as defined in \cite{GelmanHill2007} correspond to the generalized marginal effect $\measure$ for a metric regressor of interest under assumption (A.II$\,'$), with $\mu_X$ chosen as the measure associated with the joint empirical distribution of all regressors and $\mu$ as the measure associated with the $U\big(u^{(lo)},u^{(hi)}\big)$ distribution for some chosen 
$u^{(lo)},u^{(hi)}\in\R$ with $u^{(lo)}<u^{(hi)}$.
\end{restatable}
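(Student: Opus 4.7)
The plan is a direct computation: I will unfold the definition of $\measure(\theta)$ in \cref{SuperDef}(i) under the specified choices of $\mu_X$ and $\mu$, and then recover the Gelman--Hill average predictive comparison via the fundamental theorem of calculus.

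First, I would recall that the average predictive comparison for a scalar input $u$ with fixed transition values $u^{(lo)} < u^{(hi)}$, averaged over the empirical distribution of the remaining inputs, is given in \cite{GelmanHill2007} by
\[
\Delta_u(\theta) \;=\; \frac{1}{n}\sum_{i=1}^n \frac{\regf\!\big(u^{(hi)}, \mathbf{x}_i^{[\text{MC}]}\big) - \regf\!\big(u^{(lo)}, \mathbf{x}_i^{[\text{MC}]}\big)}{u^{(hi)}-u^{(lo)}}\,,
\]
where $\mathbf{x}_i^{[\text{MC}]}$ denotes the observed values of the regressors other than $X^I$ in observation $i$. Under assumption (A.II$\,'$) with $\mu_X$ chosen as the measure associated with the joint empirical distribution of all regressors, marginalizing out $X^I$ yields exactly $\bar{\mu} = \tfrac{1}{n}\sum_{i=1}^n \delta_{\mathbf{x}_i^{[\text{MC}]}}$. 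Combined with $\mu$ being the uniform measure on $[u^{(lo)},u^{(hi)}]$, which has constant Lebesgue density $1/(u^{(hi)}-u^{(lo)})$ on this interval, \cref{SuperDef}(i) together with the definition of $\bar{\eff}$ in \cref{AverageSlopeDef} yields
\[
\measure(\theta) \;=\; \int_{u^{(lo)}}^{u^{(hi)}} \int_{\bar{\RI}} \frac{\partial \regf}{\partial \realx{I}}(x)\, d\bar{\mu}(\realx{\text{MC}})\, \frac{d\realx{I}}{u^{(hi)}-u^{(lo)}}\,.
\]

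The remaining step is to invoke Fubini's theorem to swap the two integrals and then apply the fundamental theorem of calculus to the inner integration in $\realx{I}$. Fubini is legitimate since requirement (M3) guarantees absolute integrability of $\partial\regf/\partial\realx{I}$ with respect to the product $\bar{\mu}\otimes\mu$, while the fundamental theorem applies thanks to the partial differentiability of $\regf$ in $\realx{I}$ imposed after \cref{regMod}. Executing both steps collapses the inner integral to $\regf(u^{(hi)}, \mathbf{x}_i^{[\text{MC}]}) - \regf(u^{(lo)}, \mathbf{x}_i^{[\text{MC}]})$ for each $i$, reproducing $\Delta_u(\theta)$ exactly.

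The main obstacle, beyond the routine measure-theoretic bookkeeping, will be matching the precise formulation of the average predictive comparison in \cite{GelmanHill2007} to the setup here. Gelman and Hill allow for a variety of averaging schemes over transitions (e.g.\ weighted pairs of observations), so one must make clear that the scalar-input case under consideration is the one in which $u^{(lo)},u^{(hi)}$ are fixed scalars and the remaining regressors are averaged over the observed sample. Once this identification is made, the computation above establishes the equivalence.
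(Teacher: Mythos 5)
Your proposal is correct and follows essentially the same route as the paper: identify the Gelman--Hill quantity with the empirical average of expectation differences, and recognize this as $\measure(\theta)$ under (A.II$'$) with the uniform measure. The only difference is that the paper cites \cref{UnifProp} for the Fubini-plus-fundamental-theorem-of-calculus step, whereas you re-derive that lemma inline.
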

\begin{rem}
 \cite{Gelman_AveragePredictiveComparison} extend the predictive comparison methods of \cite{GelmanHill2007} and \cref{GelmanPerdoeRem} in \cref{Proofs} relates their methodology to our framework.
\end{rem}
\noindent In addition, the concept of \emph{predicted change in probability}, which refers to the average difference between probabilities for different subgroups as predicted by a binary logistic model, may always be specified in our framework. While several variants of this concept have been proposed, the following proposition gives one formal representative instance of how such a variant, specifically as defined by \cite{Kaufman1996}, may be derived from quantities defined in this work.
\begin{restatable}{prop}{KaufmanProp}\label{KaufmanProp} Consider the case where $Y$ is a binary variable, the regressor of interest is metric, and, for some linear predictor $\eta$, $\regf(X)=\big(1+\exp(-\eta(X))\big)^{-1}$. Then the quantities $\partial P$ and $\Delta P$ as defined by \cite{Kaufman1996} correspond to the quantities $\eff$ and $\measure$ as defined in the current framework for a subset of possible specifications.

\begin{n}
\cite{Kaufman1996} additionally defines standardized versions of these quantities which may equivalently be applied to $\measure$. However, the current work also contains suggestions for standardizing all proposed quantities in \cref{StandAlg}.
\end{n}
\end{restatable}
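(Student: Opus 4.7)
The plan is to unpack Kaufman's \cite{Kaufman1996} definitions of $\partial P$ and $\Delta P$ in the specified binary-logistic setting and then exhibit specific choices of axiomatic assumptions, sets, and measures in our framework for which the resulting $\eff$ and $\measure$ reduce to these quantities. Applying the chain rule to $\regf(x)=(1+\exp(-\eta(x)))^{-1}$ gives
\[
\frac{\partial \regf}{\partial \realx{I}}(x)=\regf(x)\bigl(1-\regf(x)\bigr)\frac{\partial \eta}{\partial \realx{I}}(x),
\]
which is precisely Kaufman's $\partial P$ evaluated at $x$. Since $\eff(\theta,x)$ in \eqref{slope1} is defined as exactly $\partial \regf/\partial \realx{I}(x)$ in the metric case, the pointwise identity $\partial P = \eff(\theta,x)$ is immediate; the ``subset of specifications'' here refers only to Kaufman's prescriptions for choosing the anchor point $x$, e.g., the sample mean vector, a representative covariate vector, or each observed row evaluated in turn.

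For $\Delta P$, defined as the change in predicted probability under a prescribed change of the regressor of interest from $x^{[\mathrm{lo}]}$ to $x^{[\mathrm{hi}]}$ and, depending on the variant, averaged over the remaining covariates, I would map each of Kaufman's conventions onto a specification of $\measure$ as in \cref{SuperDef}(i). The treatment of the remaining regressors fixes the axiomatic assumption and the measure $\bar{\mu}$ entering $\bar{\eff}$: evaluation at a single anchor vector corresponds to (A.I) with a product of Dirac measures, evaluation at the sample means yields the MEM-style specification of \cref{existingMEsec}, and sample-averaging corresponds to (A.II$''$) with $\bar{\mu}$ the empirical joint measure of \cref{EDdef}. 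The measure $\mu$ on the regressor of interest is then chosen as the uniform probability measure on $[x^{[\mathrm{lo}]},x^{[\mathrm{hi}]}]$, so that the fundamental theorem of calculus applied to the inner integral yields, up to the known scale factor $x^{[\mathrm{hi}]}-x^{[\mathrm{lo}]}$, the finite difference $\regf\bigl(x^{[\mathrm{hi}]},\cdot\bigr)-\regf\bigl(x^{[\mathrm{lo}]},\cdot\bigr)$ averaged with $\bar{\mu}$, which is exactly $\Delta P$ in its per-unit-change form.

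The main obstacle I anticipate is that \cite{Kaufman1996} presents $\partial P$ and $\Delta P$ not as single formulas but as families of quantities indexed by the choices of anchor point, change direction, and averaging convention. The proof will therefore be organized as a short case analysis: for each admissible variant in \cite{Kaufman1996} I would list the corresponding axiomatic assumption, the measure $\bar{\mu}$ for the remaining regressors, and the measure $\mu$ together with the set $\RI$ for the regressor of interest, and verify the equality by direct computation via \eqref{slope1} and \eqref{metricem}. This case-by-case verification simultaneously pins down the meaning of the phrase ``for a subset of possible specifications'' in the statement, namely those specifications in which $\mu$ is a Dirac or uniform probability measure supported on a pair or interval of values of $X^I$ and $\bar{\mu}$ is either a Dirac product measure or the joint empirical measure of the remaining covariates.
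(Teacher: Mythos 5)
Your identification of $\partial P$ with $\eff$ via the chain rule is fine and matches the paper's argument. The gap is in your treatment of $\Delta P$: you describe it as ``the change in predicted probability under a prescribed change of the regressor of interest \ldots averaged over the remaining covariates,'' but that is not how \cite{Kaufman1996} defines it. Kaufman's equation (6) constructs $\Delta P_j$ on the \emph{logit scale} from a single reference probability $P_{ref}$, namely $\Delta P_j=\operatorname{expit}\big(\operatorname{logit}(P_{ref})+\tfrac{1}{2}b_j\big)-\operatorname{expit}\big(\operatorname{logit}(P_{ref})-\tfrac{1}{2}b_j\big)$, where $P_{ref}$ may itself be an averaged probability such as $n^{-1}\sum_i \regf(\mathbf{x}_i)$. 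Because the inverse logit is nonlinear, this ``difference at the average'' is \emph{not} equal to the ``average of the differences'' $n^{-1}\sum_i\big(\regf(x^{[\mathrm{hi}]},v_i)-\regf(x^{[\mathrm{lo}]},v_i)\big)$ that your empirical-$\bar{\mu}$ specification produces, so the case of your analysis that maps the sample-averaged variant to $\bar{\mu}$ equal to the joint empirical measure fails. The paper closes exactly this gap with a mean-value lemma (\cref{KaufmanLemma}): since $\regf$ is continuous and bounded, any averaged reference probability equals $\regf(x^{ref})$ for some point $x^{ref}\in\R^p$, and the correspondence to $\measure$ is then established at that point, with $\bar{\mu}$ an appropriately chosen measure (in the simplest reading, a Dirac measure at the reference covariate vector), not the empirical one. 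Your proof as planned has no substitute for this step.

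Two further issues. First, the identity between the $\pm\tfrac{1}{2}b_j$ shift on the logit scale and a shift of $\realx{I}$ by $\pm\tfrac{1}{2}$ requires that $\eta$ contain no interaction terms involving $\realx{I}$ (Kaufman's own specification $\partial P_j=b_jP(1-P)$ already presupposes this); your argument never invokes this restriction, yet without it $\eta(\realx{I}^{ref}\pm\tfrac{1}{2},\realx{\text{MC}}^{ref})\neq\eta(x^{ref})\pm\tfrac{1}{2}b_j$ and the whole correspondence collapses. It also forces $\mu$ to be uniform on an interval of length exactly $1$ centered at $\realx{I}^{ref}$, i.e.\ $U(\realx{I}^{ref}-\tfrac{1}{2},\realx{I}^{ref}+\tfrac{1}{2})$, rather than uniform on an arbitrary $[x^{[\mathrm{lo}]},x^{[\mathrm{hi}]}]$. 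Second, your case list includes specifications under (A.II$''$), but the reduction of $\int_\RI\bar{\eff}(\theta,\realx{I})\,d\mu(\realx{I})$ to a difference of endpoint averages (the paper's \cref{UnifProp}) holds only under (A.I) and (A.II$'$); under (A.II$''$) the conditional measure depends on $\realx{I}$ and the Fubini/fundamental-theorem step you rely on is not available, as the paper shows explicitly in \cref{unfortunately not}. Restricting your case analysis to (A.I) and (A.II$'$), adding the no-interaction hypothesis, and inserting the mean-value argument for averaged choices of $P_{ref}$ would repair the proof and bring it in line with the paper's.
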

\noindent Predicted change in probability is particularly often used in medical publications, see, e.g., \cite{Waehrer2020}, \cite{Fuller2015}, and \cite{Buchman2020}, particularly as a more interpretable alternative to odds ratios. \Cref{Datasection} will provide some further illustrations of how the odds ratio is not only difficult to interpret but can even become somewhat misleading.

\subsubsection{Interpretable machine learning\label{MLsec}}
As mentioned in the introduction, the quantities defined in this work are not only a generalization of marginal effects but also closely related to many model-agnostic methods that were recently proposed in the context of interpretable machine learning. 
Specifically, within the parametric setting of \cref{setting}, \emph{partial dependence-} and \emph{marginal- plots}, see, e.g., \cite{ALEml}, are equivalent to $\efun(\hat{\theta},\boldsymbol{\cdot})$ under assumptions (A.II$'$) and (A.II$''$), respectively. Furthermore, \emph{individual conditional expectation plots}, see \cite{Goldstein2015}, are equivalent to the $n$ plots resulting from specifying $\efun(\hat{\theta},\boldsymbol{\cdot})$ under (A.II$'$) and normalizing the measure associated with the joint empirical distribution w.r.t. each observation. Lastly, the following proposition relates the concept of \emph{accumulated local effects plot} as proposed by \cite{ALEml} to the generalized marginal effects defined in the current work.

\begin{restatable}{prop}{MLProp} \label{MLProp}
Consider the case where, for a given $j\in\{1,...,p\}$, (1) $\regf$ is not only partially but continuously partially differentiable w.r.t. $x_j$ and $\realx{I}=x_j$, i.e.  the regressor of interest is metric and the $j$th observed regressor, and (2) the marginal distribution of $X_j$ induces the \mbox{measure $\mu_{j}$}.  Within the parametric setting of \cref{setting}, the following then holds for the \emph{uncentred ALE main effect for differentiable $f(\cdot)$}, $g_{j,\text{ALE}}$, as defined by \mbox{\cite[thm. 1]{ALEml}}  under assumption (A.II$\,''$) with $\mu$ chosen not as $\mu_j$ but as the measure associated with the $U\big(\min\{\supp(\mu_{j})\},z\big)$  distribution $\forall z\in\supp(\mu_j)$\begin{equation}\label{MLProp_ALE}
        g_{j,ALE}(z)=\measure(\hat{\theta})\cdot\big(z-\min\{\supp(\mu_{j})\}\big)\,.
    \end{equation}
 \end{restatable}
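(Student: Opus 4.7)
The plan is to show the two sides of \eqref{MLProp_ALE} agree by expanding them into the same iterated integral. Recall from \cite[Thm.~1]{ALEml} that the uncentred ALE main effect for a differentiable $f$ admits the representation
\begin{equation*}
g_{j,\text{ALE}}(z) \;=\; \int_{z_0}^{z} \EW\!\left[\frac{\partial f}{\partial x_j}\,\Big|\, X_j=z_j\right] dz_j,
\end{equation*}
where $z_0:=\min\{\supp(\mu_j)\}$ and the conditional expectation is taken with respect to the conditional distribution of $X_{\setminus j}$ given $X_j=z_j$ induced by the joint distribution of $X$.

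Next, I would expand $\measure(\hat{\theta})$ using the definitions in the paper. Since $X^I=X_j$ is metric, by \cref{SuperDef}(i) combined with the (A.II$''$)-version of \cref{AverageSlopeDef} we have
\begin{equation*}
\measure(\hat{\theta}) \;=\; \int_{\RI} \bar{\eff}(\hat{\theta},\realx{I})\, d\mu(\realx{I}), \qquad \bar{\eff}(\hat{\theta},\realx{I}) \;=\; \int_{\bar{\RI}} \frac{\partial g_{\hat{\theta}}}{\partial \realx{I}}(x)\, \aiimeasure(d\realx{\text{MC}},\realx{I}).
\end{equation*}
The inner integral is, by construction of $\aiimeasure$, precisely the conditional expectation $\EW[\partial g_{\hat{\theta}}/\partial x_j \mid X_j=\realx{I}]$ under the joint law of $X$. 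Thus $\bar{\eff}(\hat{\theta},\cdot)$ coincides pointwise with the integrand appearing in $g_{j,\text{ALE}}$.

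Finally, since $\mu$ is taken as the measure associated with the $U(z_0,z)$ distribution, its density is $(z-z_0)^{-1}\ind{[z_0,z]}$, so that
\begin{equation*}
\measure(\hat{\theta}) \;=\; \frac{1}{z-z_0}\int_{z_0}^{z} \bar{\eff}(\hat{\theta},\realx{I})\, d\realx{I}.
\end{equation*}
Multiplying through by $(z-z_0)$ and substituting the conditional-expectation identification from the previous step yields exactly $g_{j,\text{ALE}}(z)$, establishing \eqref{MLProp_ALE}.

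The main obstacle, such as it is, lies in the middle step: verifying that the (A.II$''$)-integration with respect to $\aiimeasure$ is genuinely the conditional expectation appearing in Apley and Zhu's formula. This is essentially a bookkeeping check that the measure the paper attaches to the conditional distribution $X_{[\text{MC}]}\mid X^I$ is the same object used in \cite{ALEml}; the continuous partial differentiability assumption guarantees that $\partial g_{\hat{\theta}}/\partial x_j$ is measurable and integrable so that the conditional expectation and the (A.II$''$) integral are well-defined and coincide. Once this identification is made, the remainder is just substituting the uniform density and collecting terms.
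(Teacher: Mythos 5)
Your proposal is correct and follows essentially the same route as the paper: identify the Apley--Zhu conditional expectation $\EW[\partial f/\partial x_j \mid X_j]$ with the slope of expectation $\bar{\eff}(\hat{\theta},\cdot)$ under (A.II$''$), then use the $U\big(\min\{\supp(\mu_j)\},z\big)$ density to rewrite $\measure(\hat{\theta})$ as $(z-\min\{\supp(\mu_j)\})^{-1}\int \bar{\eff}\,d\realx{I}$ and clear the factor. The only thing the paper does in addition is explicitly verify hypotheses (a)--(c) of \cite[thm.~1]{ALEml} (in particular the continuity of $z_j\mapsto\EW[\cdot\mid X_j=z_j]$, which follows since $\eff(\hat{\theta},x)$ is a finite integral of a continuous function), which you gesture at but do not spell out.
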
  
\begin{rem}
    Additionally,  \Cref{ScholbeckProp} in \cref{Proofs} provides more information including a formal proof regarding the connection of the current theory with the \emph{forward marginal effects} as defined by \cite{BischlHeumann}.
\end{rem}

\subsection{Choices of probability measures\label{PMchoiceSec}}
As already highlighted, we view the flexibility regarding how exactly the proposed quantities are averaged over regressor values as a great asset of the established framework, and, accordingly, there can be no universally appropriate choices of probability measures. Still, this section will discuss some properties of the two probability measures which we believe to be the most convenient regarding interpretability and computation in many settings.
\paragraph{Empirical distribution\label{EDsection}} Most existing methodology that falls into our framework utilizes the  measure associated with the joint empirical distribution, including, incidentally, proposed estimators for the quantities detailed in \cref{MLsec}. This option is not only advantageous because of its computational simplicity, but also the best choice for $\mu_X$ under assumptions (A.II$'$) and (A.II$''$) if one does not have any prior knowledge regarding the joint distribution of all regressors. Additionally, one may naturally consider the (joint) empirical distribution of all or some regressors on a fictive sequence of observations to specify each quantity as needed.

However, we would like to alert the reader to two possible drawbacks of utilizing the (joint) empirical distribution.
Firstly, averaging the proposed quantities over a set of observations may often be a reasonable estimate of the `average quantity', but the result may also be distorted in cases of low observation count or notable changes in $\regf$ between observations. Especially for $\measure$ in the case of a metric regressor of interest, other distribution choices will yield much more interpretable results in this regard, e.g. the Uniform distribution, which we discuss below. Secondly, the only possibility of extrapolation beyond the observed regressor values, if one is interested in that, is utilizing a fictive sequence of observations that represents the characteristics of the population to which one wishes to extrapolate. For metric regressors, however, choosing a sequence to average over rather than, e.g., an interval may often be restrictive in the context of extrapolation.

\paragraph{Uniform distribution}\label{UnifSec} Choosing probability measures as measures associated with the continuous Uniform distribution will yield an equally weighted average over a clearly defined set, which is arguably the most accessible version of an average to interpret. Since equally averaging over a set of discrete points may already be achieved by utilizing the (joint) empirical distribution on a fictive sequence of observations, as above, this section will now illustrate the advantage regarding interpretability for the case of choosing the measure associated with the continuous Uniform distribution as $\mu$ when deriving $\measure$ for a metric regressor of interest and, additionally, derive a convenient result regarding interpretability and computability in this setting.

When the regressor of interest is categorical, $\measurei{j}$, $j\in\{1,...,d_I\}$, simply gives the difference in conditional expectation of the outcome between each non-reference category and the reference category, averaged, depending on the choice of assumption and measure(s), over all other regressors and is therefore 
easy to interpret and communicate. When the regressor of interest is metric, the situation becomes slightly more complicated, however, if one chooses $\mu$ as the measure associated with the $U(a,b)$, $[a,b]\in\mathcal{B}(\R)$, one may legitimately report the statement `\textit{on the interval $[a,b]$, the expected value of $Y$ increases on average by $\widehat{\measure}$ when $X^I$ increases by $1$}'. Moreover, if $X^I$ is considered independent of the remaining regressors, $\widehat{\measure}$ gives nothing more than the \emph{standardized average difference in expectation between $\realx{I}=b$ and $\realx{I}=a$}, i.e. the average difference in expectation divided by the interval length.
The following proposition formalizes this statement.

\begin{restatable}{prop}{UnifProp}
\label{UnifProp}
For some $a_\RI,b_\RI\in\R$ with $a_\RI<b_\RI$, consider the case of the regressor of interest $X^I$ being metric with $\RI=(a_\RI,b_\RI)$ and $\mu$ being chosen or given as the measure associated with the $U(a_\RI,b_\RI)$ distribution.\\
Then, the following holds under assumptions (A.$I$) and (A.$II'$)
\begin{equation}\label{Unifprop1}
    \measure(\theta)=\big(b_\RI-a_\RI\big)^{-1}\cdot\Bigg(\int_{\bar{\RI}}\regf (\left . x \right\vert_{\realx{I}=b_\RI})
    -
    \regf (\left . x \right\vert_{\realx{I}=a_\RI})d\bar{\mu}(\realx{\text{\textup{MC}}})\Bigg)\, .
\end{equation}

\end{restatable}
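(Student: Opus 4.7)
The plan is to unfold the nested definitions of $\measure$ and $\bar{\eff}$, apply Fubini's theorem to interchange the order of integration, and then invoke the fundamental theorem of calculus along the coordinate corresponding to the metric regressor of interest.

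First, I would substitute \cref{SuperDef}(i) into the definition of $\measure(\theta)$ and then \cref{AverageSlopeDef} into $\bar{\eff}(\theta,\realx{I})$, using that $X^I$ is metric so that $\eff(\theta,x)=\partial \regf/\partial \realx{I}(x)$. This yields
\[
\measure(\theta)=\int_\RI \int_{\bar{\RI}} \frac{\partial \regf}{\partial \realx{I}}(x)\,d\bar{\mu}(\realx{\text{MC}})\,d\mu(\realx{I}).
\]
The key structural observation is that under assumptions (A.I) and (A.II$'$) the measure $\bar{\mu}$ does not depend on $\realx{I}$ (in particular, one is not in the conditional setting of (A.II$''$)), so the double integral is genuinely a product-style iterated integral. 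Since $\mu$ is the Uniform measure on $(a_\RI,b_\RI)$, the outer integration can be rewritten as $(b_\RI-a_\RI)^{-1}\int_{a_\RI}^{b_\RI}\,\cdot\,d\realx{I}$.

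Next, I would apply Fubini's theorem to swap the two integrations, producing
\[
\measure(\theta)=\frac{1}{b_\RI-a_\RI}\int_{\bar{\RI}}\int_{a_\RI}^{b_\RI}\frac{\partial \regf}{\partial \realx{I}}(x)\,d\realx{I}\,d\bar{\mu}(\realx{\text{MC}}).
\]
Finally, because $\regf$ is assumed partially differentiable with respect to each metric coordinate of $\realx{I}$ for every $\theta\in\Theta$ (\cref{setting}), the fundamental theorem of calculus evaluates the inner integral as $\regf(x|_{\realx{I}=b_\RI})-\regf(x|_{\realx{I}=a_\RI})$, which is exactly \cref{Unifprop1}.

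The main obstacle is justifying the Fubini interchange, since everything else is definitional or a direct application of a one-variable calculus identity. Requirement (M3) imposed on $\bar{\mu}$ with respect to $\regf$ (hence also its partial derivative in the relevant regime) together with (M3) imposed on $\mu$ with respect to $\bar{\eff}(\theta,\cdot)$ delivers absolute integrability of the iterated integral; combined with the boundedness of the Uniform density on the bounded interval $(a_\RI,b_\RI)$, this lifts to absolute integrability against the product measure $\mu\otimes\bar{\mu}$, which is the hypothesis needed for Fubini--Tonelli. This is the one step where the abstract integrability conditions of the framework do real work; once the swap is legitimate, the remainder of the argument is mechanical.
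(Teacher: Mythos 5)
Your proposal is correct and follows essentially the same route as the paper's proof: unfold the definitions of $\measure$ and $\bar{\eff}$, swap the order of integration by Fubini's theorem (valid under (A.I) and (A.II$'$) since $\bar{\mu}$ does not depend on $\realx{I}$), and evaluate the inner integral of $\partial\regf/\partial\realx{I}$ against the uniform density via the fundamental theorem of calculus. Your justification of the interchange via the (M3) conditions together with Tonelli is, if anything, slightly more careful than the paper's, which cites the integrability requirement for $\regf$ itself.
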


\begin{restatable}{cor}{UnifCor}
\label{UnifCor}
Given the setting of \cref{UnifProp}, the following holds for $\vert \RI \vert=1$ under assumptions (A.$I$) and (A.$II'$)\begin{align}
          \measure(\theta)&=\EW_{X_{\textup{[MC]}}}\Big[\EW\big[Y\vert X^I=b_\RI, X_{\textup{[MC]}}\big]\Big]-\EW_{X_{\textup{[MC]}}}\Big[\EW\big[Y\vert X^I=a_\RI, X_{\textup{[MC]}}\big]\Big]\,.\label{UnifCor1}
\end{align}

\begin{n}
    \textit{Separately from the setting of \cref{UnifProp}, the RHS of above equation may be written in terms of individualized expectation as}
    \begin{equation}\label{iEdiff}
        e\big(\theta\big\vert\{b_\RI\},\bar{\RI},\delta_{\{b_\RI\}},\bar{\mu}\big)-e\big(\theta\big\vert \{a_\RI\},\bar{\RI},\delta_{\{a_\RI\}},\bar{\mu}\big)\,,
    \end{equation}
   \textit{where $\delta$ denotes the Dirac measure.}
\end{n}

\end{restatable}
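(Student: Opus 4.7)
The plan is to derive the corollary as essentially an immediate specialization of \cref{UnifProp}, combined with re-expressing the integrals as iterated conditional expectations via the regression model assumption \cref{regMod}.

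First, I would invoke \cref{UnifProp} directly. Since its hypotheses (metric regressor of interest, $\RI=(a_\RI,b_\RI)$, $\mu$ associated with $U(a_\RI,b_\RI)$, assumptions (A.I) or (A.II$'$)) are exactly those of the corollary, we obtain
\begin{equation*}
    \measure(\theta)=\big(b_\RI-a_\RI\big)^{-1}\cdot\int_{\bar{\RI}}\bigl(\regf (\left . x \right\vert_{\realx{I}=b_\RI})
    -
    \regf (\left . x \right\vert_{\realx{I}=a_\RI})\bigr)\,d\bar{\mu}(\realx{\text{MC}})\,.
\end{equation*}
Next, I would interpret the notation $\vert \RI\vert=1$ as the Lebesgue measure of the interval $\RI=(a_\RI,b_\RI)$ being equal to $1$, i.e. $b_\RI-a_\RI=1$, so that the normalizing factor in front disappears and linearity of the integral splits the expression into a difference of two integrals of $\regf$ w.r.t. $\bar{\mu}$.

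Then, I would rewrite each of these two integrals as an iterated expectation. Under both (A.I) and (A.II$'$), the measure $\bar{\mu}$ is, by construction in \cref{avgex}, the probability measure representing the (joint or product) distribution of the regressors collected in $X_{[\text{MC}]}$, possibly normalized to $\bar{\RI}$. Using the model assumption \cref{regMod}, namely $\EW[Y\vert X]=\regf(X)$, for any fixed value $c\in\R$ we have $\regf(\left.x\right\vert_{\realx{I}=c})=\EW[Y\vert X^I=c, X_{[\text{MC}]}]$ pointwise in $X_{[\text{MC}]}$. Integrating this conditional expectation against $\bar{\mu}$ precisely yields $\EW_{X_{[\text{MC}]}}\bigl[\EW[Y\vert X^I=c, X_{[\text{MC}]}]\bigr]$, once for $c=b_\RI$ and once for $c=a_\RI$, giving \cref{UnifCor1}.

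I don't expect a genuine obstacle — everything is a rewriting. The only subtlety, and hence the step I would be most careful about, is the identification of $\bar{\mu}$ with the distribution of $X_{[\text{MC}]}$ in the expectation notation $\EW_{X_{[\text{MC}]}}[\,\cdot\,]$: under (A.I) the regressors in $M$ and $C$ are independent and $\bar{\mu}$ is a product measure, whereas under (A.II$'$) $\bar{\mu}$ is their joint marginal (obtained by marginalizing $X^I$ out of $\mu_X$); both cases legitimately represent the measure over $X_{[\text{MC}]}$ used implicitly on the right-hand side, so the identity holds in both. The closing \emph{Note} expressing the RHS as the difference $e\big(\theta\bigl\vert\{b_\RI\},\bar{\RI},\delta_{\{b_\RI\}},\bar{\mu}\big)-e\big(\theta\bigl\vert\{a_\RI\},\bar{\RI},\delta_{\{a_\RI\}},\bar{\mu}\big)$ then follows directly by applying \cref{IndivExp} with $\mu_I=\delta_{\{c\}}$, since this Dirac specification collapses the outer integral and leaves exactly $\efun(\theta,c)=\int_{\bar{\RI}}\regf(\left.x\right\vert_{\realx{I}=c})\,d\bar{\mu}(\realx{\text{MC}})$ for each endpoint.
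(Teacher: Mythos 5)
Your proposal is correct and matches the paper's intent exactly: the paper's own proof consists solely of the remark that the corollary ``follows immediately from basic transformations,'' and your argument — specializing \cref{UnifProp}, using $b_\RI-a_\RI=1$ to absorb the normalizing factor, and rewriting the two integrals as iterated conditional expectations via $\EW[Y\vert X]=\regf(X)$ — is precisely the chain of transformations being alluded to. Your handling of the Note via the Dirac specification in \cref{IndivExp} is likewise the intended reading, so there is nothing to correct.
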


\noindent Importantly, similar results may not be obtained under assumption (A.II$''$), as the regressor of interest is not treated as independent. A more detailed reasoning may be found in \cref{unfortunately not} in \cref{Proofs}.

\subsection{Choices of sets for regressor values}
Even more than in the case of probability measures, the proper choice of sets $\RI$, $\RI_I$, and $\bar{\RI}$ entirely depends on the research question that one intends to answer. In this context, we consider the option of equally averaging the effect over a chosen interval or set of points under assumption (A.I) one valuable contribution of our work, as it constitutes a natural and, in the latter case, more interpretable
extension of the \emph{at representative value} and \emph{at means} options from \cref{existingMEsec}. Additionally, the following now provides guidance on how to best quantify the generalized marginal effect in cases where the regressor of interest is metric and the slope of expectation $\bar{\eff}$ changes in a relevant manner over the course of the chosen $\RI$. 
\paragraph{Relevant slope changes for metric regressor of interest}In such cases, we suggest dividing $\RI$ into subsets, with the observed effect being similar within each subset, and calculating $\measure(\theta)$ separately for each subset of $\RI$, using the same measure $\mu$ \emph{but normalized w.r.t. each subset}. Subsequently, one may report a set of effect sizes in combination with the corresponding range of values the regressor of interest takes. Note that, in this context, the term `relevant' again very much depends on the setting and could refer to both, cases where the slope actually changes its sign in the interval of interest and cases with comparatively small slope changes that are, however, detrimental to decision making.

\subsection{Maximizing Comparability}\label{MaxComp}
As mentioned in \cref{gMEsec}, we expect that in certain settings, e.g. multi-analyst studies, the comparability of model results will be of greater importance than their interpretability. By facilitating the separate quantification of main and interaction effects, we have already introduced one method for such situations; the following will make a further suggestion for how to maximize the comparability of models addressing the same research question but either fit on different data sets or with different covariates, i.e. regressors in $M\cup C$.

When comparing two models that were fit on different data, comparability is easily maximized by either utilizing the joint empirical distribution of the combined data sets, if available, or by specifying the exact same probability measures for all common regressors, regardless of the observed values. In the case of different covariates, one could either use all available information by still averaging over all respective regressors or choose to set the probability measure for every covariate that does not appear in both models to $\delta_{\{0\}}$.

Additionally, in the interest of maximizing comparability, we have included suggestions for reporting standardized versions of all proposed quantities in \cref{StandAlg}.

\section{Applications \label{Datasection}}
This section illustrates the relevance of the proposed framework by first, highlighting the important role of identifying and communicating the axiomatic assumptions underlying one's reporting using data from the clinical trial of \cite{RCT}; and, second, demonstrating how the defined quantities may be used to solve a common comparability issue in multi-analyst studies by exemplarily applying them to \mbox{the setting of \cite{SILBERZAHN}.}

\subsection{Application to a clinical trial
\label{RCTsec}}
In their randomized controlled trial, \cite{RCT} compared the hospital length of stay (LOS) of an ethnically diverse group of patients with severe depressive disorders that were either given standard therapy (Group S) or genetically-guided therapy (Group G). 
For our first illustrative example, we followed the sub-analysis of \cite{Chrutchley2022} in omitting all observations with LOS$\,\leq3$ days; and fit a logistic regression model for the question \emph{how does ethnicity affect the probability of staying hospitalized longer than the overall median LOS ($6.625$ days)}, controlling for the variables age, gender, and group \mbox{assignment (G or S).}

Here, race was taken as the categorical regressor of interest, with \emph{White} (58.99\% of the observations) chosen as the reference category and the categories \emph{Latinx}, \emph{Black}, and \emph{Other} making up 24.73\%, 11.73\%, and 4.56\% of the observations, respectively.
Given the corresponding inference results, \cref{fig:ED} gives the expectation plots and generalized marginal effects under axiomatic assumptions (A.II$'$) and (A.II$''$) with $\mu_X$ taken as the joint empirical distribution on the observations used to fit the model.

\begin{figure}[h]
    \centering
    \includegraphics[width=\linewidth]{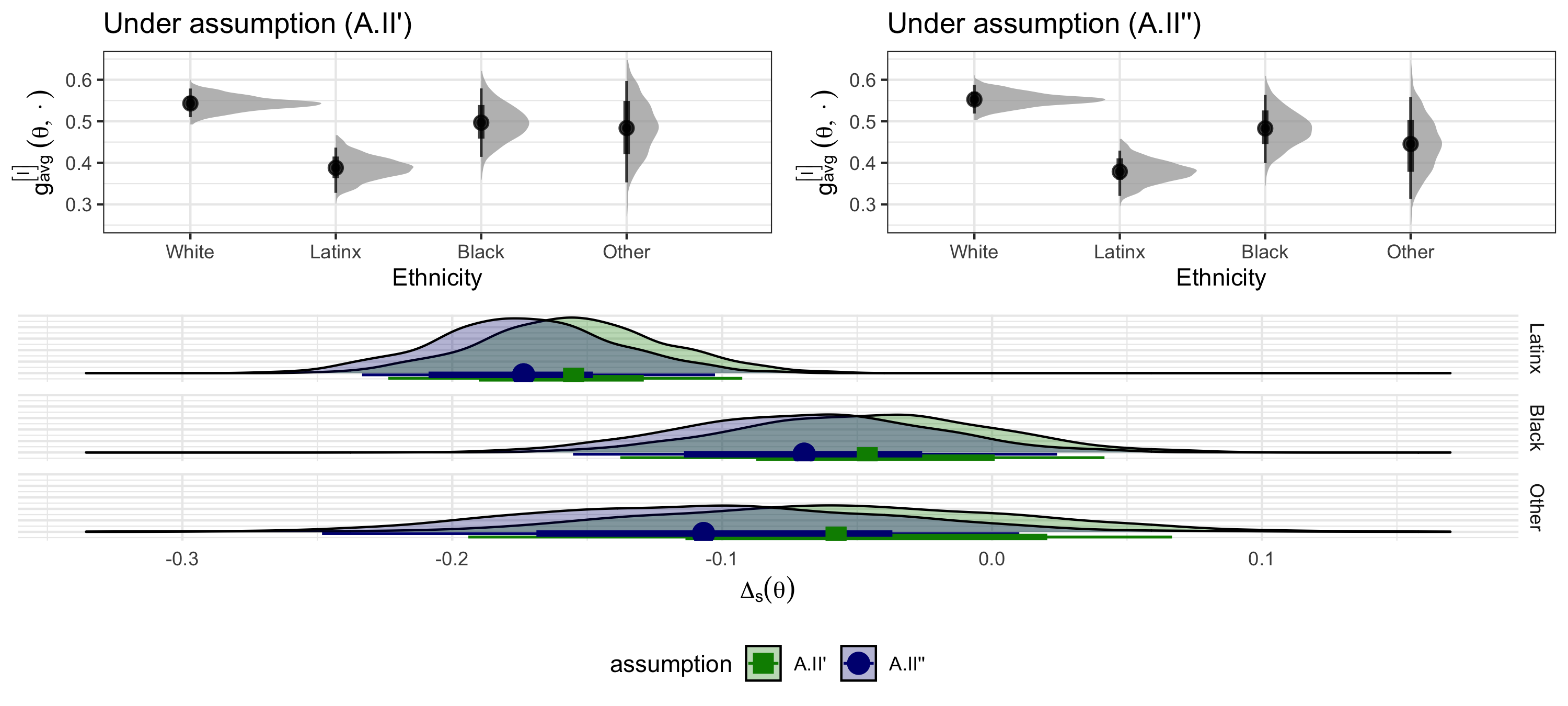}
    \if1\JASA{\vspace{-1cm}}\fi
    \caption{Expectation plots under assumptions (A.II$'$) (topleft) and (A.II$''$) (topright) and comparison of the corresponding generalized marginal effects (bottom) with point estimates and 95\%-uncertainty regions given as points and bars, respectively. \emph{White} was taken as the reference category for the categorical regressor of interest \emph{race}.}
    \label{fig:ED}
\end{figure}\if1\JASA{\vspace{-.8\baselineskip}}\fi

\noindent This example highlights two advantages of the proposed framework.
Firstly, 
the uncertainty regions and point estimates for all quantities retain the information about uncertainty, particularly uncertainty resulting from a low observation count for a certain regressor value, given by the inference regarding the regression parameters. 
In the current example, the uncertainty regions for the categories \emph{Black} and \emph{Other} are wider than for \emph{Latinx}, since the latter makes up a higher percentage of all observations. Secondly, as is especially apparent in the bottom plot of \cref{fig:ED}, the numeric results for the proposed quantities may be quite different depending on which axiomatic assumption underlies their calculation, and, most importantly, have different interpretations. Under assumption (A.II$''$), one derives the \emph{collective} average expectation, effect, etc. of all observations with the regressor of interest taking a certain value, while under assumption (A.II$'$), by treating the regressor of interest as independent, its impact is largely detached from the impacts of the remaining regressors. In the current example, the entry of $\widehat{\measure}$ corresponding to the non-reference category `Other' gives \emph{the difference between the empirical average probability of hospital stay lasting longer than $6.625$ days if each observation's race was changed to `Other' and the empirical average probability if each observation's race was changed to `White'} under assumption (A.II$'$). Meanwhile, under assumption (A.II$''$), the entry of $\widehat{\measure}$ corresponding to the non-reference category `Other' gives \emph{the difference between the empirical average probability of hospital stay lasting longer than $6.625$ days for those observations where `Other' was observed as race and the empirical average \mbox{probability for those observations where `White' was observed as race}.}

\subsection{Application to a multi-analyst study\label{Silberzahn}}
In a recent comment in Nature, \citet{Wagenmakers2022} argue that the scientific community needs to start quantifying analytical variability through multi-analyst projects to address the problem of overconfidence in statistical results.
However, many of the multi-analyst projects that have been completed so far have struggled with the question of how to achieve this goal if the analysis teams are free to choose an arbitrary model class, potentially ruling out the comparison of effect sizes on a common scale, see, e.g., \cite{SCHWEINSBERG} and \cite{Hoogeveen2022}. In one of the first multi-analyst studies, wherein twenty-nine teams of analysts were asked to answer the question "\emph{Are soccer referees more likely to give red cards to dark-skin-toned players than to light-skin-toned players?}"; \cite{SILBERZAHN} appeared to solve this issue, at least in the setting of their study, by asking all teams to report their findings in terms of odds ratios. This allowed them to report the studies results as a forest plot of odds ratios in \cite[fig. 3]{SILBERZAHN}, with the teams ordered by distributional assumption and all but two being classified as either "Linear", "Poisson", or "Logistic".\enlargethispage{12pt}

However, closer examination of the analysis strategies reported by each of the teams reveals that, while the incidence rate ratios resulting from Poisson models were justifiably taken as equivalent to odds ratios under the \emph{rare disease assumption}, the teams that were classified as "Linear" in the aforementioned forest plot indeed first fit a linear model to answer the question, but then additionally fit a logistic model in order to be able to report an odds ratio value. To illustrate how the quantities proposed in the current work solve the issue of effect size-comparability between different model classes, we fit a normal, Poisson, and logistic regression model in which we regressed the number of red cards given per game, the number of red cards given overall with the number of games used as offset, and the number of red cards (successes) per games (trials), respectively, on the standardized mean skin color rating. \Cref{fig:ExpectationSlope} gives the resulting expectation and slope plots, with the offset fixed at ln$(1)$ for the Poisson model. Thereby, the $y$-axes may, in a simplified way, be  interpreted as the `expected number of red cards given per game' and the slope thereof, respectively. \if1\Preprint{

}\fi
\begin{figure}[h]
    \centering
    \includegraphics[width=.9\linewidth]{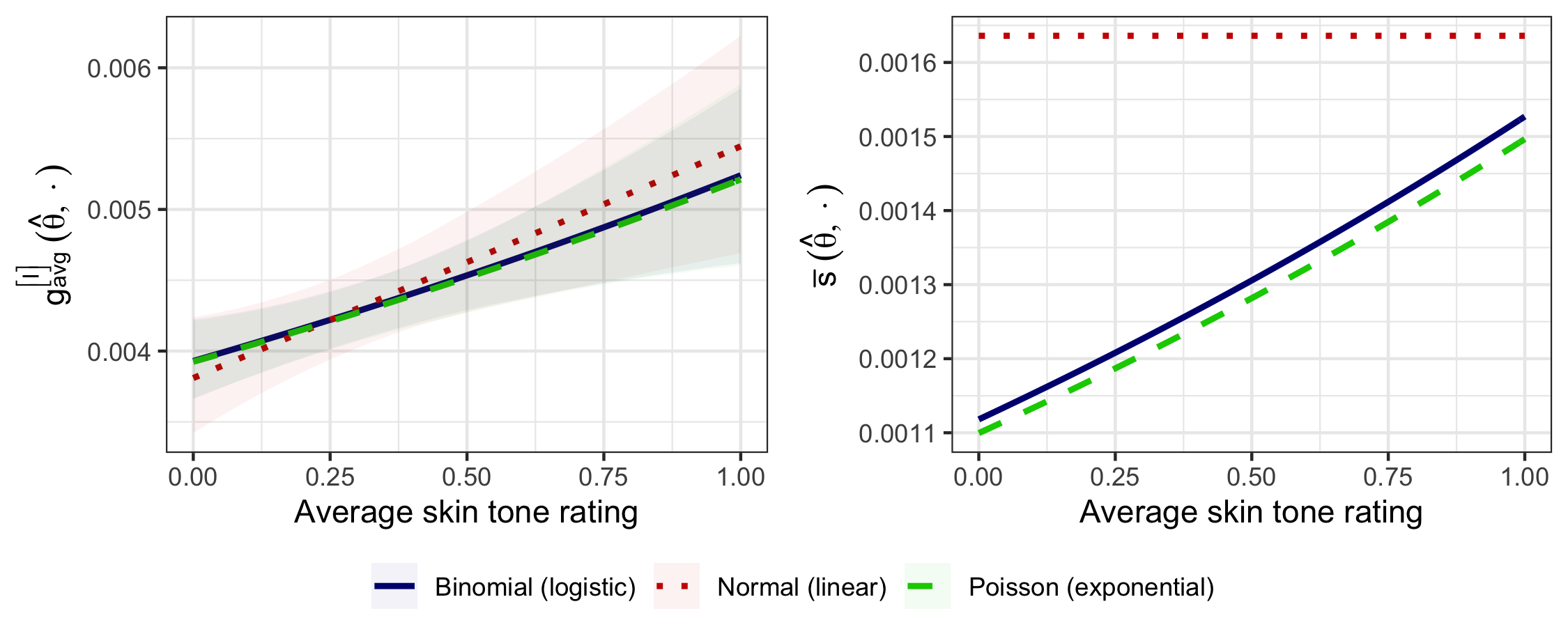}
    \caption{The expectation- (left) and slope- (right) plots for three simple models. The shaded areas represent the 95\%-uncertainty regions as described in \cref{ShadeRem} and were omitted for the slope plots to allow for a better visual comparison of the line estimates.}
    \label{fig:ExpectationSlope}
\end{figure}
\noindent This visualization itself illustrates that these three models can indeed be compared despite their different distributional assumptions. \Cref{fig:ForestPlots} shows how they may, furthermore, be compared in a forest plot of generalized marginal effects, and how the same does not hold for odds ratios or $\beta$-coefficients.

\begin{figure}[h]
    \centering
    \includegraphics[width=\linewidth]{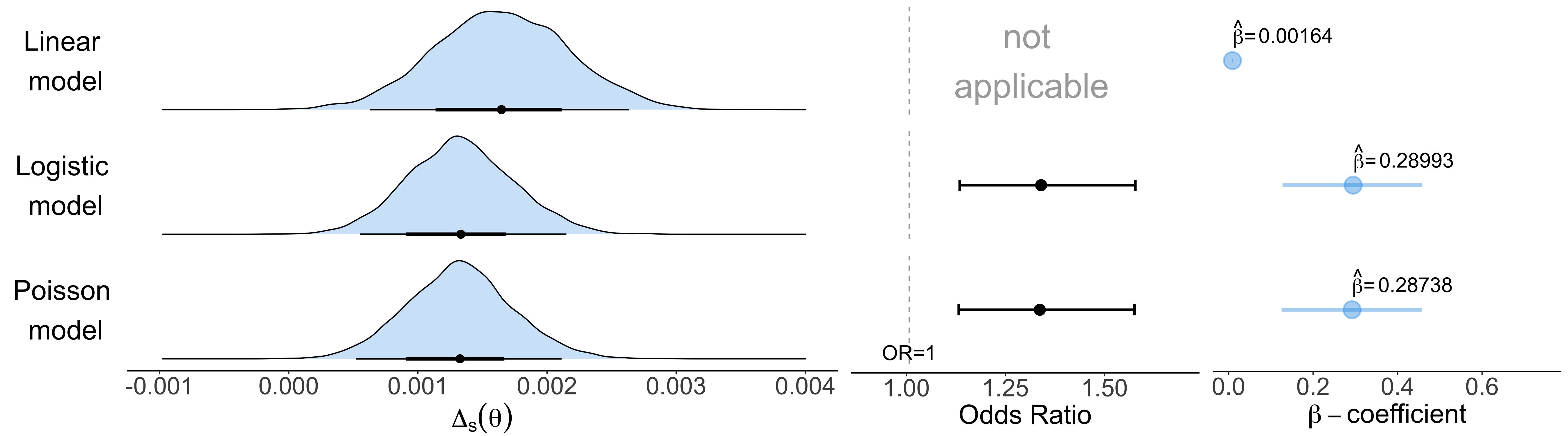}
    \if1\JASA{\vspace{-1cm}}\fi
    \caption{A comparison of three possible reporting methods for the three model results.\\ For deriving $\measure$, $\mu$ was chosen as the measure associated with the $U(0,1)$ distribution.}
    \label{fig:ForestPlots}
\end{figure}

\noindent Please note that, while odds ratios may obviously not be derived from the $\beta$-coefficients of linear models, comparing the results of Poisson models and logistic models in terms of odds ratios, as \cite{SILBERZAHN} did, is also not generally admissible, but only in settings where the prevalence of `successes' (in this case the event that a red card is given) is very low. This was the case in the \cite{SILBERZAHN} study. However, in such situations, the odds ratio, which, in general, is already almost impossible to sensibly interpret (\cite{Sackett1996}) becomes somewhat misleading. Plotting odds ratios in a forest plot with a line drawn at $1$, i.e. the odds ratio value representing `no effect', visually implies that an odds ratio of, for example, $1.3$ must indicate some non-negligible effect - after all, the odds for a player with $1$ as standardized average skin color rating are 30\% higher than for a player whose rating is $0$. However, when the probability of an event, in this case a red card being given, is infinitesimally small, it is not only approximately equal to the odds but a 30\% increase is, by extension, also infinitesimal. \Cref{fig:IndivPost} illustrates this by comparing the line estimates and uncertainty regions of the individualized predictive distribution, as introduced in \cref{UncertaintySection}, under the individualized expectations $e(\theta\vert \{0\},\delta_{\{0\}})$ and $e(\theta\vert \{1\},\delta_{\{1\}})$ for the linear and logistic models, since here $p=1$; and under $e(\theta\vert \{0\},\{1\},\delta_{\{0\}},\delta_{\{1\}})$ and $e(\theta\vert \{1\},\{1\},\delta_{\{1\}},\delta_{\{1\}})$ for the Poisson model, where the number of games may be seen as the second regressor $X_2$, \mbox{which appears as ln$(X_2)$ in the linear predictor.}

\begin{figure}[H]
    \centering
    \includegraphics[width=\linewidth]{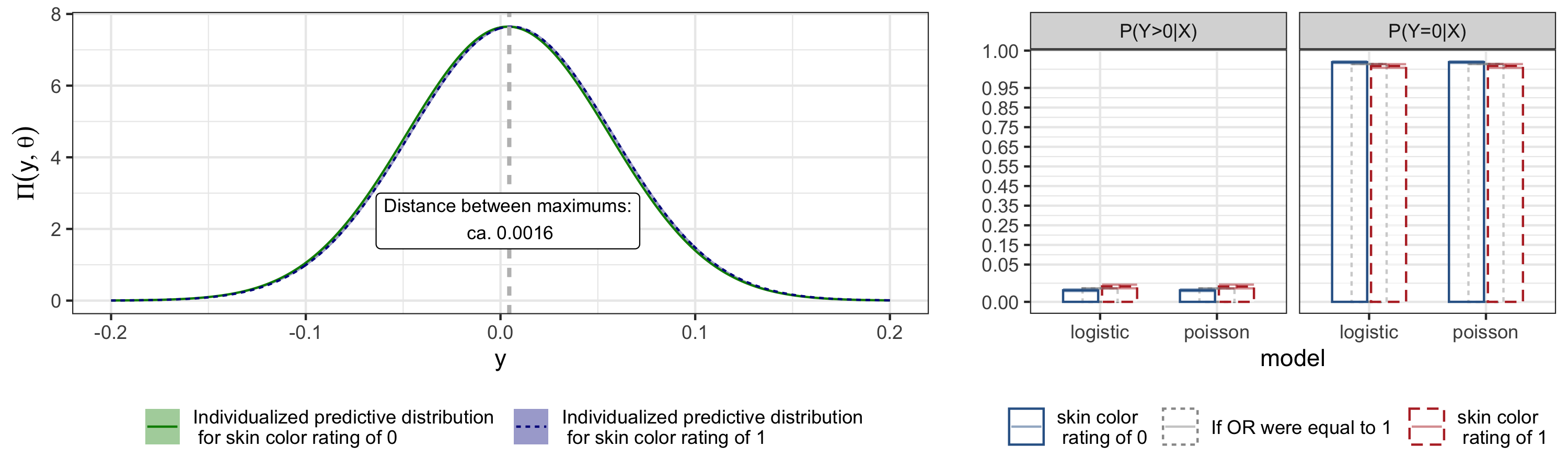}
    \if1\JASA{\vspace{-1cm}}\fi
    \caption{The individualized predictive distribution for the linear model (left) and Poisson \& logistic models (right), with the shaded areas representing the 95\%-uncertainty regions.}
    \label{fig:IndivPost}
\end{figure}\if1\JASA{\vspace{-.8\baselineskip}}\fi

\noindent To summarize, we have now shown that the framework developed in this work is not only useful to effectively compare effect sizes derived from different model classes on a common scale; but also to improve the interpretability of the differences in effect size that one may observe on this common scale.

\section{Conclusion and Future Work}
In this work, we have presented a formal framework for the derivation of a variety of visualization techniques and effect size measures that advance the interpretability, comparability, and communicability of results from  parametric regression. We have shown how numerous existing methodologies fall into this framework and demonstrated how methods derived from it can concretely be used to improve and clarify the reporting of model results in various parametric settings.
Based on this, we believe the following two extensions of the current framework to be of particular interest for future work. Firstly, extending the setting of all methods to include semi- and non- parametric models, especially survival models. Not only are the hazards in the interpretation of hazard ratios well known (\cite{Hernan2010}) and could be avoided by deriving a different effect size measure for proportional hazard models; but we also believe the developed framework to be particularly relevant in the context of medical decision making, e.g. in the setting of the PATH statement formulated by \cite{Kent2020}. Secondly, building on existing work such as \cite{Lei2018} to extend the theory of \cref{UncertaintySection} to settings without parametric distributional
assumptions, both for parametric and nonparametric models. In this context, one could furthermore define predictive credible sets around the individualized expectation that take into account both sampling and estimation uncertainty, with the ultimate aim of advancing how well statistical \mbox{results may be communicated not only to fellow scientists but also to the general public.}
 
\if1\Preprint{\section*{Acknowledgments} The authors gratefully acknowledge funding by LMUexcellence within the framework of the German Strategy of Excellence.}\fi

\if1\JASA{

\section*{Supplementary Materials} The supplementary materials contain appendices A-D as well as R-code for the analyses of \cref{Datasection}. Specifically, \cref{AppDefs} gives relevant mathematical definitions and notation, \cref{Interactions} provides further information regarding separately quantifying generalized marginal effects for settings with interaction terms including the full, formal \cref{SuperDef}(iii), \cref{Proofs} contains all proofs, and suggestions
for reporting standardized versions of all proposed quantities are given in \cref{StandAlg}.}\fi

\small
\bibliographystyle{agsm}
\bibliography{refs}

\newpage
\normalsize
\appendix
\setcounter{equation}{0}
\numberwithin{equation}{section}
\titleformat{\section}
{\normalfont\Large\bfseries}{Appendix~\thesection}{1em}{}

\begin{changemargin}{-1cm}{-1cm}
\begin{center}
    \textbf{\LARGE Appendices}
\end{center}\setcounter{page}{1}
\section{Relevant mathematical definitions and notation\label{AppDefs}}

\begin{rems} Given that the usage of terminology regarding the functions that define continuous and discrete probability distributions is not always consistent within the literature, we would like to clarify that, within this work, we use \begin{enumerate}[label=(\roman*)]
		\item the term \emph{density} to refer to any measurable function $f:\R^d\longrightarrow\R$, $d\in\N_{>0}$, fulfilling the following requirements\begin{enumerate}[label=\arabic*.]
			\item $f(x)\geq0,\quad \forall x\in \R^d$.
			\item $\int_{\R^d} f(x)d\lambda(x)=1$, where $\lambda$ denotes the Lebesgue measure.
		\end{enumerate}

		\item the term \textit{probability function} to refer to any function $p:\R^d\longrightarrow[0,1]$, $d\in\N_{>0}$, fulfilling the following requirements\begin{enumerate}[label=\arabic*.]
			\item $\supp(p)=\{x\in\R^d:p(x)\neq0\}$ is countable.
			\item $\underset{x\in\supp(p)}{\sum}p(x)=1$.
		\end{enumerate}
	\end{enumerate}
\end{rems}
\begin{rems}\label{assMrem} $\,$
For a probability measure $\mu$ on $(\R^d,\mathcal{B}(\R^d))$, $d\in\N_{>0}$, we write that
	\begin{enumerate}
		\item \emph{$\mu$ is defined via} density $f$ or probability function $p$, if $\forall A\in\mathcal{B}(\R^d)$\[
			\mu(A)=\int_A f(x)d\lambda(x)\quad \quad \text{or}\quad \quad\mu(A)=\underset{x\in\supp(p)\cap A}{\sum}p(x)
		\] holds, respectively. Furthermore, we write $\mu(A)=\int_Ad\mu(x)$ in both cases.
	\item \emph{$\mu$ is associated with} a known distribution if $\mu$ is defined via the distribution's density or probability function.
	\end{enumerate}
\end{rems}

\begin{rems}\label{Aiinote}
For a probability measure $\mu$ that is defined via a conditional density $f_{X\vert Y}$ or probability function $p_{X\vert Y}$, we use the following notation
\begin{align*}
    \int_A \mu(dx,y)=\begin{cases}
    \int_A f_{X\vert Y}(x,y) d\lambda(x),&\text{if $\mu$ is defined via a density}\\
    \underset{x\in\supp(p)\cap A}{\sum} p_{X\vert Y}(x,y),&\text{if $\mu$ is defined via a probability function}\,.
    \end{cases}
\end{align*}
  Note that these terms are still variable in $y$.
\end{rems}

\begin{Adefi}[Normalized Probability Measures]\label{normmuDef}
	Given $d\in\N_{>0}$ and a probability measure $\mu$ on $\big(\R^d,\mathcal{B}(\R^d)\big)$ defined via either density $f$ or probability function $p$ and a  set $A\in\mathcal{B}(\R^d)$, we define $\mu$ to be \emph{normalized w.r.t. $A$}, if $\mu$ is then defined via the density $
	\mu(A)^{-1} \cdot f(x) \cdot\ind{x\in A}
	$ or the probability function
	$
	\mu(A)^{-1} \cdot p(x) \cdot\ind{x\in A}\,,
	$ respectively.
\end{Adefi}

\begin{Adefi}[Highest Density Regions as proposed by \cite{HDRs}]\label{HDR}
	 Let $\pi$ be a density or probability function via which a probability measure $\mu_\pi$ on $(\R^d,\mathcal{B}(\R^d))$, $d\in\N_{>0}$, is defined. For any $\alpha\in[0,1]$, we then define the \emph{highest density region with level $\alpha$ of $\mu_\pi$}, or \emph{$100(1-\alpha)$\%-HDR of $\mu_\pi$}, as the set  $$R( \pi_\alpha):=\{x\in\R^d\vert \pi(x)\geq\pi_\alpha\}\,,$$ 
	 where $\pi_\alpha$ is the largest constant that satisfies $\mu_\pi\big(R(\pi_\alpha)\big)\geq 1-\alpha$.
	 
\end{Adefi}

\if1\Preprint{\pagebreak}\fi
\begin{Adefi}[Support of measures on specific measurable spaces\label{SupportDef}]$\,$\hfill
		For a non-negative measure $\mu$ on $(\R^d,\mathcal{B}(\R^d))$, $d\in\N_{>0}$, we define the support of $\mu$ as
		$$\supp(\mu):=\{x\in\R^d\,\big\vert\, \forall \varepsilon>0:\mu(B_\varepsilon(x))>0 \}\, ,$$
		where $B_\varepsilon(x)$ denotes the \emph{epsilon ball} around $x$, defined as \mbox{$B_\varepsilon(x):=\{y\in\R^d\,\big\vert\,\Vert x-y\Vert<\varepsilon\}$} for some $\varepsilon>0$.\\[-5pt]
\end{Adefi}

\begin{rems}\label{SplitRem}
 In this work, we often utilize specific notation for subgroups of entries of some vector, e.g. denoting the entries of the vector of regressor-realizations as two distinct vectors $\realx{I}$ and $\realx{\text{MC}}$. Consequently, when we consider  some function \mbox{$h:\R^d\longrightarrow\R$}, $d\in\N_{>0}$, that maps a vector $(x_1,...,x_d)^\top$ to a value in $\R$ and some notation that maps all entries of a vector in $\R^d$ to either $x_a$ or $x_b$, we write $h(x)$ and $h(x_a,x_b)$ interchangeably, depending on which version we consider optimal for clarity. 
\end{rems}

\newpage
\section{Interaction terms\label{Interactions}}
This appendix contains more detailed information about the proposed method to separately quantify the generalized marginal effect $\measure$ for main and interaction effects that were omitted in \cref{gMEsec}. After providing motivation for our general approach to interaction effects, sections \cref{I1} and \cref{MetIntCor} give definitions of appropriate probability measures for separately considering interactions between metric regressors, and the generalized marginal effect for separately quantifying main and interaction effects for interactions between metric and categorical regressors, respectfully. 

\begin{rems} Within this work, we use the expression \emph{interaction term} to refer to the product of two or more entries of the regressor vector $X$ appearing in the function term of $\regf$, if applicable after the inclusion of `new regressor(s)' in the sense of \cref{TransRem}. We say that a regressor is \emph{included in an interaction term} if it is one of the factors in said interaction term. Furthermore, we refer to an interaction term as \emph{being of higher order} than another term if the first may be written as a product in which the latter is one factor.
\end{rems}
As explained in \cref{gMEsec}, our approach to this issue is to operate under the simplifying assumption that the interaction terms behave independently of the regressors that are being multiplied. Since this assumption clearly does not correspond to reality, the resulting generalized marginal effects should be interpreted with great caution. However, the resulting effect sizes appear to be optimal in terms of comparability. Indeed, applying this assumption for linear regression with interactions results simply in the reporting of $\beta$-coefficients, as \cref{BetasProp} proves. The following example provides a more intuitive motivation.

\begin{ex*}
Consider the case of linear regression that includes two \emph{metric} regressors, $X_1$ and $X_2$ as well as their interaction, i.e. the expectation function is given by \begin{align*}
    \regf(X)=\beta_0+\beta_1\cdot X_1+\beta_2\cdot X_2+\beta_3\cdot X_1\cdot X_2\,.
\end{align*}
In such a setting, it is common practice to report the point estimate $(\hat{\beta}_0,\hat{\beta}_1,\hat{\beta}_2,\hat{\beta}_3)^\top$ as part of the inference results. However, importantly, these $\beta$-coefficients may not be interpreted as `\emph{keeping all other regressors constant, the expected value of the target variable increases by $\hat{\beta}_i$ when $X_i$ increases by $1$.}', $i\in\{1,2,3\}$. Neither is the interpretation `\emph{keeping $X_1$ constant, the expected value of the target variable increases by $\hat{\beta}_2+\hat{\beta}_3$ when $X_2$ increases by $1$.}' admissible, since $\EW[Y\vert X]$ actually increases by $\hat{\beta}_2+\hat{\beta}_3\cdot X_1$ when $X_2$ increases by $1$, which is still a function of the metric regressor $X_1$. Still, the $\beta$-coefficients of models such as the one above are being reported, as they still contain valuable information. Making the assumption that an interaction term behaves independently of the regressors being multiplied simply allows us to quantify the equivalent of linear $\beta$-coefficients in non-linear settings.
\end{ex*}

\noindent The following will now, firstly, specify how the set $\RI$ and probability measure $\mu$ are to be chosen if one wants to separate the main and interaction effects of only metric variables and, secondly, give the formal definition omitted in \cref{SuperDef}(iii).

\subsection{Interactions between metric regressors\label{I1}}
For interactions between metric regressors, we propose to treat
the interaction term as a new, independent regressor, as was also explained in \cref{gMEsec}. Specifically, this entails choosing either one single metric regressor or one interaction term as regressor of interest $X^I$ and mapping \emph{all interaction terms of higher order} as well as, if applicable, \emph{all metric regressors involved in said interaction term} to the set of remaining metric regressors $M$.

While choosing to assume independence of an interaction term to separately quantify main and interaction effects is reasonable in the interest of comparability, we do not believe it expedient to assume that interaction terms may take values that are not the product of values in the respective support of the measures chosen for each involved regressor.
Therefore, for the purpose of choosing an appropriate probability measure for either the new regressor of interest or the added elements of $M$, we require the following definition.

\begin{Aadefi}\label{IreqDef}
	\begin{enumerate}[leftmargin=0.25in]
		\item \mbox{$\forall A,B\subseteq\R$ we define $\boldsymbol{(}A\boldsymbol{\cdot} B\boldsymbol{)}:=\{c\in \R\,\big\vert\, c=a\cdot b \,\text{ for some }\, a\in A,\, b\in B\}$}.\\ Note that, if $A$, $B$, or both sets are chosen as $\R$ and neither as $\{0\}$, $\boldsymbol{(}A\boldsymbol{\cdot} B\boldsymbol{)}=\R$.
		\item For two probability measures  $\mu_1$ and $\mu_2$ on $\big(\R,\mathcal{B}(\R)\big)$, we denote by $\boldsymbol{(}\mu_1\boldsymbol{\cdot} \mu_2\boldsymbol{)}$ a measure on $\big(\R,\mathcal{B}(\R)\big)$ that fulfills the following requirements w.r.t.   $\mu_1$ and $\mu_2$ \begin{itemize}[leftmargin=1.5cm]
			\item[(I1)] $\boldsymbol{(}\mu_1\boldsymbol{\cdot} \mu_2\boldsymbol{)}$ is a probability measure on $\big(\R,\mathcal{B}(\R)\big)$
			\item[(I2)]$\supp\big(\boldsymbol{(}\mu_1\boldsymbol{\cdot} \mu_2\boldsymbol{)}\big)=\regprodCl{\supp(\mu_1)}{\supp(\mu_2)}$
			\item[(I3)] $\forall M\in\mathcal{B}(\R)$, the following holds for $M/z:=\{x\in\R\vert x\cdot z\in M\}$\begin{equation}\label{ReqI3}
				\boldsymbol{(}\mu_1\boldsymbol{\cdot} \mu_2\boldsymbol{)}(M)=\int_\R\int_{M/z}d\mu_1(x)d\mu_2(z)=\int_\R\int_{M/z}d\mu_2(x)d\mu_1(z)\,.
			\end{equation}
		\end{itemize}
	\end{enumerate}
\end{Aadefi}
\noindent In general, and in particular under the axiomatic assumption (A.I), we utilize probability measures not necessarily to represent randomness, but also to derive a non-random weighted average of a function's value over a certain set.
However, we expect the utilized measures to often be chosen as associated with a known distribution and, furthermore, believe that, to most readers, the notion of a random variable following a certain distribution is more accessible than a non-random average calculated w.r.t. a probability measure. Therefore, the following proposition expresses the above definition in terms of random variables.

\begin{restatable}{Aprop}{distlem}
\label{distlem}
	Let $X$ and $Y$ denote two independently distributed,  real-valued random variables. If the probability measures $\mu_X$ and $\mu_Y$ on $\big(\R,\mathcal{B}(\R)\big)$ are associated with the distributions of $X$ and $Y$, respectively, the measure $\boldsymbol{(}\mu_X\boldsymbol{\cdot} \mu_Y\boldsymbol{)}$ is associated with the distribution of the random variable $Z:=XY$.
	\end{restatable}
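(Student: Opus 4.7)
\textbf{The plan} is to identify the measure $\regprod{\mu_X}{\mu_Y}$ characterized by (I1)--(I3) with the pushforward of the product measure $\mu_X\otimes\mu_Y$ under the continuous, hence Borel-measurable, multiplication map $\pi\colon\R^2\to\R$, $(x,y)\mapsto xy$. Since $X$ and $Y$ are independent, the joint distribution of $(X,Y)$ on $\big(\R^2,\mathcal{B}(\R^2)\big)$ is exactly $\mu_X\otimes\mu_Y$, and consequently the distribution of $Z=XY=\pi(X,Y)$ is the pushforward $\mu_Z:=\pi_{\ast}(\mu_X\otimes\mu_Y)$. Requirement (I1) is then immediate, since the pushforward of a probability measure under a Borel map is a probability measure on $\big(\R,\mathcal{B}(\R)\big)$. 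It therefore suffices to verify (I2) and (I3) for $\mu_Z$; as (I3) pins down the values on every Borel set uniquely, this identifies $\mu_Z$ with $\regprod{\mu_X}{\mu_Y}$.

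\textbf{Verifying (I3)} reduces to an application of Tonelli's theorem. For any $M\in\mathcal{B}(\R)$ the section of $\pi^{-1}(M)$ at $z$ is $\{x\in\R:xz\in M\}=M/z$, so the transfer principle for pushforwards together with the representation of $\mu_X\otimes\mu_Y$ as an iterated integral of the non-negative indicator $\ind{xz\in M}$ gives
\begin{equation*}
\mu_Z(M)=(\mu_X\otimes\mu_Y)\big(\pi^{-1}(M)\big)=\int_\R\int_\R \ind{xz\in M}\,d\mu_X(x)\,d\mu_Y(z)=\int_\R\int_{M/z}d\mu_X(x)\,d\mu_Y(z),
\end{equation*}
and the symmetric identity in \cref{IreqDef} follows by swapping the order of integration and relabelling the dummy variable.

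\textbf{The main obstacle} is (I2), which is the only genuinely topological step and requires two separate arguments. For the inclusion $\regprodCl{\supp(\mu_X)}{\supp(\mu_Y)}\subseteq\supp(\mu_Z)$, I would fix any point $z_0=x_0y_0$ with $x_0\in\supp(\mu_X)$ and $y_0\in\supp(\mu_Y)$ and, for arbitrary $\varepsilon>0$, invoke continuity of $\pi$ at $(x_0,y_0)$ to pick $\delta>0$ such that $B_\delta(x_0)\times B_\delta(y_0)\subseteq\pi^{-1}(B_\varepsilon(z_0))$; by the definition of support, $\mu_Z(B_\varepsilon(z_0))\geq\mu_X(B_\delta(x_0))\,\mu_Y(B_\delta(y_0))>0$, so the entire product set lies in $\supp(\mu_Z)$, and, since the support is closed, so does its closure. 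For the reverse inclusion, I would argue contrapositively: if $z_0\notin\regprodCl{\supp(\mu_X)}{\supp(\mu_Y)}$, choose $\varepsilon>0$ with $B_\varepsilon(z_0)\cap\regprod{\supp(\mu_X)}{\supp(\mu_Y)}=\emptyset$; then $\pi^{-1}(B_\varepsilon(z_0))$ is disjoint from $\supp(\mu_X)\times\supp(\mu_Y)$. Using separability of $\R$, both $\mu_X$ and $\mu_Y$ concentrate on their respective supports, hence $\mu_X\otimes\mu_Y$ concentrates on $\supp(\mu_X)\times\supp(\mu_Y)$, forcing $\mu_Z(B_\varepsilon(z_0))=0$ and thus $z_0\notin\supp(\mu_Z)$. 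Combining both inclusions with the already established (I1) and (I3) completes the identification $\regprod{\mu_X}{\mu_Y}=\mu_Z$.
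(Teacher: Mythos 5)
Your proposal is correct and follows essentially the same route as the paper's proof: both identify the distribution of $Z=XY$ with the image of the product measure $\mu_X\times\mu_Y$ under the multiplication map, verify (I3) by Tonelli/Fubini on the indicator $\ind{xy\in M}$, and establish (I2) by showing that mass vanishes off $\supp(\mu_X)\times\supp(\mu_Y)$ while every $\varepsilon$-ball around a point of $\regprod{\supp(\mu_X)}{\supp(\mu_Y)}$ receives positive mass via a product of small balls. Your phrasing in pushforward language and your remark that (I3) already determines the measure uniquely are minor stylistic refinements, not a different argument.
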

\noindent The proof of this proposition may be found in \cref{Proofs}.

\paragraph{Metric interactions under assumptions (A.I) and (A.II$'$)} Under assumptions (A.I) and (A.II$'$), where the regressor of interest is treated as independent, the choice of sets and probability measures in the case of metric interactions is now very straightforward. One simply makes all specifications that would have been made when not choosing to separately quantify the effects. However, additionally, the interaction term of interest or interaction terms added to $M$ are treated as independent under \emph{both} assumptions and the corresponding sets and measures chosen as follows. For an interaction between two metric regressors $X_i$ and $X_j$, $i,j\in\{1,...,p\}$, one chooses $\regprod{\RI_i}{\RI_j}$ as the set to integrate over for the new `interaction-regressor' and $\regprod{\mu_i}{\mu_j}$ as the measure to integrate w.r.t., with $\mu_i$ and $\mu_j$ denoting any measure chosen for the independent regressors $X_i$ and $X_j$ under assumption (A.I); and, under assumption (A.II$'$), denoting one independently chosen measure for one regressor and the measure associated with the \emph{marginal distribution} of the other regressor.

For interactions of higher order, the set $\RI$ and measure $\mu$ may be derived iteratively in arbitrary order, as stated by the following corollary.

\begin{restatable}{Acor}{MetIntCor}
\label{MetIntCor}
	For any finite sets of probability measures on $\big(\R,\mathcal{B}(\R)\big)$ $\{\mu_i\}_{i=1,...,N}$ and subsets of $\R$ $\{\RI_i\}_{i=1,...,N}$, $N\in\N$, the following holds\begin{enumerate}
		\item $\regprodCl{\regprod{\regprod{\regprod{\RI_1}{\RI_2}}{\RI_3}}{...}}{\RI_N}=\regprodCl{\regprod{\regprod{\regprod{\RI_{\tau(1)}}{\RI_{\tau(2)}}}{\RI_{\tau(3)}}}{...}}{\RI_{\tau(N)}}
		$ for any permutation $\tau$ on the set $\{1,...,N\}$.
		\item $\regprod{\regprod{\regprod{\regprod{\mu_1}{\mu_2}}{\mu_3}}{...}}{\mu_N}=\regprod{\regprod{\regprod{\regprod{\mu_{\tau(1)}}{\mu_{\tau(2)}}}{\mu_{\tau(3)}}}{...}}{\mu_{\tau(N)}}
		$ for any permutation $\tau$  on the set $\{1,...,N\}$.
	\end{enumerate}
	\end{restatable}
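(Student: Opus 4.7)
The plan is to reduce both parts to the basic fact that multiplication on $\R$ is commutative and associative, and to lift this first to the set-valued product $\regprod{\cdot}{\cdot}$ and then, via the preceding Proposition (distlem), to the measure-valued product $\regprod{\cdot}{\cdot}$. In each case, the key combinatorial input is that every permutation of $\{1,\dots,N\}$ can be written as a product of adjacent transpositions, so it suffices to establish commutativity and associativity for binary and ternary products.

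For part 1, I would first verify the two binary properties directly from the definition:
\[
\regprod{A}{B}=\{ab:a\in A,\,b\in B\}=\{ba:a\in A,\,b\in B\}=\regprod{B}{A},
\]
\[
\regprod{\regprod{A}{B}}{C}=\{(ab)c:a\in A,b\in B,c\in C\}=\{a(bc):a\in A,b\in B,c\in C\}=\regprod{A}{\regprod{B}{C}}.
\]
A straightforward induction on $N$ now shows that the iterated set-product from part 1 is invariant under any reordering of its arguments, and applying $Cl(\cdot)$ to two equal sets completes part 1.

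For part 2, I would represent each iterated measure as the law of a product of independent random variables and transport permutation-invariance through this representation. Concretely, construct on some probability space mutually independent random variables $X_1,\dots,X_N$ whose marginal laws induce $\mu_1,\dots,\mu_N$ (e.g.\ by the canonical product construction on $(\R^N,\mathcal{B}(\R^N))$). By distlem, $X_1X_2$ has associated measure $\regprod{\mu_1}{\mu_2}$; since $X_1X_2$ is $\sigma(X_1,X_2)$-measurable and therefore independent of $X_3$, a second application yields that $X_1X_2X_3$ has associated measure $\regprod{\regprod{\mu_1}{\mu_2}}{\mu_3}$. Iterating $N-1$ times identifies the left-hand side of part 2 with the distribution of $X_1\cdots X_N$, and the same argument applied to $X_{\tau(1)},\dots,X_{\tau(N)}$ identifies the right-hand side with the distribution of $X_{\tau(1)}\cdots X_{\tau(N)}$. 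Since real multiplication is commutative and associative, $X_{\tau(1)}\cdots X_{\tau(N)}=X_1\cdots X_N$ pointwise on the sample space, so both associated distributions coincide, and requirement (I3)---which already prescribes the value of $\regprod{\mu_1}{\mu_2}$ on every Borel set---turns this equality of distributions into equality of measures. The only subtle point along the way is this uniqueness; an alternative route would bypass distlem entirely by expanding each iterated product directly via (I3) into an $N$-fold iterated integral and rearranging by Fubini, but that merely repackages the same combinatorial identity about permutations of real factors, so the random-variable representation is the cleaner path.
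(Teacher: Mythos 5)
Your proposal is correct and follows essentially the same route as the paper: part 1 from commutativity and associativity of multiplication on $\R$ lifted to the set product (the closure operator then preserving equality), and part 2 by invoking \cref{distlem} to identify each iterated measure product with the law of a product of independent random variables and using that real multiplication is commutative and associative. The paper states both parts as following "immediately" from these facts; you simply fill in the details it leaves implicit (the adjacent-transposition decomposition, the independence of $X_1X_2$ from $X_3$ needed to iterate \cref{distlem}, and the uniqueness of the measure product forced by requirement (I3)), all of which are sound.
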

\noindent The proof of this corollary may also be found in \cref{Proofs}.
\paragraph{Metric interactions under assumption (A.II$''$)} The approach to metric interactions under assumption (A.II$''$) is almost identical to the one under assumptions (A.I) and (A.II$'$). However, as the regressor of interest is not treated as independent under this assumption, we believe that choosing the measure to integrate w.r.t. for the a new `interaction-regressor' according to \cref{IreqDef} would again not be reasonable. Instead, the ideal approach would be to identify the measure associated with the distribution of a random variable that is defined as the product of two random variables with a joint distribution equal to the marginal distribution of the regressors involved in the interaction term. Of course, choosing the set to integrate over according to \cref{IreqDef} is still completely valid.

\subsection{Interactions between metric and categorical regressors\label{DivInt}}
When separately considering interaction effects of interactions between metric \emph{and} categorical regressors, we propose to again combine all categorical regressors, including the parts of the interaction terms that are categorical, into one regressor and consider the interactions between metric regressors as separate regressors, since the interaction between a categorical and metric variable can not be reasonably combined into one separate regressor.
In order to give a coherent definition for our effect size measure, however, we need to combine all these elements into one vector $X^I$. \\
We propose to denote the metric regressor, or interaction of several metric regressors which is viewed as one new metric variable, by $X^I_{met}$; and the categorical elements by  $X^I_{{cat}}=(X^I_1,...,X^I_{d_M},...,X^I_{d_C})$, where the entries $X^I_1$ to $X^I_{d_M}$  represent the non-reference categories of the categorical regressors that are included in an interaction with a metric regressor (or interaction of metric regressors viewed as a separate metric regressor) and entries $X^I_{d_M+1}$ to $X^I_{d_C}$ represent these same entries interacting with $X^I_{met}$. \\

Overall, the regressor of interest is then given by $
	X^I=\big(X^I_{met},X^I_{cat}\big)$, with $X^I_{met}$ taking values in $\R$ and $X^I_{cat}$ taking values in $\{0,1\}^{d_C}$ and the respective realizations being denoted by $\realx{I_{met}}$ and $\realx{I_{cat}}$. In the interest of consistent notation, we set $d_I=d_C+1$. Furthermore, we again utilize the vectors $\mathbf{v}_l\in \{0,1\}^{d_C}$ and $\mathbf{ref}_l\in \{0,1\}^{d_C}$ as defined in \cref{gMEsec}, for which the following provides a concrete example.
\begin{ex}\label{CatVecEx} \begin{enumerate}[label=(\roman*)]
    \item Consider a regression with categorical regressors \emph{`smoking'}, a binary variable with reference category "non-smoking", and \emph{`gender'}, with reference category "female" and non-reference categories "male" and "non-binary". If one were interested in separating the effect of \emph{`smoking'}, \emph{`gender'}, and their interaction, the combined regressor of interest would be given by \linebreak $X_{cat}^I=(X^I_1,X^I_2,X^I_3,X^I_4,X^I_5)^\top$, with\begin{center}
	$X^I_1\,\hat{=} \,smoking$, $X^I_2\,\hat{=}\,male$, $X^I_3\,\hat{=}\,non$-$binary$, $X^I_4\,\hat{=}\,smoking*male$, $X^I_5\,\hat{=}\,smoking*non$-$binary$\end{center}
and, furthermore, 	$\mathbf{ref}_{1}=\mathbf{ref}_{2}=\mathbf{ref}_{3}=(0,0,0,0,0)^\top$, $\mathbf{ref}_{4}=(1,1,0,0,0)^\top$, $\mathbf{ref}_{5}=(1,0,1,0,0)^\top$ and
		$\mathbf{v}_{1}=\standV{1}$, $\mathbf{v}_{2}=\standV{2}$, $\mathbf{v}_{3}=\standV{3}$, $\mathbf{v}_{4}=(1,1,0,1,0)^\top$, $\mathbf{v}_{5}=(1,0,1,0,1)^\top$.
\item
	Consider the setting from above. However, instead of the interaction between \emph{`gender'} and \emph{`smoking'}, one is now interested in the interaction between \emph{`gender'} and \emph{`age'}, a metric variable. We now propose to consider \mbox{$X_{cat}^I=(X^I_1,X^I_2,X^I_3,X^I_4)^\top\in\{0,1\}^4$} with \begin{center}
		$X^I_1\,\hat{=} \,male$, $X^I_2\,\hat{=}\,non$-$binary$, $X^I_3\,\hat{=}\,male\, interacting\,with\,age$, $X^I_4\,\hat{=}\,non$-$binary\, interacting\,with\,age$.\end{center}
	and	$\mathbf{ref}_{1}=\mathbf{ref}_{2}=(0,0,0,0)^\top$, $\mathbf{ref}_{3}=(1,0,0,0)^\top$, $\mathbf{ref}_{4}=(0,1,0,0)^\top$ and $\mathbf{v}_{1}=\standV{1}$, $\mathbf{v}_{2}=\standV{2}$, $\mathbf{v}_{3}=(1,0,1,0)^\top$, $\mathbf{v}_{4}=(0,1,0,1)^\top$. Furthermore, age is written as $X^I_{met}$ and $d_M=2$, since $\regf$ contains the terms $X^I_{met}\cdot X^I_3$ and $X^I_{met}\cdot X^I_4$. The full combined regressor of interest is then given by $\big(X^I_{met},X^I_{cat}\big)$.
\end{enumerate}
\end{ex}

\noindent For the purposes of isolating the effect size of the metric component $X^I_{met}$, we need to define the following new function which `removes' all other entries of $X^I$ from the expectation function \begin{align}\label{ImetFun}
    \widetilde{\regf}(x):\R^{p-d_C}\longrightarrow\R,\quad
    x\longmapsto\regf\big(\realx{I_{cat}}=\mathbf{0},x\big)\,.
\end{align}

\noindent In order to formally define the proposed effect size measure, we now require only the following definition.

\begin{defi*}For $\bar{\RI}$ and $\bar{\mu}$ chosen according to \cref{AverageSlopeDef} and $\widetilde{\regf}$ as defined in \cref{ImetFun}, we define 
\begin{equation}
		\tilde{\eff}_{\emptyset}(\theta,x):=\bigintssss_{\bar{\RI}} \frac{\partial \widetilde{\regf}}{\partial \realx{I_{met}}}(x)\,\,d\bar{\mu}(\realx{\text{MC}})\, ,
	\end{equation}
\begin{equation}
		\tilde{\eff}(\theta,x,\mathbf{v}):=\bigintssss_{\bar{\RI}} \regf (\left . x \right\vert_{x_{\scaleto{[I_{cat}]}{6pt}}=\mathbf{v}})\,\,d\bar{\mu}(\realx{\text{MC}})\, ,
	\end{equation}

\noindent	as well as the following $\forall\, l\in\{d_M+1,...,d_C\}$:

 \begin{equation}
 		\tilde{\eff}_{I_l}(\theta,x):=	\bigintssss_{\bar{\RI}}\dfrac{\partial\regf}{\partial (\realx{I_l}\cdot \realx{I_{met}})}(\left . x \right\vert_{x_{\scaleto{[I_{cat}]}{6pt}}=\mathbf{v}_l})\,\,d\bar{\mu}(\realx{\text{MC}})\, .
\end{equation}
Note that the notation $\partial (\realx{I_l}\cdot \realx{I_{met}})$ represents the derivation of $\regf$ w.r.t. the interaction of the category $\realx{I_l}$, i.e. the realization of the $l$th entry of $X^I_{cat}$, and the metric regressor $\realx{I_{met}}$ and not the actual product. When $\realx{I_l}=1$, we treat this interaction term as an `independent copy' of $\realx{I_{met}}$ which is denoted by $\realx{I_{met}}^{\bullet I_l}$.

\end{defi*}	\,\\ \enlargethispage{.75cm}

\noindent Finally, the following gives the full, formal version of \cref{SuperDef}(iii).
	\begin{Aadefi}\label{appdefi} In cases where metric \emph{and} categorical regressors are involved in an interaction and one is interested in separately quantifying the main and interaction effects, $\measure$ is defined as the function
			\begin{align*}
				\measure:\Theta\longrightarrow\R^{d_I},\quad\theta\longmapsto \begin{pmatrix}
					\measurei{1}(\theta)\\
					\vdots\\
					\measurei{d_{I}}(\theta)
				\end{pmatrix}:= \begin{pmatrix}
				&\int_\RI\tilde{\eff}_{\emptyset}(\theta,x)d\mu(\realx{I_{met}})\\[10pt]
				\int_\RI\tilde{\eff}(\theta,x,\mathbf{v}_{1})d\mu(\realx{I_{met}})\hspace*{-3.4cm}&-&\hspace*{-3.2cm}\int_\RI\tilde{\eff}(\theta,x,\mathbf{ref}_1)d\mu(\realx{I_{met}})\\
					&\vdots&\\
					\int_\RI\tilde{\eff}(\theta,x,\mathbf{v}_{d_M})d\mu(\realx{I_{met}})\hspace*{-3.2cm}&-&\hspace*{-3.2cm}\int_\RI\tilde{\eff}(\theta,x,\mathbf{ref}_{d_M})d\mu(\realx{I_{met}})\\[10pt]
				&	\int_{\RI}\int_{\RI} \tilde{\eff}_{I_{d_M+1}}(\theta,x)d\mu(\realx{I_{met}})\,d\mu(\realx{I_{met}}^{\bullet I_l})
					\\
					&\vdots&\\
					&\int_{\RI}\int_{\RI} \tilde{\eff}_{I_{d_C}}(\theta,x) d\mu(\realx{I_{met}})\,d\mu(\realx{I_{met}}^{\bullet I_l})
				\end{pmatrix}\,,
		\end{align*}
		where $\RI$ and $\mu$ are set as identical in every term, so that $\mu$ satisfies requirements (M1)-(M3) w.r.t. the function in each integral term and $\RI$, after, if applicable, being chosen in accordance with \cref{I1}.
		\end{Aadefi}

\newpage
\section{Proofs\label{Proofs}}
This appendix contains proofs for all statements made in the main paper and \cref{Interactions} as well as some additional remarks throughout. The order of the proofs generally follows the order in which the statements to be proven were made, with the notable exception of \cref{UnifProp} and \cref{UnifCor}. Their proofs are given immediately after the proof of \cref{BetasProp}, since the derived results may be used to prove some statements regarding how existing methodology falls into the proposed framework more efficiently.
\subsection{Proof of \Cref{BetasProp}}

\BetasProp*
 \begin{proof} We prove this statement separately for each setting of definitions \ref{SuperDef}(i)-\ref{SuperDef}(iii). Throughout, w.l.o.g., we only consider cases without interactions between elements of $M$ and $C$ unless they are of interest themselves. We are able to do so as including such interactions does not affect any statements made over the course of this proof.\\
 Furthermore, we consider a parameter vector of the form $\theta=(\beta,v)^\top\in\Theta$ with
\begin{align*}
     \beta=\big(\beta_0,\beta_1,...,\beta_{\tilde{p}}\big)\subseteq\R^{\tilde{p}+1}\,,
 \end{align*}
where, for $p\in\N_{> 0}$ denoting the number of observed regressor variables and $a\in\N_{\geq 0}$ denoting the number of interaction terms included in $\regf$, $\tilde{p}$ is defined by $\tilde{p}:=p+a$.

 \begin{itemize}\item[\underline{Case 1}:] \textit{The regressor of interest is a metric variable or an interaction between metric regressors considered as a separate regressor (if applicable, interactions of higher order are (also) considered separate regressors).}

\end{itemize}     
     In this case, \Cref{SuperDef}(i) applies. \\Specifically,
     for \mbox{$t(j):=1+m+\big(\ind{(j-1)>0}\cdot\sum_{q=1}^{\max(1,j-1)}d_{C_q}\big),$} $j=1,...,c$, $\regf$, if containing no interaction terms, is given by the following
      \begin{align}
        \regf(x)= \regf^{(i)}(\realx{I},\realx{\text{MC}}):=\beta_0+\beta_1\realx{I}+\overset{m}{\underset{i=1}{\sum}}\beta_{i+1}\realx{M_i}+\overset{c}{\underset{j=1}{\sum}}\big(\beta_{t(j)+1},...,\beta_{t(j)+d_{C_j}}\big)\realx{C_j}^\top\,.
     \end{align}
     
     \noindent If $\regf$ does contain interaction terms, we denote the vector of realizations representing higher order interaction terms and/or the interaction whose effect is of interest by $\realx{\text{IA}}$ and 
     \begin{align*}
         \regf(x)=\regf^{(i)^\ast}(\realx{I},\realx{\text{MC}},\realx{\text{IA}}):=\regf^{(i)}(\realx{I},\realx{\text{MC}})+\big(\beta_{p+1},...,\beta_{p+\vert\text{IA}\vert}\big)\realx{\text{IA}}^\top\,.
     \end{align*}
Let $r$ denote the index corresponding to the $\beta$-coefficient of the regressor of interest, i.e. $r=1$ for the case of no interaction terms and $r\in\{1\}\cup\{p+1,...,p+\vert\text{IA}\vert\}$ when one is interested in the effect of a metric regressor included in an interaction or the effect of the interaction term itself. Then the following holds
     \begin{align}
         \frac{\partial\regf^{(i)}}{\partial\realx{I}}(\realx{I},\realx{\text{MC}})=\beta_r\quad\text{ and }\quad\frac{\partial\regf^{(i)^\ast}}{\partial x_r}(\realx{I},\realx{\text{MC}},\realx{\text{IA}})=\beta_r\,.
     \end{align}
Finally, it follows that, if the regressor of interest is a metric variable or an interaction between metric regressors considered as a separate regressor,
\begin{align}
    \measure(\theta)=\begin{cases}
    \int_{\RI}\int_{\bar{\RI}}\beta_r \,d\bar{\mu}(\realx{\text{MC}})d\mu(\realx{I}),&\text{under assumptions (A.I) and (A.II$'$)}\\
     \int_{\RI}\int_{\bar{\RI}}\beta_r \,\aiimeasure(d\realx{\text{MC}},\realx{I})d\mu(\realx{I}),&\text{under assumption (A.II$''$),}
    \end{cases}
\end{align}
   which is equal to $\beta_r$ for any probability measures $\mu$, $\bar{\mu}$, and $\aiimeasure$, whereby the statement holds for \cref{SuperDef}(i). 
\begin{itemize}
     \item[\underline{Case 2.1}:] \textit{The regressor of interest is a categorical variable that is not included in any interactions.}
 \end{itemize}     
In this case, \Cref{SuperDef}(ii) applies and, for \mbox{$b(j):=d_I+m+\big(\ind{(j-1)>0}\cdot\sum_{q=1}^{\max(1,j-1)}d_{C_q}\big)$}, $j\in\{1,...,c\}$, $\regf$ is given by
\begin{align}\label{gii}
        \regf(x)= \regf^{(ii)}(\realx{I},\realx{\text{MC}}):=\beta_0+\big(\beta_1,...,\beta_{d_I}\big)\realx{I}^\top+\overset{m}{\underset{i=1}{\sum}}\beta_{i+d_I}\realx{M_i}+\overset{c}{\underset{j=1}{\sum}}\big(\beta_{b(j)+1},...,\beta_{b(j)+d_{C_j}}\big)\realx{C_j}^\top\,.
\end{align}
Under assumptions (A.I) and (A.II$'$), the following then holds $\forall r\in\{1,...,d_I\}$ 
\begin{align*}
    \measurei{r}(\theta)&=\int_{\bar{\RI}}\regf^{(ii)} (\left . x \right\vert_{\realx{I}=\standV{r}})- \regf^{(ii)} (\left . x \right\vert_{\realx{I}=\mathbf{0}})d\bar{\mu}(\realx{\text{MC}})\\
    &\overset{(\star)}{=}\int_{\bar{\RI}}\regf^{(ii)} (\left . x \right\vert_{\realx{I}=\standV{r}})d\bar{\mu}(\realx{\text{MC}})- \int_{\bar{\RI}}\regf^{(ii)} (\left . x \right\vert_{\realx{I}=\mathbf{0}})d\bar{\mu}(\realx{\text{MC}})\\
    &=\Big(\int_{\bar{\RI}}\beta_r\,d\bar{\mu}(\realx{\text{MC}})+\int_{\bar{\RI}} \beta_0+\overset{m}{\underset{i=1}{\sum}}\beta_{i+d_I}\realx{M_i}+\overset{c}{\underset{j=1}{\sum}}\big(\beta_{b(j)+1},...,\beta_{b(j)+d_{C_j}}\big)\realx{C_j}^\top  d\bar{\mu}(\realx{\text{MC}})\Big)-\\
    &\,\,\quad\Big(\int_{\bar{\RI}}0\,d\bar{\mu}(\realx{\text{MC}})+\int_{\bar{\RI}} \beta_0+\overset{m}{\underset{i=1}{\sum}}\beta_{i+d_I}\realx{M_i}+\overset{c}{\underset{j=1}{\sum}}\big(\beta_{b(j)+1},...,\beta_{b(j)+d_{C_j}}\big)\realx{C_j}^\top  d\bar{\mu}(\realx{\text{MC}})\Big)\\
    &=\int_{\bar{\RI}}\beta_r\,d\bar{\mu}(\realx{\text{MC}})-\int_{\bar{\RI}}0\,d\bar{\mu}(\realx{\text{MC}})=\beta_r-0\\[10pt]
    &=\beta_r\,,
\end{align*} 
whereby the statement holds for \cref{SuperDef}(ii).
\begin{n}
The equality $(\star)$ does not necessarily hold when one replaces $d\bar{\mu}(\realx{\text{MC}})$ with\linebreak $\aiimeasure(d\realx{\text{MC}},\realx{I})$, which is why the statement of this theorem does not apply to for a categorical regressor of interest under (A.II$''$).
 \end{n}

\begin{itemize}
     \item[\underline{Case 2.2}:] \textit{The regressor of interest is the interaction between categorical variables or a categorical regressor involved in (an) interaction(s) with other categorical regressors and one is interested in separating the effects.}
\end{itemize}     
In this case, \Cref{SuperDef}(ii) again applies and $\regf$ given by $\regf^{(ii)}(\realx{I},\realx{\text{MC}})$ as in \cref{gii}. However, $\realx{I}$ represents realizations of a different vector $X^I$ in this case. Specifically, $X^I=(X^I_1,...,X^I_{d_I})$ contains an entry for each category of the categorical regressors involved in the interaction terms and an entry for the combination of categories in each interaction term. \\
As a result, the term $\big(\beta_1,...,\beta_{d_I}\big)\realx{I}^\top$ is no longer always equal to one single $\beta$ but may be given by a sum over several $\beta$s. In order to formally denote such sums, we 
define\linebreak $L(r):=\Big\{i\in\{1,...,d_I\}\backslash\{r\}\big\vert \text{$X^I_i$ is included in $X^I_r$}\Big\}$, $r=1,...,d_I$.\\

\noindent Under assumptions (A.I) and (A.II$'$), the following then holds  $\forall r\in\{1,...,d_I\}$ 
\begin{align*}
    \measurei{r}(\theta)&=\int_{\bar{\RI}}\regf^{(ii)} (\left . x \right\vert_{\realx{I}=\mathbf{v}_{r}})- \regf^{(ii)} (\left . x \right\vert_{\realx{I}=\mathbf{0}})d\bar{\mu}(\realx{\text{MC}})-
    \int_{\bar{\RI}}\regf^{(ii)} (\left . x \right\vert_{\realx{I}=\mathbf{ref}_{r}})- \regf^{(ii)} (\left . x \right\vert_{\realx{I}=\mathbf{0}})d\bar{\mu}(\realx{\text{MC}})\\
    &\hspace*{-28pt}\overset{(\star\star)}{=}\int_{\bar{\RI}}\regf^{(ii)} (\left . x \right\vert_{\realx{I}=\mathbf{v}_{r}})d\bar{\mu}(\realx{\text{MC}})- \int_{\bar{\RI}}\regf^{(ii)} (\left . x \right\vert_{\realx{I}=\mathbf{ref}_r})d\bar{\mu}(\realx{\text{MC}})\\
    &\hspace*{-25pt}=\Big(\int_{\bar{\RI}}\sum_{l\in L(r)\cup \{r\}} \beta_l\,d\bar{\mu}(\realx{\text{MC}})+\int_{\bar{\RI}} \beta_0+\overset{m}{\underset{i=1}{\sum}}\beta_{i+d_I}\realx{M_i}+\overset{c}{\underset{j=1}{\sum}}\big(\beta_{b(j)+1},...,\beta_{b(j)+d_{C_j}}\big)\realx{C_j}^\top  d\bar{\mu}(\realx{\text{MC}})\Big)-\\
    &\,\,\quad\Big(\int_{\bar{\RI}}\sum_{l\in L(r)} \beta_l\,d\bar{\mu}(\realx{\text{MC}})+\int_{\bar{\RI}} \beta_0+\overset{m}{\underset{i=1}{\sum}}\beta_{i+d_I}\realx{M_i}+\overset{c}{\underset{j=1}{\sum}}\big(\beta_{b(j)+1},...,\beta_{b(j)+d_{C_j}}\big)\realx{C_j}^\top  d\bar{\mu}(\realx{\text{MC}})\Big)\\
    &\hspace*{-25pt}=\int_{\bar{\RI}}\sum_{l\in L(r)\cup \{r\}} \beta_l\,d\bar{\mu}(\realx{\text{MC}})-\int_{\bar{\RI}}\sum_{l\in L(r)} \beta_l\,d\bar{\mu}(\realx{\text{MC}})=\sum_{l\in L(r)\cup \{r\}} \beta_l-\sum_{l\in L(r)} \beta_l\\[10pt]
    &\hspace*{-25pt}=\beta_r\,,
\end{align*} 
whereby the statement holds for \cref{SuperDef}(iii).\\
Please note that, just as in case 2, this does not apply under (A.II$''$) by the same reasoning as for equality $(\star)$ applied to equality $(\star\star)$.\\

\begin{itemize}
     \item[\underline{Case 3}:] \textit{The regressor of interest is the interaction between categorical and metric variables or a regressor involved in (an) interaction(s) with other categorical and/or metric regressor(s) and one is interested in separating the effects.}
\end{itemize}    
In this case, \Cref{SuperDef}(iii) as specified in \cref{DivInt} applies and, with $b(j)$ defined as in case 2, $\regf$ is given by\\
\begin{equation}
\begin{aligned}
    \regf(x)= \regf^{(iii)}(\realx{I},\realx{\text{MC}}):=\beta_0+&\beta_1\realx{I_{met}}+\big(\beta_{2},...,\beta_{d_M+1},...,\beta_{d_I}\big)\big(\realx{I_{cat}}\odot \mathbf{m}\big)^\top+\\
        &\overset{m}{\underset{i=1}{\sum}}\beta_{i+d_I}\realx{M_i}+
        \overset{c}{\underset{j=1}{\sum}}\big(\beta_{b(j)+1},...,\beta_{b(j)+d_{C_j}}\big)\realx{C_j}^\top\,,
\end{aligned}    
\end{equation}
with $\realx{I_{cat}}\odot \mathbf{m}$ denoting the elementwise product of $\realx{I_{cat}}$ and the vector \linebreak $\mathbf{m}:=(\underset{\mathbf{m}_1,...,\mathbf{m}_{d_M}}{\underbrace{1,...,1}},\underset{\mathbf{m}_{d_M+1},...,\mathbf{m}_{d_I-1}}{\underbrace{\realx{I_{met}},...,\realx{I_{met}}}})$.\\\,\\

\noindent It then immediately follows that

\begin{equation}\label{c41}
\begin{aligned}
\measurei{1}(\theta)&=\bigintssss_{\RI}\bigintssss_{\bar{\RI}} \frac{\partial}{\partial\realx{I_{met}}}\bigg(\beta_0+\beta_1\realx{I_{met}}+
        \overset{m}{\underset{i=1}{\sum}}\beta_{i+d_I}\realx{M_i}+\\
        &\quad \hspace{4cm}\overset{c}{\underset{j=1}{\sum}}\big(\beta_{b(j)+1},...,\beta_{b(j)+d_{C_j}}\big)\realx{C_j}^\top\bigg)d\bar{\mu}(\realx{\text{MC}})d\mu(\realx{I_{met}})\\
&=\int_{\RI}\int_{\bar{\RI}}\beta_1 \,d\bar{\mu}(\realx{\text{MC}})d\mu(\realx{I_{met}})=\beta_1\, .
\end{aligned}
\end{equation}
Next, we redefine $L(r)$ as $L(r):=\Big\{i\in\{3,...,d_I\}\backslash\{r\}\big\vert \text{$X^I_{i-1}$ is included in $X^I_{r-1}$}\Big\}$, $r=2,...,d_I$, since, for any $r\in\{1,...,d_C\}$, $X^I_r$ is the $r$th entry of $X^I_{cat}$, but the $(r+1)$th entry of $X^I$.\\ \pagebreak

\noindent Then, the following holds $\forall r\in\{2,...,d_{M}+1\}$
\begin{equation*}
\begin{aligned}	
&\measurei{r}(\theta)= \int_\RI\int_{\bar{\RI}} \beta_0+\beta_1\realx{I_{met}}+\sum_{l\in L(r)\cup \{r\}} \beta_l+
        \overset{m}{\underset{i=1}{\sum}}\beta_{i+d_I}\realx{M_i}+\\
        &\quad \hspace{2cm}
        \overset{c}{\underset{j=1}{\sum}}\big(\beta_{b(j)+1},...,\beta_{b(j)+d_{C_j}}\big)\realx{C_j}^\top d\bar{\mu}(\realx{\text{MC}})d\mu(\realx{I_{met}})-
\int_\RI  \int_{\bar{\RI}} \beta_0+\beta_1\realx{I_{met}}+\\
        &\quad \hspace{2cm}\sum_{l\in L(r)} \beta_l+
        \overset{m}{\underset{i=1}{\sum}}\beta_{i+d_I}\realx{M_i}+\overset{c}{\underset{j=1}{\sum}}\big(\beta_{b(j)+1},...,\beta_{b(j)+d_{C_j}}\big)\realx{C_j}^\top d\bar{\mu}(\realx{\text{MC}})  d\mu(\realx{I_{met}})\\
&=\int_\RI\int_{\bar{\RI}} \beta_0+\beta_1\realx{I_{met}}+\sum_{l\in L(r)\cup \{r\}} \beta_l+
        \overset{m}{\underset{i=1}{\sum}}\beta_{i+d_I}\realx{M_i}+
        \overset{c}{\underset{j=1}{\sum}}\big(\beta_{b(j)+1},...,\beta_{b(j)+d_{C_j}}\big)\realx{C_j}^\top d\bar{\mu}(\realx{\text{MC}})-\\
&\,\,\quad  \int_{\bar{\RI}} \beta_0+\beta_1\realx{I_{met}}+\sum_{l\in L(r)} \beta_l+
        \overset{m}{\underset{i=1}{\sum}}\beta_{i+d_I}\realx{M_i}+
        \overset{c}{\underset{j=1}{\sum}}\big(\beta_{b(j)+1},...,\beta_{b(j)+d_{C_j}}\big)\realx{C_j}^\top d\bar{\mu}(\realx{\text{MC}})  d\mu(\realx{I_{met}})\,.
\end{aligned}
\end{equation*}
Combining the reasoning from case 1 and case 2.2, it follows that
\begin{equation}\label{c42}
\begin{aligned}	
\measurei{r}(\theta)&= \int_\RI\beta_r\,\,  d\mu(\realx{I_{met}})
=\beta_r\,.
\end{aligned}
\end{equation}
\,

\noindent Lastly, $\forall r\in\{d_{M}+2,...,d_I\}$
\begin{align}\label{c43}
         \frac{\partial\regf^{(iii)}}{\partial(\realx{I_{r-1}}\cdot \realx{I_{met}})}(\realx{I},\realx{\text{MC}})=\beta_r\,,
\end{align}
whereby $\measurei{r}(\theta)=\beta_r$ under all assumptions by the same reasoning as was used in case 1.\\

Finally, combining \cref{c41}, \cref{c42}, and \cref{c43} the statement also holds for \cref{SuperDef}(iii).\\
 \end{proof}

\subsection{Proofs of \Cref{UnifProp} and \Cref{UnifCor} plus \Cref{unfortunately not}}

\UnifProp* 
\begin{proof}
Under (A.I) and (A.II$'$), the regressor of interest is treated as an independent variable. Therefore, given that $\int_{\RI\times\bar{\RI}}\vert\regf(x)\vert d(\mu\times\bar{\mu})<\infty$ is required under all assumptions, the following holds by Fubini's theorem (see \cite{Fubini})
\begin{equation}\label{UPprood_ind}
    \begin{aligned}
        \measure(\theta)&=\int_\RI \bar{\eff}(\theta,x)d\mu(x)=\int_\RI\int_{\bar{\RI}}\eff(\theta,x)d\bar{\mu}(\realx{\text{MC}})d\mu(\realx{I})\\
        &=\int_{\bar{\RI}}\int_\RI\eff(\theta,x)d\mu(\realx{I})d\bar{\mu}(\realx{\text{MC}})\,.
    \end{aligned}
\end{equation}
For $\RI=(a_\RI,b_\RI)$ and $\mu$ being chosen or given as the measure associated with the $U(a_\RI,b_\RI)$ distribution it immediately follows that\begin{equation*}
    \begin{aligned}
\measure(\theta)&=\int_{\bar{\RI}}\big(b_\RI-a_\RI\big)^{-1}\cdot\Big\lbrack\regf(x)\Big\rbrack^{\realx{I}=b_\RI}_{\realx{I}=a_\RI} d\bar{\mu}(\realx{\text{\textup{MC}}})\\
            &=\big(b_\RI-a_\RI\big)^{-1}\cdot\Bigg(\int_{\bar{\RI}}\regf (\left . x \right\vert_{\realx{I}=b_\RI})
    -
    \regf (\left . x \right\vert_{\realx{I}=a_\RI})d\bar{\mu}(\realx{\text{\textup{MC}}})\Bigg)\, .
    \end{aligned}
\end{equation*}
\end{proof}
\begin{rems}\label{unfortunately not}
Under assumption (A.II$''$), the regressor of interest is not considered to be independent, and therefore \cref{UPprood_ind} does not hold. In fact, as mentioned in \cref{UnifSec}, this is why similar results to \cref{UnifProp} and \cref{UnifCor} may not be obtained under assumption (A.II$''$). The following proves this statement.\\
Let $f_X$ denote the joint density by which $\mu_X$ is defined. Furthermore, let $f_{I}$ denote the marginal density of the regressor of interest $X^I$ and $f_{X_{[\text{MC}]}\vert X^I}$ the conditional density of all other regressors given $X^I$. By the definition of marginal density, the following holds \begin{equation}\label{unf1}
    f_{X_{[\text{MC}]}\vert X^I}\big(\realx{\text{MC}},\realx{I}\big)=\frac{f_X\big(\realx{I},\realx{\text{MC}}\big)}{f_I(\realx{I})}\cdot\ind{f_I(\realx{I})>0}\,.
\end{equation}
Now, if $\mu$ is given as the measure associated with the $U(a_\RI,b_\RI) $ distribution, it follows that \begin{equation}\label{unf2}
  f_I(\realx{I})=\big(b_\RI-a_\RI\big)^{-1}\cdot\ind{[a_\RI,b_\RI]}(\realx{I}) \,,
\end{equation}
and, therefore, for $\lambda$ denoting the Lebesgue measure,
\begin{align*}
 \measure(\theta)&= \big(b_\RI-a_\RI\big)^{-1}\cdot\int_{a_\RI}^{b_\RI}\int_{\bar{\RI}}\frac{\partial\regf}{\partial \realx{I}} (x)\aiimeasure(d\realx{\text{\textup{MC}}}, \realx{I})d\lambda(\realx{I})\\
 &=\big(b_\RI-a_\RI\big)^{-1}\cdot\int_{a_\RI}^{b_\RI}\int_{\bar{\RI}}\frac{\partial\regf}{\partial \realx{I}} (x)f_{X_{[\text{MC}]}\vert X^I}\big(\realx{\text{MC}},\realx{I}\big)d\lambda\realx{\text{\textup{MC}}})d\lambda(\realx{I})\\
  &=\big(b_\RI-a_\RI\big)^{-1}\cdot\int_{a_\RI}^{b_\RI}\int_{\bar{\RI}}\frac{\partial\regf}{\partial \realx{I}} (x)f_X(x)\cdot\big(b_\RI-a_\RI\big)\cdot\ind{[a_\RI,b_\RI]}(\realx{I}) d\lambda(\realx{\text{\textup{MC}}})d\lambda(\realx{I})\,,
\end{align*}
where the last equality follows from combining \cref{unf1} and \cref{unf2}. \\Applying integration by parts, it now follows that
\begin{align*}
 \hspace*{-.75cm}\measure&(\theta)= \bigintssss_{\bar{\RI}}\bigg(
\Big\lbrack \regf(x)f_X(x) \Big\rbrack^{\realx{I}=b_\RI}_{\realx{I}=a_\RI}
  -  \int_{a_\RI}^{b_\RI}\regf(x)\frac{\partial f_X}{\partial \realx{I}}(x)\cdot\ind{[a_\RI,b_\RI]}(\realx{I}) d\lambda(\realx{I})\bigg)d\lambda(\realx{\text{\textup{MC}}}) \\
  &=\underset{(\ast)}{\underbrace{\big(b_\RI-a_\RI\big)^{-1}\Big(\int_{\bar{\RI}}\regf (\left . x \right\vert_{\realx{I}=b_\RI})\aiimeasure(d\realx{\text{\textup{MC}}},b_\RI)-
  \int_{\bar{\RI}}\regf (\left . x \right\vert_{\realx{I}=a_\RI})\aiimeasure(d\realx{\text{\textup{MC}}},a_\RI)\Big)}}-\\
  &\quad\quad \underset{(\ast\ast)}{\underbrace{\int_{\bar{\RI}}\int_{a_\RI}^{b_\RI}\regf(x)\frac{\partial f_X}{\partial \realx{I}}(x)\cdot\ind{[a_\RI,b_\RI]}(\realx{I}) d\lambda(\realx{I})d\lambda(\realx{\text{\textup{MC}}})}}\,.
\end{align*}
The statement "$\measure(\theta)=(\ast)$" would constitute a similar result to \cref{UnifProp} and \cref{UnifCor} under assumption (A.II$''$), however, term $(\ast\ast)$ can not w.l.o.g. be assumed to equal zero.\\
\end{rems}

\UnifCor*

\begin{proof}
These results follow immediately from basic transformations.

\end{proof}

\subsection{Proofs of \Cref{GelmanProp} and \Cref{KaufmanProp} plus \Cref{GelmanPerdoeRem}}

\GelmanProp*
\begin{proof} First, note that \cite{GelmanHill2007} define the average predictive difference \emph{per unit}, which is an interval denoted by $\big[u^{(lo)},u^{(hi)}\big]\in\mathcal{B}(\R)$. 
Combining equations (21.3) and (21.4) from \cite{GelmanHill2007}, the average predictive difference $B_u$ is given by \begin{equation}
 B_u\big(u^{(lo)},u^{(hi)}\big)=\frac{1}{n}\sum_{i=1}^n\frac{\EW\big[y\vert u^{(hi)} v_i,\theta\big]-\EW\big[y\vert u^{(lo)} v_i,\theta\big]}{u^{(hi)}-u^{(lo)}}\,,
\end{equation}
with $v_i$ denoting the $i$th observation of all regressors in $M\cup C$.\\
Let the sequence $\mathcal{D}_X^{\textup{[MC]}}$ containing values in $\R^{p-d_I}$ denote the sequence of observation-vectors contained in $\mathcal{D}_X$, but with the elements mapped to $I$ removed. 
For the setting in this proposition, with $\mu_X$ given as in \cref{EDdef} and $\bar{\mu}$ the marginal distribution obtained by marginalizing out the regressor of interest and $\bar{\RI}$ given by the set $\{\mathbf{x}^{\scaleto{\text{[MC]}}{5pt}}_i\vert \mathbf{x}^{\scaleto{\text{[MC]}}{5pt}}_i\in \mathcal{D}_X^{\textup{[MC]}}\}$, it now immediately follows that
\begin{align*}
  B_u\big(u^{(lo)},u^{(hi)}\big)&=\Big(u^{(hi)}-u^{(lo)}\Big)^{-1}\cdot\frac{1}{n}\sum_{i=1}^n\Big(\regf(\realx{I}=u^{(hi)},\mathbf{x}^{\scaleto{\text{[MC]}}{5pt}}_i)-\regf(\realx{I}=u^{(lo)},\mathbf{x}^{\scaleto{\text{[MC]}}{5pt}}_i)\Big)\\
  &=\Big(u^{(hi)}-u^{(lo)}\Big)^{-1}\cdot\Bigg(\int_{\bar{\RI}} \regf (\left . x \right\vert_{\realx{I}=u^{(hi)}})
    -
    \regf (\left . x \right\vert_{\realx{I}=u^{(lo)}})d\bar{\mu}(\realx{\text{\textup{MC}}})\Bigg)\\
  &\overset{\text{prop.\ref{UnifProp}}}{=}\measure(\theta)
\end{align*}
under assumption (A.II$'$), proving the statement.
\begin{n}\if1\Preprint{\enlargethispage{11pt}}\fi
Beneath the definition of average predictive difference, \cite{GelmanHill2007} add "\emph{In practice, one must also average over $\theta$ (or plug in a point estimate)}". `Averaging over $\theta$' in the sense of averaging over draws from either the posterior distribution, resulting
from Bayesian inference, or the $N(\hat{\theta},\Sigma_{\hat{\theta}})$ distribution, with frequentist point estimate $\hat{\theta}$ and covariance matrix $\Sigma_{\hat{\theta}}$, would constitute the point estimate we defined in \cref{UncertaintyDef}.
\end{n}
\end{proof}
\begin{rems}\label{GelmanPerdoeRem} The extensions of $B_u$ given above that are proposed in
\cite{Gelman_AveragePredictiveComparison} may also be incorporated into the current framework, as the following details for each identified extension. Throughout this remark, we only consider generalized marginal effects $\measure$ \emph{without} the option of separately quantifying main and interaction effects, as this approach corresponds to \cite{Gelman_AveragePredictiveComparison}, who propose the following for dealing with interaction terms: "When defining the predictive comparison [...], we must alter [an] input wherever it occurs in the model".
\begin{enumerate}[label=\ref{GelmanPerdoeRem}.\alph*)]
    \item For the predictive comparison $\delta_u\big(u^{(1)}\longrightarrow u^{(2)},v,\theta\big)$ defined by \cite[eq.(1)]{Gelman_AveragePredictiveComparison}, the following holds within the current framework.  \begin{enumerate}[label=\ref{GelmanPerdoeRem}.a.\roman*)]
    \item For a metric regressor of interest, let $\measure$ denote the generalized marginal effect under assumption (A.II$\,'$), with $\mu$ chosen as the measure associated with the $U\big(u^{(1)},u^{(2)}\big)$ distribution, for $u^{(1)},u^{(2)}\in\R$, and $\bar{\mu}=\delta_{\{v\}}$ for some fixed \mbox{$v\in\R^{p-d_I}$,} then \begin{equation}\label{apcN}
        \delta_u\big(u^{(1)}\longrightarrow u^{(2)},v,\theta\big)\,\,\hat{=}\,\,\measure(\theta)\,.
    \end{equation}
    \item For a categorical regressor of interest, let $u^{(1)},u^{(2)}$ denote two different categories, including the reference-category, of $X^I$. Furthermore, let $\measure$ again denote the generalized marginal effect under assumption (A.II$\,'$), with $\bar{\mu}=\delta_{\{v\}}$ for some fixed \mbox{$v\in\R^{p-d_I}$,} then \begin{equation}\label{apcC}
        \hspace*{-1.5cm}\delta_u\big(u^{(1)}\longrightarrow u^{(2)},v,\theta\big)\,\,\hat{=}\begin{cases}\measurei{j}(\theta)\,,&\begin{array}{l}\text{for $u^{(2)}$ denoting the $j$th- and}\\
        \text{$u^{(1)}$ the reference-category}\end{array}\\[15pt]
        \measurei{j}(\theta)-\measurei{l}(\theta)\,,&\begin{array}{l}\text{for $u^{(2)}$ denoting the $j$th- and}\\
        \text{$u^{(1)}$ the $l$th- category}\end{array}\\[15pt]
        -\measurei{j}(\theta)\,,&\begin{array}{l}\text{for $u^{(2)}$ denoting the reference- }\\
        \text{and $u^{(1)}$ the $j$th- category,}\end{array}
        \end{cases}
    \end{equation}
    given that $(u^{(1)}-u^{(2)})$ is set to $1$ for a categorical regressor of interest by \cite{Gelman_AveragePredictiveComparison}. 
    \end{enumerate}
\end{enumerate}
    The average predictive comparisons for \emph{Numerical Inputs} and \emph{Unordered Categorical Inputs}, given by \cite[eqs.(2),(5)]{Gelman_AveragePredictiveComparison} and \cite[eqs.(3),(4)]{Gelman_AveragePredictiveComparison}, respectively, is simply translated into this work's framework by averaging over the quantities from equations \ref{apcN} and \cref{apcC}, respectively, equivalently to how they propose averaging over $\delta_u\big(u^{(1)}\longrightarrow u^{(2)},v,\theta\big)$.

Having transferred the above quantities from \cite{Gelman_AveragePredictiveComparison} into our framework, it immediately follows that our framework also applies to \emph{Variance Components Models}, \emph{Inputs That Are Not Always Active}, and \emph{Nonmonotonic Functions} at the very least by proceeding exactly as suggested in sections 3.4., 3.6., and 3.7., respectively, of \cite{Gelman_AveragePredictiveComparison}.

\end{rems}\,

\noindent In order to prove \cref{KaufmanProp}, we now first require the following lemma.
\begin{lemma}\label{KaufmanLemma}
Let $h:\R^d\longrightarrow\R$, $d\in\N_{>0}$, be a continuous and bounded function. For any probability measure $\mu$ on $\Big(\R^d,\mathcal{B}\big(\R^d\big)\Big)$ the following holds \begin{equation}\label{LemEq}
    \exists\, a\in\R^d:\quad \int_{\R^d}h(x)d\mu(x)=h(a)\,.
\end{equation}
\end{lemma}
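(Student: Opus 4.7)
The plan is to use a mean-value-style argument: first bound the integral by $\inf h$ and $\sup h$, then use the connectedness of $\R^d$ together with the intermediate value theorem to realize the integral as a value of $h$, handling the attainment of the endpoints by a separate ``equality forces a.s.\ equality'' argument.

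First, set $c:=\int_{\R^d} h(x)\,d\mu(x)$. Since $\mu$ is a probability measure and $h$ is bounded, the monotonicity of the integral immediately gives $\inf_{x\in\R^d} h(x)\le c\le \sup_{x\in\R^d} h(x)$. The remainder of the argument splits into the strict and non-strict cases.

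In the strict case $\inf h < c < \sup h$, I would exploit that $\R^d$ is path-connected and $h$ is continuous. By the definitions of infimum and supremum, there exist points $x_-,x_+\in\R^d$ with $h(x_-) < c$ and $h(x_+) > c$. Restricting $h$ to the line segment from $x_-$ to $x_+$ gives a continuous function $\R\to\R$ that takes values both below and above $c$, so the classical intermediate value theorem produces a point $a$ on that segment with $h(a)=c$, which proves \cref{LemEq} in this case.

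For the boundary cases, say $c=\sup h$, I would argue that $\int_{\R^d}\big(\sup h - h(x)\big)\,d\mu(x)=0$. Since the integrand is non-negative and continuous, this forces $\mu\big(\{x:h(x)<\sup h\}\big)=0$, so the set $\{x:h(x)=\sup h\}$ carries full $\mu$-mass and is in particular non-empty; any point in this set serves as the desired $a$. The case $c=\inf h$ is handled symmetrically by considering $\int(h-\inf h)\,d\mu=0$. The main obstacle is this boundary case: the image $h(\R^d)$ need not be closed (as with $h=\arctan$), so one cannot simply invoke IVT on the closed interval $[\inf h,\sup h]$; the ``integral equals supremum implies attainment'' step is what rescues the argument and is the only nontrivial observation beyond IVT.
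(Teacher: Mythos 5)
Your proof is correct and follows the same basic mean-value strategy as the paper's: bound the integral between $\inf h$ and $\sup h$, then use the connectedness of $\R^d$ and the intermediate value theorem to realize that value in the image of $h$. However, your treatment of the boundary cases is a genuine improvement over the paper's own argument. The paper asserts that the image of a continuous bounded function on $\R^d$ is exactly the closed interval $[b_l,b_h]$ between its bounds, which is false in general --- as your $\arctan$ example shows, the image need only be an interval, not a closed one --- so the paper's proof has a gap precisely in the case where the integral would equal an unattained supremum or infimum. Your observation that $\int_{\R^d}\bigl(\sup h - h(x)\bigr)\,d\mu(x)=0$ together with non-negativity of the integrand forces $\mu\bigl(\{x : h(x)=\sup h\}\bigr)=1$, hence non-emptiness of that set, is exactly what is needed to close this gap: it shows the integral can coincide with the supremum (or, symmetrically, the infimum) only when that extremum is actually attained, so the integral always lies in the image of $h$ even when that image is not closed. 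In short, your proposal is a corrected and complete version of the paper's argument rather than a departure from it.
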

\begin{proof}
Let $b_l,b_h\in\R$ denote the boundaries of $h$, i.e. $b_l\leq h(x)\leq b_h$ $\forall x\in\R^d$. It then follows by the \emph{intermediate value theorem}, if $d=1$, and the \emph{main theorem of connectedness}, if $d\geq 2$, that the image of $h$ is exactly equal to the closed interval between its boundaries, i.e. $\mathcal{I}(h)=[b_l,b_h]$. Furthermore, since it clearly holds that\begin{align*}
 b_l=\int_{\R^d}b_l\,d\mu(x)  \leq\int_{\R^d}h(x)d\mu(x)\leq \int_{\R^d}b_h\,d\mu(x)=b_h\,,  
\end{align*} we also have that the value of the integral $\int_{\R^d}h(x)d\mu(x)$ lies in the interval $[b_l,b_h]$.

Combining these two statements, it immediately follows that there exists some $a\in\R^d$ so that \if1\JASA{\linebreak}\fi $\int_{\R^d}h(x)d\mu(x)=h(a)$.
\end{proof}

\KaufmanProp*
\begin{proof}
 In order to conform with the notation of \cite{Kaufman1996}, let the metric regressor of interest be the $j$th regressor, i.e. $\realx{I}=x_j$, $j\in\{1,...,p\}$, and let $b_j$ denote the corresponding coefficient in the linear predictor $\eta$.\\ Equation (5) of \cite{Kaufman1996} then defines the \emph{instantaneous slope} as $\partial P_j:=\partial P/\partial X_j$, with $P$ denoting the equivalent of $\regf$ in this setting. This is clearly equal to $s$ as defined in \cref{slope1}, however, equation (5) of \cite{Kaufman1996} also specifies that $\partial P_j=b_jP(1-P)$, limiting the definition to cases in which the regressor of interest is not involved in any interactions.\\
 Furthermore, equation (6) of  \cite{Kaufman1996} defines the \emph{predicted change in probability} as \begin{equation}
     \Delta P_j= \Bigg(1+\exp\bigg(-\Big(\log\bigg(\frac{P_{ref}}{1-P_{ref}}\bigg)+\frac{1}{2}b_j\Big)\bigg)\Bigg)^{-1}-
     \Bigg(1+\exp\bigg(-\Big(\log\bigg(\frac{P_{ref}}{1-P_{ref}}\bigg)-\frac{1}{2}b_j\Big)\bigg)\Bigg)^{-1}
     \,\,,
 \end{equation}

\noindent where $P_{ref}$ is not definitively defined. \cite{Kaufman1996} does suggest the quantities $P_{ref}=$\linebreak $n^{-1}\sum_{\mathbf{x}_i\in\mathcal{D}_X}\regf(\mathbf{x}_i)$, in the notation of \cref{EDdef}, and $P_{ref}=\regf(\bar{x})$, where $\bar{x}$ denotes the observed sample mean, as reasonable choices, however, just as in \cite{Williams2012}, the latter is not clearly defined for cases where $C$ is not an empty set (see \cref{existingMEsec}). 

Note that, considering that $\regf$ is continuous and bounded in this proposition's setting, it follows by \cref{KaufmanLemma} that for any $\bar{\mu}$ chosen according to assumptions (A.I) or (A.II$'$)
\begin{align}\label{LemmaUse1}
    \exists\, a\in\R^{p}\,:\quad \regf(a)=\int_{\bar{\RI}\times\RI}\regf(x)d(\mu\times\bar{\mu})
\end{align}
as well as, for any fixed $\realx{I},k\in\R$
\begin{align}\label{LemmaUse2}
    \exists\, a\in\R^{p-1}\,: \regf\big(\realx{I}+k,a\big)-\regf\big(\realx{I},a\big)=\int_{\bar{\RI}}\regf\big(\realx{I}+k,\realx{\text{MC}}\big)-\regf\big(\realx{I},\realx{\text{MC}}\big)\,d\bar{\mu}(\realx{\text{MC}})\,.
\end{align}
Given \cref{LemmaUse1}, we will extend the `reasonable choices' suggested by \cite{Kaufman1996} to considering, for any choice of $x^{ref}\in\R^p$, $P_{ref}=\regf(x^{ref})$ and show that, under assumptions (A.I) and (A.II$'$), $\measure$ may be specified to correspond to $\Delta P$ for any $x^{ref}\in\R^p$.

\noindent First, we denote by $x^{ref}_{[I]}$ and $x^{ref}_{[\text{MC}]}$, the entries of $x^{ref}$ representing $\realx{I}$ and $\realx{\text{MC}}$, respectively, and define $\realx{I}^{(lo)}:=\realx{I}^{ref}-0.5$ and $\realx{I}^{(hi)}:=\realx{I}^{ref}+0.5$. Since $\eta$ contains no interactions involving $\realx{I}$, it holds that
\begin{equation}\begin{aligned}\label{KDelta1}
    \regf(\realx{I}^{(hi)},\realx{\text{MC}}^{ref})&=\Bigg(1+\exp\bigg(-\Big(\eta(\realx{I}^{ref},\realx{\text{MC}}^{ref})+\frac{1}{2}b_j\Big)\bigg)\Bigg)^{-1}\\
    &=\Bigg(1+\exp\bigg(-\Big(\log\Big(\exp\big(\eta(x^{ref})\big)\Big)+\frac{1}{2}b_j\Big)\bigg)\Bigg)^{-1}\\
    &=\Bigg(1+\exp\bigg(-\Big(\log\bigg(\frac{\regf(x^{ref})}{1-\regf(x^{ref})}\bigg)+\frac{1}{2}b_j\Big)\bigg)\Bigg)^{-1}
\end{aligned}\end{equation}
and, equivalently, 
\begin{equation}\begin{aligned}\label{KDelta2}
   \regf(\realx{I}^{(lo)},\realx{\text{MC}}^{ref})=\Bigg(1+\exp\bigg(-\Big(\log\bigg(\frac{\regf(x^{ref})}{1-\regf(x^{ref})}\bigg)-\frac{1}{2}b_j\Big)\bigg)\Bigg)^{-1}\,.
\end{aligned}\end{equation}
Combining \cref{KDelta1}, \cref{KDelta2}, and \cref{LemmaUse2} gives the following for the proper choice of a probability measure $\bar{\mu}$ on $\Big(\R^{p-1},\mathcal{B}\big(\R^{p-1}\big)\Big)$ corresponding to $x^{ref}$
\begin{align}\label{KDeltaFinal}
\Delta P_j=\int_{\supp(\bar{\mu})}\regf(\realx{I}^{(hi)},\realx{\text{MC}})-\regf(\realx{I}^{(lo)},\realx{\text{MC}})\,d\bar{\mu}(\realx{\text{MC}})\,.
\end{align}
By \cref{UnifProp}, since $\big\vert \realx{I}^{(hi)}-\realx{I}^{(lo)}\big\vert=1$, \cref{KDeltaFinal} clearly corresponds to $\measure$ under either assumption (A.I) or assumption (A.II$'$) with $\mu$ chosen as the measure associated with the $U(\realx{I}^{ref}-1/2,\realx{I}^{ref}+1/2)$ distribution.  

\noindent Therefore, it follows that for any choice of $P_{ref}$, there exists some $u\in\R$ and choice of $\bar{\mu}$ under both assumptions (A.I) and (A.II$'$) so that $\measure$ to corresponds to $\Delta P$ for $\mu$ chosen as the measure associated with the $U(u-1/2,u+1/2)$ distribution, proving the statement.
\end{proof}

\subsection{Proof of \Cref{MLProp} plus \Cref{ScholbeckProp}}

\MLProp*
\begin{proof}\,
 Firstly, note that in the notation of \cite{ALEml} $\mathcal{S}\hat{=}\supp(\mu_X)$, with $\mu_X$ denoting the joint distribution of all regressors. Additionally, for the case of the regressor of interest, denoted by $X_j$ in the setting of this proposition, being metric, $f^j(\boldsymbol{\cdot})\hat{=}\eff(\hat{\theta},\boldsymbol{\cdot})$. Therefore, it immediately follows that requirements (a) and (b) from \cite[theorem 1]{ALEml} are met in the setting of this proposition. Furthermore,  as  $f^j(\boldsymbol{\cdot})=\eff(\hat{\theta},\boldsymbol{\cdot})$, it evidently holds that $\mathbb{E}[f^j(X_j,\mathbf{X}_{\backslash j})\vert X_j]\hat{=}\bar{\eff}(\hat{\theta},x)$ under assumption (A.II$''$) in the current setting. Since, $\forall \realx{I}\in\supp(\mu_j)$, $\eff(\hat{\theta},x)$ is continuous in $\realx{I}$ on $\supp(\mu_j)$ as a finite integral over a continuous function , requirement (c) from \cite[theorem 1]{ALEml} is also met. It then follows for any $z\in\supp(\mu_j)$ that \begin{align*}
        g_{j,ALE}(z)&=\int_{\min\{\sup(\mu_j)\}}^z \bar{\eff}(\hat{\theta},x)\,\,d x_j\\
        &=\underset{=\measure(\hat{\theta}) \text{ for the specifications made in the setting of this proposition}}{\underbrace{\frac{1}{\text{\small $\big(z-\min\{\supp(\mu_{j})\}\big)$}}\int_{\min\{\sup(\mu_j)\}}^z \bar{\eff}(\hat{\theta},x)\,\,d \realx{I}\,\,}}
        \cdot\big(z-\min\{\supp(\mu_{j})\}\big)\\[5pt]
        &=\measure(\hat{\theta})\cdot\big(z-\min\{\supp(\mu_{j})\}\big)\,,
    \end{align*}
proving the statement.
\end{proof}

\begin{rems}\label{ScholbeckProp}
 
Within the parametric setting of \cref{setting}, all versions of (forward) marginal effects defined in equations (2)-(5) of \cite{BischlHeumann} may be expressed in terms of individualized expectation by writing
   \begin{equation}\label{MLProp_fME}
        e\Big(\hat{\theta}\big\vert\{\mathbf{a}^{(1)}\},\{\mathbf{b}^{(1)}\},\delta_{\{\mathbf{a}^{(1)}\}},\delta_{\{\mathbf{b}^{(1)}\}}\Big)-e\Big(\hat{\theta}\big\vert \{\mathbf{a}^{(2)}\},\{\mathbf{b}^{(2)}\},\delta_{\{\mathbf{a}^{(2)}\}},\delta_{\{\mathbf{b}^{(2)}\}}\Big)\,,
    \end{equation}
    with $\mathbf{a}^{(1)},\mathbf{a}^{(2)}\in\R^{d_I}$ and $\mathbf{b}^{(1)},\mathbf{b}^{(2)}\in\R^{p-d_I}$. The choice of  $\mathbf{a}^{(1)},$ $\mathbf{a}^{(2)},$ $\mathbf{b}^{(1)},$ and $\mathbf{b}^{(2)}$ determines which of the definitions from \cite{BischlHeumann} \cref{MLProp_fME} is equal to. More specifically, all those equations have the form $\widehat{f}(x^\ast)-\widehat{f}(x)$, with $\widehat{f}$ in the notation of \cite{BischlHeumann} corresponding to $\regF{\hat{\theta}}$
 within the setting of \cref{setting}. Within the parametric setting of \cref{setting}, therefore,
\cref{MLProp_fME} follows immediately from the fact that for any $\mathbf{a}\in\R^{d_I}$ and $\mathbf{b}\in\R^{p-d_I}$
 \begin{align}\label{ScholbeckproofBasic}
      e\big(\hat{\theta}\big\vert\mathbf{a},\mathbf{b},\delta_{\{\mathbf{a}\}},\delta_{\{\mathbf{b}\}}\big)=\regF{\hat{\theta}}\big(\realx{I}\text{=}\mathbf{a},\realx{\text{MC}}\text{=}\mathbf{b}\big)\,.
\end{align}
Note that, for equations (2), (3), and (4) from \cite{BischlHeumann}, only $\mathbf{a}$ is different between the two terms being subtracted from each other, while for \emph{Forward Differences with Multivariate Feature Value Changes} defined in Scholbeck et al.'s equation (5) both $\mathbf{a}$ and those entries of $\mathbf{b}$ that represent $s-1$ regressors which are not directly of interest differ.
Furthermore, within the parametric setting of \cref{setting}, the forward marginal effect (fAMEs) defined by \cite{BischlHeumann} may be related to $\measure$ defined in the current work as follows.

Let $\mathcal{D}_X=\{\mathbf{x}_i\}_{i=1}^n$ denote the sequence on which the effect size is to be evaluated, not necessarily the original observations, and $\mathcal{D}_{\textup{[I]}}$ the set of unique values $\realx{I}$ takes in $\R$ or $\standB{d_I}$ within $\mathcal{D}_X$.  Furthermore, $\forall i\in\{1,...,\vert\mathcal{D}_{\textup{[I]}}\vert\},$ $h\in\R$, let $n_{\mathcal{D}_{\textup{[$I_i$]}}}$ denote the number of observations in $\mathcal{D}_X$ for which the entry that is mapped to $I$ takes a value that equals $\realx{I}^{(i)}$, the $i$th unique observation of the regressor of interest; and 
    let $\tilde{\Delta}_s^{i,h}$ denote the proposed effect size measure under assumption (A.II$\,'$), with $\mu_X$ representing the joint empirical distribution on the subsequence of $\mathcal{D}_X$ containing those elements where the observation-vector entry mapped to $I$ is equal to $\realx{I}^{(i)}$ and $\mu$ is chosen as the measure associated with the $U(\realx{I}^{(i)},\realx{I}^{(i)}+h)$ distribution for a metric regressor of interest. For a categorical regressor of interest, $\mu$ does not have to be chosen and we denote this version of the proposed effect size measure simply by $\tilde{\Delta}_s^i$.
    For the fAME as defined by \cite{BischlHeumann}, it then holds that in the case of one metric regressor of interest
    \begin{align*}\label{fAMEa}\addtocounter{equation}{1}
        \text{fAME}_{\mathcal{D},h}=
          h\cdot\sum_{i=1}^{\vert\mathcal{D}_{\textup{[I]}}\vert}\frac{n_{\mathcal{D}_{\textup{[$I_i$]}}}}{n}\cdot\tilde{\Delta}_s^{i,h}(\hat{\theta})\,,\nonumber\tag{\theequation a}
    \end{align*}
    and in the case of one categorical regressor of interest
    \begin{equation*}\label{fAMEb}
        \text{fAME}_\mathcal{D}=\big(n-n_{\textup{ref}}\big)^{-1}\cdot
         \sum_{i=1}^{d_I}n_{\mathcal{D}_{\textup{[$I_i$]}}}\cdot\big(-\tilde{\Delta}_s^{i[i]}(\hat{\theta})\big)\tag{\theequation b}\,,
    \end{equation*} 
    where $n_{\textup{ref}}$ denotes the number of observations of the reference category for the regressor of interest in $\mathcal{D}_X$, i.e. the number of elements with $\realx{I}=\mathbf{0}\in\R^{d_I}$.

\begin{proof}
This proof is split into two parts, with the first showing \cref{fAMEa} and the second \cref{fAMEb}. Throughout, we again denote by $\mathcal{D}_X^{\textup{[MC]}}$ the sequence with elements in $\R^{p-d_I}$ of observation-vectors contained in $\mathcal{D}_X$, but with the elements mapped to $I$ removed; and, $\forall i\in\{1,...,\vert\mathcal{D}_{\textup{[I]}}\vert\}$, by $\mathcal{D}^{(i)}_{\textup{[MC]}}$ the subsequence of $\mathcal{D}_X^{\textup{[MC]}}$ with values in $\R^{p-d_I}$ of those observations in $\mathcal{D}_X^{\textup{[MC]}}$ where the removed entry was equal to the $i$th unique observation of the regressor of interest. Note that, for categorical regressors of interest $\vert\mathcal{D}_{\textup[I]}\vert=d_I+1$, and we consider the reference category, i.e. $\mathbf{0}$, the $(d_I+1)$th unique observation in order to leave out the corresponding observations contained in $\mathcal{D}^{(d_I+1)}_{\textup{[MC]}}$ as \cite{BischlHeumann} suggest. Throughout, we denote the number of elements in  $\mathcal{D}^{(i)}_{\textup{[MC]}}$ by $n_{\mathcal{D}_{\textup{[$I_i$]}}}$.\\

\noindent\textbf{Step 1:} Given this propositions setting, the following holds for the regressor of interest being metric
\begin{align}\label{proof_fAMEa}
     h\cdot\sum_{i=1}^{\vert\mathcal{D}_{\textup{[I]}}\vert}&\frac{n_{\mathcal{D}_{\textup{[$I_i$]}}}}{n}\tilde{\Delta}_s^{i,h}(\hat{\theta})\overset{\substack{prop.\ref{UnifProp},\\cor.\ref{UnifCor}}}{=} h\cdot\sum_{i=1}^{\vert\mathcal{D}_{\textup{[I]}}\vert}\frac{n_{\mathcal{D}_{\textup{[$I_i$]}}}}{n}\cdot\frac{1}{h}\cdot\Big(e\big(\hat{\theta}\big\vert \{\realx{I}^{(i)}+h\},\mathcal{D}_X^{\textup{[MC]}},\delta_{\{\realx{I}^{(i)}+h\}},\aiimeasure(\boldsymbol{\cdot},\realx{I}^{(i)}+h)\big)-\nonumber\\
     &\,\hspace*{7cm}
        e\big(\hat{\theta}\big\vert \{\realx{I}^{(i)}\},\mathcal{D}_X^{\textup{[MC]}},\delta_{\{\realx{I}^{(i)}\}},\aiimeasure(\boldsymbol{\cdot},\realx{I}^{(i)})\big)\Big)\nonumber\\
     =&n^{-1}\cdot\sum_{i=1}^{\vert\mathcal{D}_{\textup{[I]}}\vert}n_{\mathcal{D}_{\textup{[$I_i$]}}}\cdot\Big(
    \frac{1}{n_{\mathcal{D}_{\textup{[$I_i$]}}}}\cdot\sum_{v\in\mathcal{D}^{(i)}_{\textup{[MC]}}}\regF{\hat{\theta}}(\realx{I}=\realx{I}^{(i)}+h,\realx{\text{MC}}=v)
     -\\
     &\,\hspace*{3cm}
         \frac{1}{n_{\mathcal{D}_{\textup{[$I_i$]}}}}\cdot\sum_{v\in\mathcal{D}^{(i)}_{\textup{[MC]}}}\regF{\hat{\theta}}(\realx{I}=\realx{I}^{(i)},\realx{\text{MC}}=v)\Big)\nonumber\\
         &=\frac{1}{n} \cdot \sum_{i=1}^{\vert\mathcal{D}_{\textup{[I]}}\vert}\Big(
    \sum_{v\in\mathcal{D}^{(i)}_{\textup{[MC]}}}\regF{\hat{\theta}}(\realx{I}=\realx{I}^{(i)}+h,\realx{\text{MC}}=v)
     -\regF{\hat{\theta}}(\realx{I}=\realx{I}^{(i)},\realx{\text{MC}}=v)\Big)
         \,,
\end{align}
where the last equation follows from the fact that $\mathcal{D}^{(i)}_{\textup{[MC]}}$ is a finite sequence. In the setting of our work, the last term of \cref{proof_fAMEa} is also equal to the definition of fAME for one metric regressor of interest from \cite{BischlHeumann}, proving \cref{fAMEa}.\\

\noindent\textbf{Step 2:} Given this propositions setting, the following holds in the case of the regressor of interest being categorical $\forall i\in \{1,...,d_I\}$
\begin{equation}\label{proof_fAMEb_1}
\begin{aligned}
    \tilde{\Delta}_s^{i[i]}(\hat{\theta})&=\sum_{v\in\mathcal{D}^{(i)}_{\textup{[MC]}}}\Big(\regF{\hat{\theta}}\big(\realx{I}=\standV{i},\realx{\text{MC}}=v\big)- \regF{\hat{\theta}}\big(\realx{I}=\mathbf{0},\realx{\text{MC}}=v\big)\Big)\\
    \Leftrightarrow -\tilde{\Delta}_s^{i[i]}(\hat{\theta})&=\sum_{v\in\mathcal{D}^{(i)}_{\textup{[MC]}}}\Big(\underset{\substack{\hat{=}\textit{Observation-Wise Categorical Marginal Effect}\\\text{as defined in \cite[equation (4)]{BischlHeumann} }\forall v\in\mathcal{D}^{(i)}_{\textup{[MC]}} }}{\underbrace{\regF{\hat{\theta}}\big(\realx{I}=\mathbf{0},\realx{\text{MC}}=v\big)-\regF{\hat{\theta}}\big(\realx{I}=\standV{i},\realx{\text{MC}}=v\big)}}\Big)\,.
\end{aligned}
\end{equation}
It immediately follows that
\begin{align*}
    \big(n-&n_{\textup{ref}}\big)^{-1}\cdot
         \sum_{i=1}^{d_I}n_{\mathcal{D}_{\textup{[$I_i$]}}}\cdot\big(-\tilde{\Delta}_s^{i[i]}(\hat{\theta})\big)\\
         =&\big(n-n_{\textup{ref}}\big)^{-1}\cdot
         \sum_{i=1}^{d_I}n_{\mathcal{D}_{\textup{[$I_i$]}}}\cdot\frac{1}{n_{\mathcal{D}_{\textup{[$I_i$]}}}}\cdot\sum_{v\in\mathcal{D}^{(i)}_{\textup{[MC]}}}\Big(\regF{\hat{\theta}}\big(\realx{I}=\mathbf{0},\realx{\text{MC}}=v\big)-
      \regF{\hat{\theta}}\big(\realx{I}=\standV{i},\realx{\text{MC}}=v\big)\Big)\\
         =&\frac{1}{n-n_{\textup{ref}}} \cdot \sum_{i=1}^{d_I}
      \Big(\sum_{v\in\mathcal{D}^{(i)}_{\textup{[MC]}}}\regF{\hat{\theta}}\big(\realx{I}=\mathbf{0},\realx{\text{MC}}=v\big)-
      \regF{\hat{\theta}}\big(\realx{I}=\standV{i},\realx{\text{MC}}=v\big)\Big)\,,
\end{align*}
which is equal to the definition of fAME for one categorical regressor of interest as specified in section 5.1. of \cite{BischlHeumann}, proving \cref{fAMEb}.

\end{proof}
\end{rems}

\subsection{Proof of \Cref{distlem} and \Cref{MetIntCor}}
\distlem*
	\begin{proof}
	Consider two probability spaces $(\Omega_1,\mathcal{F}_1,P_1)$ and $(\Omega_2,\mathcal{F}_2,P_2)$, with $X:\Omega_1\longrightarrow\R$ and  $Y:\Omega_2\longrightarrow\R$.  Note that, formally, for any set $A\in\mathcal{B}(\R)$,
		$\mu_X(A):=P_1(\{\omega\in\Omega_1\vert X(\omega)\in A\})$ and $\mu_Y(A):=P_2(\{\omega\in\Omega_2\vert Y(\omega)\in A\})$, yet we write $P(X\in A)$ and $P(Y\in A)$, respectively, to ease notation. \\
		The measure $\mu_Z$ associated with the distribution of the random variable \[
		Z:\Omega_1\times\Omega_2\longrightarrow\R,\quad Z\big((\omega_1,\omega_2)\big)\mapsto X(\omega_1)Y(\omega_2)
		\] evidently fulfills requirements (I1) w.r.t. $\mu_X$ and $\mu_Y$. 
		Next, let $\mu_{X,Y}$ denote the measure associated with the two-dimensional random variable $(X,Y):\Omega_1\times\Omega_2\longrightarrow\R^2$. Given that $X$ and $Y$ are independent, it follows that $\mu_{X,Y}$ is equal to the product measure $\mu_{X}\times\mu_Y$, i.e. $\forall A,B\in\mathcal{B}(\R): \mu_{X,Y}(A\times B)=\mu_X(A)\mu_Y(B)$ and therefore 
\begin{align*}
				\mu_Z(M)=P(XY\in M)&=\int_{\R^2}\ind{xy\in M}d\mu_{X,Y}(x,y)=\int_\R\int_\R\ind{xy\in M}d\mu_X(x)d\mu_Y(y)\\&=\int_\R\int_{M/z}d\mu_X(x)d\mu_Y(z)=\int_\R\int_{M/z}d\mu_Y(y)d\mu_X(z)
			\end{align*}
$\forall M\in\mathcal{B}(\R)$, by which $\mu_Z$ fulfills requirement (I3) w.r.t. $\mu_X$ and $\mu_Y$. \\
Furthermore, for any set $A\in\mathcal{B}(\R)$, the following holds
\begin{align*}
	\mu_Z(A)=&\int_\R\int_\R\ind{xy\in A}d\mu_X(x)d\mu_Y(y)\\=&\int_{\R\cap\supp(\mu_Y)}\int_{\R\cap\supp(\mu_X)}\ind{xy\in A}d\mu_X(x)d\mu_Y(y)+\\&
	\int_{\R\cap\supp(\mu_Y)}\int_{\R\backslash\supp(\mu_X)}\ind{xy\in A}d\mu_X(x)d\mu_Y(y)+\\&\int_{\R\backslash\supp(\mu_Y)}\int_{\R\cap\supp(\mu_X)}\ind{xy\in A}d\mu_X(x)d\mu_Y(y)+\\&\int_{\R\backslash\supp(\mu_Y)}\int_{\R\backslash\supp(\mu_X)}\ind{xy\in A}d\mu_X(x)d\mu_Y(y)\,.
\end{align*}
Since all but the first summand of the last term are evidently equal to zero, we have now\vadjust{\vspace{2pt}}\nolinebreak\linebreak established that $\mu_Z(A)=0\,\,\, \forall A\in\mathcal{B}(\R)$ with $A\not\subseteq\regprod{\supp(\mu_X)}{\supp(\mu_Y)}$. Lastly, for $B_\varepsilon(x)$\vadjust{\vspace{3pt}}\nolinebreak\linebreak denoting the epsilon ball around $x\in\R$, the following holds $\forall z\in\regprodCl{\supp(\mu_X)}{\supp(\mu_Y)}$, $\varepsilon>0:$ 
$$\exists\, a\in\supp(\mu_1),\, b\in\supp(\mu_2),\, \varepsilon_1,\varepsilon_2>0:  \regprod{B_{\varepsilon_1}(a)}{B_{\varepsilon_2}(b)}\subseteq B_\varepsilon(z)$$
and
\begin{align*}
	\mu_Z\big(B_\varepsilon(z)\big)&=\int_{\R\cap\supp(\mu_Y)}\int_{\R\cap\supp(\mu_X)}\ind{xy\in B_\varepsilon(z)}d\mu_X(x)d\mu_Y(y)\\&
	\geq\int_{\supp(\mu_Y)}\ind{y\in B_{\varepsilon_2}(b)}\int_{\supp(\mu_X)}\ind{x\in B_{\varepsilon_1}(a)}d\mu_X(x)d\mu_Y(y) >0\, .
\end{align*}
Therefore, $\supp(\mu_Z)=\regprodCl{\supp(\mu_X)}{\supp(\mu_Y)}$, whereby $\mu_Z$ fulfills requirement (I2) w.r.t. $\mu_X$ and $\mu_Y$.\\
	\end{proof}

\MetIntCor*
\begin{proof}
	\begin{enumerate}
	\item This statement follows immediately from the properties of multiplication and the closure operator. 
\item This statement follows immediately from \cref{distlem} and the fact that, for two independently distributed, real-valued random variables $X$ and $Y$, the measure associated with the distribution of the random variable $Z:=XY$ equals the measure associated with the distribution of the random variable $Z:=YX$.
\end{enumerate}\,\\
\end{proof}

\newpage
\section{Suggestions regarding Standardization\label{StandAlg}}
When comparing models that address the same research question, but using a different data sample, it is quite possible that the target variable and/or the regressor of interest represent the same underlying concept, but are given by different observed variables. The following gives suggestions of how to standardize the proposed quantities in order to still be comparable in such situations.

\paragraph{Cases of different observed target variables representing the same latent variable} In such cases, a current standard procedure is to fit each model as a standardized regression, however, this approach is not permissible in many settings and also leads to a considerable loss in interpretability. Instead, we propose, if only the target variables of the models are different observations of the same latent variable, to simply \textit{report} the proposed quantities in a standardized way. For example, instead of directly reporting the numeric value of $\widehat{\measure}$ for each model, first dividing by the standard deviation of the respective target variable, giving the \emph{generalized marginal effect in terms of the standard deviation of $Y$.}  Naturally, the visualization of $\efun$ and $\bar{\eff}$ may be standardized in this situation by simply relabeling the $y$-axis in terms of standard deviation and centering at the respective means.

\paragraph{Cases where the regressors of interest are metric and different observed variables representing the same latent variable} Correspondingly, and, if applicable, additionally, one may simply adjust the $x$-axis in these cases by limiting each $x$-axis between the same two quantiles for each regressor of interest.\\ If one wants to maximize the comparability of such visualizations, this is achieved by additionally normalizing the interval between the two chosen quantiles. Specifically, this is achieved in the following two steps.\\
     \begin{itemize}
        \item[\textbf{Step 1:}] Choose two quantiles and, for each regressor of interest, identify the intervals  between the quantiles of its observations $\mathcal{D}^{[I]}_X$.
        \item[\textbf{Step 2:}]Scale those intervals to the interval $[0,1]$ using the function\begin{align}
        z(x,Q):\R\times\mathcal{B}(\R)\longrightarrow\R,\quad (x,Q)\mapsto\ind{x\in Q}\cdot\frac{x-\min\{Q\}}{\max\{Q\}-\min\{Q\}}\,.   
        \end{align}
        
    \end{itemize}

\noindent The above function $z$ may further be used for computing standardized versions of both the individualized expectation $e$ and generalized marginal effect $\measure$ by choosing $\RI_I$ and $\RI$, respectively, as the interval $[0,1]$ and replacing $\efun$ and $\bar{\eff}$ with $\efun\circ z^{-1}(\boldsymbol{\cdot},Q)$ and $\bar{\eff}\circ z^{-1}(\boldsymbol{\cdot},Q)$, respectively; where, for each regressor of interest, $Q$ is chosen either again as the interval between the two quantiles or the subsequence of observations that fall between the same two quantiles, if one chooses to utilize the measure associated with the (joint) empirical distribution.\enlargethispage{11pt}
    
 \paragraph{Interpretation of the standardized $e$ and $\measure$}   Having standardized the individualized expectation or generalized marginal effect as suggested above, for some choice of \mbox{$q_1,q_2\in[0,1]$} and $q_1<q_2$, one may then report the following:\begin{itemize}
 \item[\textbf{For $e$:}] \begin{center}"\textit{The average expectation of $Y$ between the $q_1$ quantile and  $q_2$ quantile is $\widehat{e}$ \hfill\\(standard deviations below/above the mean of $Y$)}." \end{center}
\item[\textbf{For $\measure$:}] \begin{center}"\textit{The average slope of expectation of $Y$ between the $q_1$ quantile and  $q_2$ quantile is $\widehat{\measure}$ (standard deviations of $Y$)}." \end{center}
\end{itemize}
\begin{n}
    Since the choice of sets and measures is still completely flexible, it is very important to specify which exact average one is referring to.
\end{n}

\if1\Preprint{\newpage\section{R-Code \label{CodeApp}}
This appendix gives, first, all helper functions used for the analysis of \cref{Datasection} in \cref{helper}; and, subsequently, the R-code for the analysis and plots of sections \ref{RCTsec} and \ref{Silberzahn} in \cref{Rrct} and \cref{Rsilberzahn}, respectively.
\subsection{Helper Functions\label{helper}}
\lstinputlisting[numbers=none]{Rfiles/functions.R}
\subsection{Analysis from \cref{RCTsec}\label{Rrct}}
\lstinputlisting[numbers=none]{Rfiles/ClinicalTrial.R}
\subsection{Analysis from \cref{Silberzahn}\label{Rsilberzahn}}
\lstinputlisting[numbers=none]{Rfiles/MultiAnalystStudy.R}
}\fi

\end{changemargin}
\end{document}